\newtheorem{lemma}{Lemma}
\newif\ifextended\extendedtrue 
\newcommand{\myAlg}[2][]{
    \ifthenelse{\isempty{#1}}%
    {\begin{figure}}
    {\begin{figure}[#1]}
    \begingroup 
    \csname @twocolumnfalse\endcsname
    \noindent
    \resizebox{\columnwidth}{!}{%
        \begin{minipage}{1.33\columnwidth}
            \begin{algorithm}[H]
                \DontPrintSemicolon
                {#2}
            \end{algorithm}
        \end{minipage}%
    }
    \endgroup
    \end{figure}
}
\newcommand*{\symDefine}[2]{\newcommand{{#1}}{{#2}}}
\symDefine{\symExponentialRate}{a}
\symDefine{\symBase}{b}
\symDefine{\symCountVariate}{C}
\symDefine{\symCountVariateEstimate}{\hat{C}}
\symDefine{\symFourierCoefficient}{c}
\symDefine{\symDiffCountVariate}{D}
\symDefine{\symElement}{d}
\symDefine{\symSetA}{U}
\symDefine{\symSetB}{V}
\symDefine{\symSetC}{W}
\symDefine{\symUpdateCounter}{w}
\symDefine{\symAnyFunc}{f}
\symDefine{\symAnyFuncTwo}{g}
\symDefine{\symDistributionFunc}{F}
\symDefine{\symHashFunc}{h}
\symDefine{\symHarmonic}{H}
\symDefine{\symImaginary}{\mathrm{i}}
\symDefine{\symIndexI}{i}
\symDefine{\symIndexK}{k}
\symDefine{\symIndexJ}{j}
\symDefine{\symJaccard}{J}
\symDefine{\symJaccardEstimate}{\hat{\symJaccard}}
\symDefine{\symRegVal}{k}
\symDefine{\symQuantity}{g}
\symDefine{\symQuantityEstimate}{\hat{\symQuantity}}
\symDefine{\symRegValVariate}{K}
\symDefine{\symRegValVariateLow}{\symRegValVariate_\textnormal{low}}
\symDefine{\symIndexL}{l}
\symDefine{\symFisher}{I}
\symDefine{\symLikelihood}{\mathcal{L}}
\symDefine{\symNumReg}{m}
\symDefine{\symCardinality}{n}
\symDefine{\symInputSize}{n}
\symDefine{\symCardinalityEstimate}{\hat{\symCardinality}}
\symDefine{\symCardinalityCorrectedEstimate}{\symCardinalityEstimate_\textnormal{corr}}
\symDefine{\symBigO}{\mathcal{O}}
\symDefine{\symProbFunc}{p}
\symDefine{\symProbability}{P}
\symDefine{\symMaxRegularValue}{q}
\symDefine{\symRNG}{R}
\symDefine{\symIntPower}{s}
\symDefine{\symHyperMinHashParameter}{r}
\symDefine{\symSetS}{S}
\symDefine{\symStatisticX}{X}
\symDefine{\symPoint}{x}
\symDefine{\symX}{x}
\symDefine{\symY}{y}
\symDefine{\symZ}{z}
\symDefine{\symCardinalityA}{\symCardinality_\symSetA}
\symDefine{\symCardinalityB}{\symCardinality_\symSetB}
\symDefine{\symCardinalityANorm}{u}
\symDefine{\symCardinalityBNorm}{v}
\symDefine{\symCardinalityANormEstimate}{\hat{\symCardinalityANorm}}
\symDefine{\symCardinalityBNormEstimate}{\hat{\symCardinalityBNorm}}
\symDefine{\symCardinalityUnion}{\symCardinality_{\symSetA\cup\symSetB}}
\symDefine{\symCardinalityAEstimate}{\symCardinalityEstimate_\symSetA}
\symDefine{\symCardinalityBEstimate}{\symCardinalityEstimate_\symSetB}
\symDefine{\symCardinalityUnionEstimate}{\symCardinalityEstimate_{\symSetA\cup\symSetB}}
\symDefine{\symSmallProbability}{\varepsilon}
\symDefine{\symGamma}{\gamma}
\symDefine{\symEta}{\eta}
\symDefine{\symPowerSeriesFunc}{\xi}
\symDefine{\symHelperFunc}{\zeta}
\symDefine{\symSmallCorrectionFunc}{\sigma}
\symDefine{\symLargeCorrectionFunc}{\tau}
\symDefine{\symError}{\varepsilon}
\symDefine{\symPoissonRate}{\lambda}
\symDefine{\symDensity}{\rho}
\symDefine{\symBinaryOperation}{\varphi}
\DeclareMathOperator*{\symExponential}{Exp}
\DeclareMathOperator*{\symUniform}{Uniform}
\DeclareMathOperator*{\symPoisson}{Poisson}
\DeclareMathOperator*{\symVariance}{Var}
\DeclareMathOperator*{\symCovariance}{Cov}
\DeclareMathOperator*{\symExpectation}{\mathbb{E}}
\DeclareMathOperator*{\symRMSE}{RMSE}
\begin{document}
\title{SetSketch: Filling the Gap between MinHash and HyperLogLog} 

\author{Otmar Ertl}
\affiliation{%
  \institution{Dynatrace Research}
  \city{Linz}
  \state{Austria}
  \postcode{4040}
}
\email{otmar.ertl@dynatrace.com}

\begin{abstract}
MinHash and HyperLogLog are sketching algorithms that have become indispensable for set summaries in big data applications. While HyperLogLog allows counting different elements with very little space, MinHash is suitable for the fast comparison of sets as it allows estimating the Jaccard similarity and other joint quantities. This work presents a new data structure called SetSketch that is able to continuously fill the gap between both use cases. Its commutative and idempotent insert operation and its mergeable state make it suitable for distributed environments. Fast, robust, and easy-to-implement estimators for cardinality and joint quantities, as well as the ability to use SetSketch for similarity search, enable versatile applications. The presented joint estimator can also be applied to other data structures such as MinHash, HyperLogLog, or HyperMinHash, where it even performs better than the corresponding state-of-the-art estimators in many cases.
\end{abstract}

\maketitle

\pagestyle{plain}

\section{Introduction}
\label{sec:intro}
Data sketches \cite{Cormode2017} that are able to represent sets of arbitrary size using only a small, fixed amount of memory have become important and widely used tools in big data applications. Although individual elements can no longer be accessed after insertion, they are still able to give approximate answers when querying the cardinality or joint quantities which may include the intersection size, union size, size of set differences, inclusion coefficients, or similarity measures like the Jaccard and the cosine similarity.

Numerous such algorithms with different characteristics have been developed and published over the last two decades. 
They can be classified based on following properties \cite{Pettie2020} and use cases:

\noindent\textbf{Idempotency:} Insertions of values that have already been added should not further change the state of the data structure. Even though this seems to be an obvious requirement, it is not satisfied by many sketches for sets \cite{Chen2011, Mitzenmacher2014, Qi2020, Wang2019, Xiao2020}.

\noindent\textbf{Commutativity:}
The order of insert operations should not be relevant for the final state. If the processing order cannot be guaranteed, commutativity is needed to get reproducible results. Many data structures are not commutative \cite{Helmi2012, Cohen2015, Ting2014}.

\noindent\textbf{Mergeability:}
In large-scale applications with distributed data sources it is essential that the data structure supports the union set operation. It allows combining data sketches resulting from partial data streams to get an overall result.  Ideally, the union operation is idempotent, associative, and commutative to get reproducible results. Some data structures trade mergeability for better space efficiency \cite{Cohen2015, Ting2014, Chen2011}.

\noindent\textbf{Space efficiency:}
A small memory footprint is the key purpose of data sketches. They generally allow to trade space for estimation accuracy, since variance is typically inversely proportional to the sketch size. Better space efficiencies can sometimes also be achieved at the expense of recording speed, e.g. by compression \cite{Durand2004,Scheuermann2007,Lang2017}.

\noindent\textbf{Recording speed:}
Fast update times are crucial for many applications. They vary over many orders of magnitude for different data structures. For example, they may be constant like for the \ac{HLL} sketch \cite{Flajolet2007} or proportional to the sketch size like for \ac{MH} \cite{Broder1997}.

\noindent\textbf{Cardinality estimation:}
An important use case is the estimation of the number of elements in a set. Sketches have different efficiencies in encoding cardinality information. Apart from estimation error, robustness, speed, and simplicity of the estimation algorithm are important for practical use. Many algorithms rely on empirical observations \cite{Heule2013} or need to combine different estimators for different cardinality ranges, as is the case for the original estimators of \ac{HLL} \cite{Flajolet2007} and HyperMinHash \cite{Yu2020}.

\noindent\textbf{Joint estimation:}
Knowing the relationship among sets is important for many applications. Estimating the overlap of both sets in addition to their cardinalities eventually allows the computation of joint quantities such as Jaccard similarity, cosine similarity, inclusion coefficients, intersection size, or difference sizes. Data structures store different amounts of joint information. In that regard, for example, \ac{MH} contains more information than \ac{HLL}. Extracting joint quantities is more complex compared to cardinalities, because it involves the state of two sketches and requires the estimation of three unknowns in the general case, which makes finding an efficient, robust, fast, and easy-to-implement estimation algorithm challenging. If a sketch supports cardinality estimation and is mergeable, the joint estimation problem can be solved using the inclusion-exclusion principle $|\symSetA\cap\symSetB| = |\symSetA| + |\symSetB| - |\symSetA\cup\symSetB|$. However, this naive approach is inefficient as it does not use all information available.

\noindent\textbf{Locality sensitivity:}
A further use case is similarity search. If the same components of two different sketches are equal with a probability that is a monotonic function of some similarity measure, it can be used for \ac{LSH} \cite{Indyk1998, Bawa2005, Lv2007,Zhu2016}, an indexing technique that allows querying similar sets in sublinear time. For example, the components of \ac{MH} have a collision probability that is equal to the Jaccard similarity.

The optimal choice of a sketch for sets depends on the application and is basically a trade-off between estimation accuracy, speed, and memory efficiency. Among the most widely used sketches are \ac{MH} \cite{Broder1997} and \ac{HLL} \cite{Flajolet2007}. Both are idempotent, commutative, and mergeable, which is probably one of the reasons for their popularity. The fast recording speed, simplicity, and memory-efficiency have made \ac{HLL} the state-of-the-art algorithm for cardinality estimation. In contrast, \ac{MH} is significantly slower and is less memory-efficient with respect to cardinality estimation, but is locality-sensitive and allows easy and more accurate estimation of joint quantities.

\subsection{Motivation and Contributions}
The motivation to find a data structure that supports more accurate joint estimation and locality sensitivity like \ac{MH}, while having a better space-efficiency and a faster recording speed that are both closer to those of \ac{HLL}, has led us to a new data structure called SetSketch. It fills the gap between \ac{HLL} and \ac{MH} and can be seen as a generalization of both as it is configurable with a continuous parameter between those two special cases. The possibility to fine-tune between space-efficiency and joint estimation accuracy allows better adaptation to given requirements.

We present fast, robust, and simple methods for cardinality and joint estimation that do not rely on any empirically determined constants and that also give consistent errors over the full cardinality range. The cardinality estimator matches that of both \ac{MH} or \ac{HLL} in the corresponding limit cases. Together with the derived corrections for small and large cardinalities, which do not require any empirical calibration, the estimator can also be applied to \ac{HLL} and \ac{GHLL} sketches over the entire cardinality range. Since its derivation is much simpler than that in the original paper based on complex analysis \cite{Flajolet2007}, this work also contributes to a better understanding of the \ac{HLL} algorithm.

The presented joint estimation method can be also specialized and used for other data structures. In particular, we derived a new closed-form estimator for \ac{MH} that dominates the state-of-the-art estimator based on the number of identical components. Furthermore, our approach is able to improve joint estimation from \ac{HLL}, \ac{GHLL}, and HyperMinHash sketches except for very small sets compared to the corresponding state-of-the-art estimators. Given the popularity of all those sketches, specifically of \ac{MH} and \ac{HLL}, these side results could therefore have a major impact in practice, as more accurate estimates can be made from existing and persisted data structures. We also found that a performance optimization actually developed for SetSketch can be applied to \ac{HLL} to speed up recording of large sets.

The presented methods have been empirically verified by extensive simulations. For the sake of reproducibility the corresponding source code including a reference implementation of SetSketch is available at \url{https://github.com/dynatrace-research/set-sketch-paper}. An extended version of this paper with additional appendices containing mathematical proofs and more experimental results is also available \cite{Ertl2021}.

\subsection{MinHash}
\Ac{MH} \cite{Broder1997} maps a set $\symSetS$ to an $\symNumReg$-dimensional vector $(\symRegValVariate_1, \ldots,\symRegValVariate_\symNumReg)$ using
\begin{equation*}
\symRegValVariate_\symIndexI := \min_{\symElement\in\symSetS} \symHashFunc_\symIndexI(\symElement).
\end{equation*}
Here $\symHashFunc_\symIndexI$ are independent hash functions. The probability, that components $\symRegValVariate_{\symSetA\symIndexI}$ and $\symRegValVariate_{\symSetB\symIndexI}$ of two different \ac{MH} sketches for sets $\symSetA$ and $\symSetB$ match, equals the Jaccard similarity $\symJaccard$
\begin{equation*}
\symProbability(\symRegValVariate_{\symSetA\symIndexI} = \symRegValVariate_{\symSetB\symIndexI}) = {\textstyle\frac{|\symSetA \cap \symSetB|}{|\symSetA \cup \symSetB|}} = \symJaccard.
\end{equation*}
This locality sensitivity makes \ac{MH} very suitable for set comparisons and similarity search, because the Jaccard similarity can be directly and quickly estimated from the fraction of equal components with a \ac{RMSE} of $\sqrt{\symJaccard(1-\symJaccard)/\symNumReg}$. \ac{MH} also allows the estimation of cardinalities \cite{Clifford2012, Cohen2015} and other joint quantities \cite{Dasu2002, Cohen2017}.
 
Meanwhile, many improvements and variants have been published that either improve the recording speed or the memory efficiency. One permutation hashing \cite{Li2012} reduces the costs for adding a new element from $\symBigO(\symNumReg)$ to $\symBigO(1)$. However, there is a high probability of uninitialized components for small sets leading to large estimation errors. This can be remedied by applying a finalization step called densification \cite{Shrivastava2014, Shrivastava2017, Mai2019} which may be expensive for small sets \cite{Ertl2020} and also prevents further aggregations. Alternatively, fast similarity sketching \cite{Dahlgaard2017}, SuperMinHash \cite{Ertl2017b}, or weighted minwise hashing algorithms like BagMinHash \cite{Ertl2018} and ProbMinHash \cite{Ertl2020} specialized to unweighted sets  can be used instead. Compared to one permutation hashing with densification, they are mergeable, allow further element insertions, and even give more accurate Jaccard similarity estimates for small set sizes.

To shrink the memory footprint, b-bit minwise hashing \cite{Li2010} can be used to reduce all \ac{MH} components from typically 32 or 64 bits to only a few bits in a finalization step. Although this loss of information must be compensated by increasing the number of components $\symNumReg$, the memory efficiency can be significantly improved if precise estimates are needed only for high similarities. However, the need of more components increases the computation time and the sketch cannot be further aggregated or merged after finalization.

Besides the original application of finding similar websites \cite{Henzinger2006}, \ac{MH} is nowadays widely used for nearest neighbor search \cite{Indyk1998}, association-rule mining \cite{Cohen2001}, machine learning \cite{Li2011}, metagenomics \cite{Ondov2016, Berlin2015,Marcais2019, Elworth2020}, molecular fingerprinting \cite{Probst2018}, graph embeddings \cite{Beres2019}, or malware detection \cite{Raff2017, Nissim2019}.

\subsection{HyperLogLog}
\label{sec:intro_hyperloglog}
The \acf{HLL} data structure consists of $\symNumReg$ integer-valued registers $\symRegValVariate_1,\symRegValVariate_2,\ldots,\symRegValVariate_\symNumReg$ similar to \ac{MH}. While \ac{MH} typically uses at least 32 bits per component, \ac{HLL} only needs 5-bit or 6-bit registers to count up to billions of distinct elements \cite{Flajolet2007, Heule2013}.
The state for a given set $\symSetS$ is defined by
\begin{equation}
\label{equ:hll_update}
\textstyle\symRegValVariate_\symIndexI := \max_{\symElement\in\symSetS} \lfloor 1 - \log_\symBase \symHashFunc_\symIndexI(\symElement)\rfloor\quad\textnormal{with}\ \symHashFunc_\symIndexI(\symElement) \sim \symUniform(0,1)
\end{equation}
where $\symHashFunc_\symIndexI$ are $\symNumReg$ independent hash functions. $\symRegValVariate_\symIndexI=0$ is used for initialization and the representation of empty sets. The original \ac{HLL} uses the base $\symBase=2$, which makes the logarithm evaluation very cheap. We refer to the \acf{GHLL} when using any $\symBase>1$ \cite{Clifford2012, Pettie2020}. Definition \eqref{equ:hll_update} leads to a recording time of $\symBigO(\symNumReg)$. Therefore, to have a constant time complexity, \ac{HLL} implementations usually use stochastic averaging \cite{Flajolet2007} which distributes the incoming data stream over all $\symNumReg$ registers 
\begin{equation*}
\textstyle\symRegValVariate_\symIndexI := \max_{\symElement\in\symSetS:\symHashFunc_1(\symElement)=\symIndexI} \lfloor 1 - \log_\symBase \symHashFunc_2(\symElement)\rfloor,
\end{equation*}
where $\symHashFunc_1$ and $\symHashFunc_2$ are independent uniform hash functions mapping to $\lbrace 1, 2,\ldots,\symNumReg\rbrace$ and the interval $(0, 1)$, respectively. Instead of using two hash functions, implementations typically calculate a single hash value that is split into two parts. 
The original cardinality estimator \cite{Flajolet2007} was inaccurate for small and large cardinalities. 
Therefore, a series of improvements have been proposed \cite{Heule2013, YunxiangZhao2016, Qin2016}, which finally ended in an estimator that does not require empirical calibration \cite{Ertl2017, Ertl2017a} and that is already successfully used by the Redis in-memory data store.

\ac{HLL} is not very suitable for joint estimation or similarity search. The reason is that there is no simple estimator for the Jaccard similarity and no closed-form expression for the collision probability like for \ac{MH}. The inclusion-exclusion principle gives significantly worse estimates for joint quantities than \ac{MH} using the same memory footprint \cite{Dasgupta2016}. A \ac{ML} based method \cite{Ertl2017, Ertl2017a} for joint estimation was proposed, that performs significantly better, but requires solving a three-dimensional optimization problem, which does not easily translate into a fast and robust algorithm. 
A computationally less expensive method was proposed in \cite{Nazi2018} for estimating inclusion coefficients. However, it relies on precomputed lookup tables and does not properly account for stochastic averaging, which can lead to large errors for small sets.

These joint estimation approaches have proven that \ac{HLL} also encodes some joint information. This might be one reason why \ac{HLL}'s asymptotic space efficiency, when considering cardinality estimation only, is not optimal \cite{Pettie2020,Pettie2020a}. 
Lossless compression \cite{Durand2004, Scheuermann2007, Lang2017} is able to improve the memory efficiency at the expense of recording speed and is therefore rarely used in practice. Approaches using lossy compression \cite{Xiao2020} are not recommended as their insert operations are not idempotent and commutative.

Nevertheless, due to its simplicity, \ac{HLL} has become the state-of-the-art cardinality estimation algorithm, especially if the data is distributed. This is evidenced by the many databases like Redis, Oracle, Snowflake, Microsoft SQL Server, Google BigQuery, Vertica, Elasticsearch, Aerospike, or Amazon Redshift that use \ac{HLL} under the hood to realize approximate distinct count queries. \ac{HLL} is also used for many applications like metagenomics \cite{Baker2019, Marcais2019, Elworth2020}, graph analysis \cite{Boldi2011, Priest2018, Priest2020}, query optimization \cite{Freitag2019}, or fraud detection \cite{Chabchoub2014}.

\subsection{HyperMinHash}
\label{sec:hyperminhash_intro}
HyperMinHash \cite{Yu2020} is the first approach to combine some properties of \ac{HLL} and \ac{MH} in one data structure. It can be seen as a generalization of \ac{HLL} and one permutation hashing. HyperMinHash supports cardinality and joint estimation by using larger registers and hence more space than \ac{HLL}. Unfortunately, the theoretically derived Jaccard similarity estimator is relatively complex and too expensive for practical use. Therefore, an approximation based on empirical observations was proposed.
However, the original estimation approach is not optimal, as our results presented later have shown that even the naive inclusion-exclusion principle works better in some scenarios.

HyperMinHash with parameter $\symHyperMinHashParameter$ is very similar to \ac{GHLL} with base $\symBase=2^{2^{-\symHyperMinHashParameter}}$. The reason is that HyperMinHash approximates the probability distribution of \ac{GHLL} with probabilities that are just powers of $\frac{1}{2}$ as shown in \cref{fig:probability_densities}. This has the advantage that hash values can be mapped to corresponding update values using only a few CPU instructions. Like those of \ac{HLL} and \ac{GHLL}, the register values of HyperMinHash are not locality-sensitive.

\begin{figure}
  \centering
  \includegraphics[width=\linewidth]{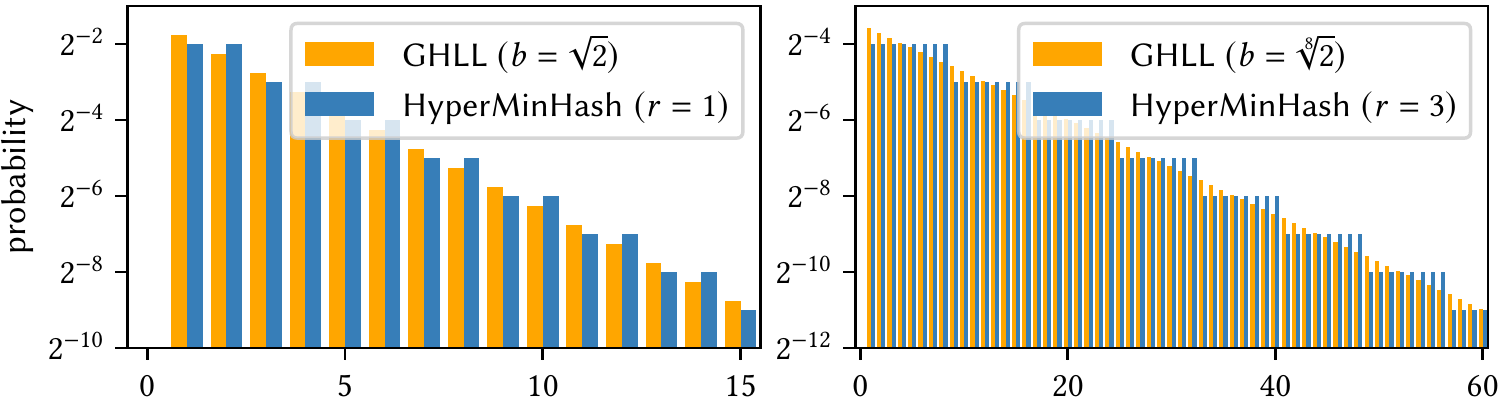}
  \caption{Probabilities of register update values for \acs*{GHLL} and HyperMinHash with equivalent configurations.} \label{fig:probability_densities}
\end{figure}

\subsection{Further Related Work}
A simple, though obviously not very memory-efficient, way to combine the properties of \ac{MH} and \ac{HLL} is to use them both in parallel \cite{Pascoe2013, CastroFernandez2019}. 
Probably the best alternative to \ac{MH} and \ac{HLL} which also works for distributed data and which even supports binary set operations is the ThetaSketch \cite{Dasgupta2016}. The downsides are a significantly worse memory efficiency compared to \ac{HLL} in terms of cardinality estimation and that it is not locality-sensitive like \ac{MH}.

\section{Methodology}
The basics of SetSketch follow directly from the properties introduced in \cref{sec:intro}. Similar to \ac{MH} and \ac{HLL} we consider a mapping from a set $\symSetS$ to $\symNumReg$ integer-valued registers $\symRegValVariate_1,\symRegValVariate_2,\ldots,\symRegValVariate_\symNumReg$. Furthermore, we assume that register values are obtained through some hashing procedure and that they are identically distributed. Since we want to estimate the cardinality from the register values, the distribution of 
$\symRegValVariate_\symIndexI$ should only depend on the cardinality $\symCardinality = |\symSetS|$
\begin{equation*}
\symProbability(\symRegValVariate_\symIndexI\leq \symRegVal) = \symDistributionFunc(\symRegVal;\symCardinality).
\end{equation*}
Mergeability requires that the register value $\symRegValVariate_{\symSetA\cup\symSetB,\symIndexI}$ of the union of two sets $\symSetA$ and $\symSetB$ can be computed directly from the corresponding individual register values $\symRegValVariate_{\symSetA\symIndexI}$ and $\symRegValVariate_{\symSetB\symIndexI}$ using some binary operation $\symRegValVariate_{\symSetA\cup\symSetB,\symIndexI}=\symBinaryOperation(\symRegValVariate_{\symSetA\symIndexI}, \symRegValVariate_{\symSetB\symIndexI})$.
To enable cardinality estimation, register values need to have some monotonic dependence on the cardinality. Without loss of generality, we consider non-decreasing monotonicity. Therefore, and because $|\symSetA\cup\symSetB|\geq|\symSetA|$ and $|\symSetB|\geq |\symSetC| \Rightarrow |\symSetA\cup\symSetB| \geq |\symSetA\cup\symSetC|$, we want $\symBinaryOperation$ to satisfy $\symBinaryOperation(\symX,\symY)\geq \symX$ and $\symY\geq\symZ \Rightarrow \symBinaryOperation(\symX,\symY) \geq \symBinaryOperation(\symX,\symZ)$, respectively. The only binary operation with these properties and that is also idempotent and commutative, is the maximum function \ifextended (see \cref{lem:max_binary_op})\else\cite{Ertl2021}\fi. Therefore, we require
$\symRegValVariate_{\symSetA\cup\symSetB,\symIndexI} = \max(\symRegValVariate_{\symSetA\symIndexI}, \symRegValVariate_{\symSetB\symIndexI})$.

For two disjoint sets $\symSetA$ and $\symSetB$ with cardinalities $|\symSetA|=\symCardinalityA$ and  $|\symSetB|=\symCardinalityB$ we have $|\symSetA\cup\symSetB|=\symCardinalityA+\symCardinalityB$. In this case $\symRegValVariate_{\symSetA\symIndexI}$ and $\symRegValVariate_{\symSetB\symIndexI}$ are independent and therefore 
$\symProbability(\symRegValVariate_{\symSetA\cup\symSetB,\symIndexI}\leq \symRegVal) =\symProbability(\max(\symRegValVariate_{\symSetA\symIndexI}, \symRegValVariate_{\symSetB\symIndexI})\leq \symRegVal) =\symProbability(\symRegValVariate_{\symSetA\symIndexI}\leq \symRegVal)\cdot \symProbability(\symRegValVariate_{\symSetB\symIndexI}\leq \symRegVal)$ which results in a functional equation 
\begin{equation}
\label{equ:dist_join}
\symDistributionFunc(\symRegVal; \symCardinalityA + \symCardinalityB) 
=
\symDistributionFunc(\symRegVal; \symCardinalityA) 
\cdot
\symDistributionFunc(\symRegVal; \symCardinalityB).
\end{equation}
To allow estimation with constant relative error over a wide range of cardinalities, the location of the distribution should increase logarithmically with the cardinality $\symCardinality$, while its shape should remain essentially the same. For a discrete distribution this can be enforced by the functional equation
\begin{equation}
\label{equ:dist_shift}
\symDistributionFunc(\symRegVal;\symCardinality) = \symDistributionFunc(\symRegVal + 1;\symCardinality\symBase).
\end{equation}
Multiplying the cardinality with the constant $\symBase>1$, which corresponds to an increment on the logarithmic scale, shifts the distribution to the right by 1.

The system of functional equations composed of \eqref{equ:dist_join} and \eqref{equ:dist_shift} has the solution
\begin{equation}
\label{equ:set_sketch_distribution}
\symProbability(\symRegValVariate_\symIndexI \leq \symRegVal) =  \symDistributionFunc(\symRegVal;\symCardinality) = e^{-\symCardinality\symExponentialRate \symBase^{-\symRegVal}}
\end{equation}
with some constant $\symExponentialRate>0$ \ifextended(see \cref{lem:func_equation})\else\cite{Ertl2021}\fi. For a set with a single element, in particular, this is
\begin{equation}
\label{equ:reg_val_distribution}
\symProbability(\symRegValVariate_\symIndexI \leq \symRegVal\mid \symCardinality = 1) =  \symDistributionFunc(\symRegVal;1) = e^{-\symExponentialRate\symBase^{-\symRegVal}}.
\end{equation}
Therefore, $\symRegValVariate_\symIndexI$ needs to be distributed as
$\symRegValVariate_\symIndexI
\sim
\lfloor 1 - \log_\symBase\symStatisticX \rfloor$ where $\symStatisticX\sim\symExponential(\symExponentialRate)$ is exponentially distributed with rate $\symExponentialRate$.
This directly leads to the definition of our new SetSketch data structure, which sets the state for a given set $\symSetS$ as
\begin{equation}
\label{equ:setsketch_update}
\textstyle\symRegValVariate_\symIndexI := \max_{\symElement\in\symSetS} \lfloor 1 - \log_\symBase \symHashFunc_\symIndexI(\symElement)\rfloor\quad\textnormal{with}\ \symHashFunc_\symIndexI(\symElement) \sim \symExponential(\symExponentialRate),
\end{equation}
where $\symHashFunc_\symIndexI(\symElement)$ are hash functions with exponentially distributed output. This definition is very similar to that of \ac{HLL} without stochastic averaging \eqref{equ:hll_update} where the hash values are distributed uniformly instead of exponentially.

\subsection{Ordered Register Value Updates}

\myAlg{
\caption{SetSketch}
\label{alg:set_sketch}
\KwData{$\symSetS$}
\KwResult{$\symRegValVariate_1,\symRegValVariate_2,\ldots,\symRegValVariate_\symNumReg$}
$(\symRegValVariate_1,\symRegValVariate_2,\ldots,\symRegValVariate_\symNumReg)\gets(0,0,\ldots,0)$\;
$\symRegValVariateLow \gets 0$\;
$\symUpdateCounter \gets 0$\;
\ForAll{$\symElement\in\symSetS$}{
  initialize pseudorandom number generator with seed $\symElement$\;
  \For{$\symIndexJ\gets 1$ \KwTo $\symNumReg$}{
    $\symPoint_\symIndexJ \gets$ generate $\symIndexJ$-th smallest out of $\symNumReg$ exponentially distributed random values with rate $\symExponentialRate$ using \eqref{equ:uncorrelated} for SetSketch1 or \eqref{equ:correlated} for SetSketch2\;
    \lIf{$\symPoint_\symIndexJ > \symBase^{-\symRegValVariateLow}$}{\Break}
    $\symRegVal\gets \max(0, \min(\symMaxRegularValue + 1, \lfloor 1 - \log_\symBase \symPoint_\symIndexJ\rfloor))$\;
    \lIf{$\symRegVal \leq \symRegValVariateLow$}{\Break}
    $\symIndexI \gets$ sample from $\{1,2,\ldots,\symNumReg\}$ without replacement\;
    \If{$\symRegVal > \symRegValVariate_\symIndexI$}{
      $\symRegValVariate_\symIndexI\gets \symRegVal$\;
      $\symUpdateCounter\gets\symUpdateCounter+1$\;
      \If{$\symUpdateCounter \geq \symNumReg$}{
        $\symRegValVariateLow \gets \min(\symRegValVariate_1,\symRegValVariate_2,\ldots,\symRegValVariate_\symNumReg)$\;
        $\symUpdateCounter \gets 0$\;
      }
    }
  }
}
}

Updating SetSketch registers based on \eqref{equ:setsketch_update} is not very efficient, because processing a single element requires $\symNumReg$ hash function evaluations and $\symNumReg$ is typically in the hundreds or thousands. Therefore, \cref{alg:set_sketch} uses an alternative method based on ideas from our previous work \cite{Ertl2017b, Ertl2018, Ertl2020} to reduce the average time complexity for adding an element from $\symBigO(\symNumReg)$ to $\symBigO(1)$ for sets significantly larger than $\symNumReg$. 

Since register values are increasing with cardinality, only the smallest hash values $\symHashFunc_\symIndexI(\symElement)$ will be relevant for the final state according to \eqref{equ:setsketch_update}.
In particular, if $\symRegValVariateLow\leq\min(\symRegValVariate_1,\ldots,\symRegValVariate_\symNumReg)$ denotes some lower bound of the current state, only hash values $\symHashFunc_\symIndexI(\symElement)\leq \symBase^{-\symRegValVariateLow}$ will be able to alter any register.
As more elements are added to the SetSketch, the register values increase and so does their common minimum, allowing $\symRegValVariateLow$ to be set to higher and higher values. For large sets, $\symBase^{-\symRegValVariateLow}$ will eventually become so small that almost all hash values of further elements will be greater. Therefore, computing the hash values in ascending order would allow to stop the processing of an element after a hash value is greater than $\symBase^{-\symRegValVariateLow}$, and thus to achieve an asymptotic time complexity of $\symBigO(1)$.

For that, we consider all hash values $\symHashFunc_\symIndexI(\symElement)$ as random values generated by a pseudorandom number generator that was seeded with element $\symElement$. This perspective allows us to use any other random process that assigns exponentially distributed random values to each $\symHashFunc_\symIndexI(\symElement)$. 
In particular, we can use an appropriate ascending random sequence $0<\symPoint_1<\symPoint_2<\ldots<\symPoint_\symNumReg$ whose values are randomly shuffled and assigned to $\symHashFunc_1(\symElement),\symHashFunc_2(\symElement),\ldots, \symHashFunc_\symNumReg(\symElement)$. Shuffling corresponds to sampling without replacement and can be efficiently realized as described in \cite{Ertl2018,Ertl2020} based on the Fisher-Yates algorithm \cite{Fisher1938}.
The random values $\symPoint_\symIndexJ$ must be chosen such that $\symHashFunc_\symIndexI(\symElement)$ are eventually exponentially distributed as required by \eqref{equ:setsketch_update}. 

One possibility to achieve that is to use exponentially distributed spacings \cite{Devroye1986}
\begin{equation}
\label{equ:uncorrelated}
{\textstyle
\symPoint_\symIndexJ \sim \symPoint_{\symIndexJ-1} + \frac{1}{\symNumReg + 1 - \symIndexJ}\symExponential(\symExponentialRate)
\quad
\textnormal{with}
\
\symPoint_0 = 0}.
\end{equation}
In this way the final hash values $\symHashFunc_\symIndexI(\symElement)$ will be statistically independent and the state of SetSketch will look like as if it was generated using $\symNumReg$ independent hash functions.

Alternatively, the domain of the exponential distribution with rate $\symExponentialRate$ can be divided into $\symNumReg$ intervals $[\symGamma_{\symIndexJ-1},\symGamma_{\symIndexJ})$ with $\symGamma_0=0$ and $\symGamma_\symNumReg=\infty$. Setting $\symGamma_\symIndexJ:=\frac{1}{\symExponentialRate}\log(1 + \symIndexJ/(\symNumReg-\symIndexJ))$ ensures that 
an exponentially distributed random value $\symStatisticX\sim\symExponential(\symExponentialRate)$ has equal probability to fall into any of these $\symNumReg$ intervals $\symProbability(\symStatisticX\in[\symGamma_{\symIndexJ-1},\symGamma_{\symIndexJ})) = \frac{1}{\symNumReg}$ \ifextended(see \cref{lem:setsketch2})\else\cite{Ertl2021}\fi.
Hence, if the points are sampled in ascending order according to 
\begin{equation}
\label{equ:correlated}
\symPoint_\symIndexJ \sim
\symExponential(\symExponentialRate; \symGamma_{\symIndexJ-1},\symGamma_{\symIndexJ})
,
\end{equation}
where $\symExponential(\symExponentialRate; \symGamma_{\symIndexJ-1},\symGamma_{\symIndexJ})$ denotes the truncated exponential distribution with rate $\symExponentialRate$ and domain $[\symGamma_{\symIndexJ-1},\symGamma_{\symIndexJ})$, the shuffled assignment will lead to hash values $\symHashFunc_\symIndexI(\symElement)$ that are exponentially distributed with rate $\symExponentialRate$. However, in contrast to the first approach, the hash values $\symHashFunc_\symIndexI(\symElement)$ will be statistically dependent, because there is always exactly one point sampled from each interval. This correlation will be less significant for large sets $\symCardinality \gg \symNumReg$, where it is unlikely that one element is responsible for the values of different registers at the same time.

Dependent on which method is used in conjunction with \eqref{equ:setsketch_update} to calculate the register values, we refer to SetSketch1 for the uncorrelated approach using exponential spacings and to SetSketch2 for the correlated approach using sampling from disjoint intervals.

\subsection{Lower Bound Tracking}
\label{sec:lower_bound_tracking}

Different strategies can be used to keep track of a lower bound $\symRegValVariateLow$ as needed for an asymptotic constant-time insert operation. Since maintenance of the minimum of all register values with a small worst case complexity would require additional space by using either a histogram \cite{Ertl2017} or binary trees \cite{Ertl2018, Ertl2020}, we decided to simply update the lower bound regularly by scanning the whole register array as suggested in \cite{Reviriego2020} which takes $\symBigO(\symNumReg)$ time. However, since this approach is inefficient for small bases $\symBase$, \cref{alg:set_sketch} counts the number of register modifications $\symUpdateCounter$ instead of the number of register values greater than the minimum. After every $\symNumReg$ register updates when $\symUpdateCounter\geq \symNumReg$, $\symRegValVariateLow$ is updated and set to the current minimum of all register values, and $\symUpdateCounter$ is reset. By definition, this contributes at most amortized constant time to every register increment and hence does not change the expected time complexity.

\subsection{Parameter Configuration}
\label{sec:parameter_config}
SetSketch has 4 parameters, $\symNumReg$, $\symBase$, $\symExponentialRate$, and $\symMaxRegularValue$, that need to be set appropriately. If we have $\symNumReg$ registers, which are able to represent all values from $\lbrace 0,1,2,\ldots, \symMaxRegularValue,\symMaxRegularValue+1\rbrace$ with some nonnegative integer $\symMaxRegularValue$, the memory footprint will be $\symNumReg\lceil\log_2(\symMaxRegularValue+2)\rceil$ bits without special encoding. 
$\symNumReg$ determines the accuracy of cardinality estimates, because, as shown later, the \ac{RMSE} is in the range $[1/\sqrt{\symNumReg}, 1.04/\sqrt{\symNumReg}]$ for $\symBase\leq 2$. While the choice of the base $\symBase$ has only little influence on cardinality estimation, joint estimation and locality sensitivity are significantly improved as $\symBase\rightarrow 1$. 
The cardinality range, for which accurate estimation is possible, is controlled by the parameters $\symExponentialRate$ and $\symMaxRegularValue$.
Since the support of distribution \eqref{equ:set_sketch_distribution} is unbounded, $\symExponentialRate$ and $\symMaxRegularValue$ need to be chosen such that register values smaller than 0 or greater than $\symMaxRegularValue+1$ are very unlikely and can be ignored for the expected cardinality range. The lower bound of this range is typically 1 as we want SetSketches to be able to represent any set with at least one element.

If we choose $\symExponentialRate \geq \log(\symNumReg/\symSmallProbability)/\symBase$ for some $\symSmallProbability\ll 1$, it is guaranteed that negative register values occur only with a maximum probability of $\symSmallProbability$ \ifextended(see \cref{lem:singleton})\else\cite{Ertl2021}\fi. Therefore, the error introduced by using zero as initial value as done in \cref{alg:set_sketch} is negligible for small enough values of $\symSmallProbability$. In practice, setting $\symExponentialRate=20$ is a good choice in most cases. Even in the extreme case with $\symBase\rightarrow 1$ and $\symNumReg=2^{20}$, the probability is still less than \SI{0.22}{\percent} to encounter at least one negative value.
Similarly, setting $\symMaxRegularValue \geq \lfloor \log_\symBase \frac{\symNumReg\symCardinality_\textnormal{max}\symExponentialRate}{\symSmallProbability}\rfloor$ guarantees that register values greater than $\symMaxRegularValue+1$ occur only with a maximum probability of $\symSmallProbability$ for all cardinalities up to $\symCardinality_\textnormal{max}$ \ifextended(see \cref{lem:max})\else\cite{Ertl2021}\fi.
A small $\symBase$ implies a larger value of $\symMaxRegularValue$ and therefore also a larger memory footprint, if the cardinality range is fixed.

To give a concrete example, consider a SetSketch with parameters $\symNumReg=4096$, $\symBase=1.001$, $\symExponentialRate = 20$, and $\symMaxRegularValue = 2^{16}-2 = 65534$. The last parameter ensures that two bytes are sufficient to represent a single register and the whole data structure takes  $\SI{8}{\kilo\byte}$. 
The probability that there is at least one register with negative value is \num{8.28e-6} for a set with just a single element. Furthermore, the probability that any register value is greater than $\symMaxRegularValue+1$ is \num{2.93e-6} for $\symCardinality=\num{e18}$. Therefore, a SetSketch using this configuration is suitable to represent any set with up to \num{e18} distinct elements. The expected error of cardinality estimates is approximately $1/\sqrt{\symNumReg}\approx\SI{1.56}{\percent}$.

\section{Estimation}
In this section we present methods for cardinality and joint estimation. We will assume that register values are statistically independent which is only true for SetSketch1 and only a good approximation for SetSketch2 in the case of large sets with $\symCardinality\gg\symNumReg$. However, our experiments presented later have shown that the estimators derived under this assumption also perform well for SetSketch2 over the entire cardinality range. The correlation even has a positive effect and reduces the estimation errors for small sets.

We introduce some special functions and corresponding approximations that are used for the derivation of the estimators. The functions $\symPowerSeriesFunc_\symBase^1(\symX)$ and $\symPowerSeriesFunc_\symBase^2(\symX)$ defined by
\begin{equation}
\label{equ:approx_xi}
  \symPowerSeriesFunc_\symBase^\symIntPower(\symX):= 
{\textstyle
\frac{\log \symBase}{\Gamma(\symIntPower)} \sum_{\symRegVal = -\infty}^\infty
  \symBase^{\symIntPower (\symX-\symRegVal)}
  e^{-\symBase^{\symX-\symRegVal}}}
\approx 1,
\end{equation}
where $\Gamma$ denotes the gamma function, are both very close to 1 for $\symBase\leq 2$ \ifextended(see \cref{lem:xi1_approx} and \cref{lem:xi2_approx})\else\cite{Ertl2021}\fi. For $\symBase=2$ we have $\max_\symX \mathopen|\symPowerSeriesFunc_\symBase^1(\symX) - 1\mathclose|\leq \num{e-5}$ and $\max_\symX \mathopen|\symPowerSeriesFunc_\symBase^2(\symX) - 1\mathclose|\leq \num{e-4}$, which shows that these approximations introduce only a small error. Smaller values of $\symBase$ lead to even smaller errors as both functions converge quickly towards 1 as $\symBase\rightarrow 1$.
Another approximation that we will use is
\begin{equation}
\label{equ:approx_zeta}
\symHelperFunc_\symBase(\symX_1,\symX_2):={\textstyle\sum_{\symRegVal=-\infty}^\infty e^{-\symBase^{\symX_1-\symRegVal}} - e^{-\symBase^{\symX_2-\symRegVal}}}
\approx
\symX_2-\symX_1
.
\end{equation}
The relative error $|\frac{\symHelperFunc_\symBase(\symX_1,\symX_2) - (\symX_2 - \symX_1)}{\symX_2 - \symX_1}|$ is smaller than \num{e-5} for $\symBase=2$ and approaches zero quickly as $\symBase\rightarrow 1$ \ifextended(see \cref{lem:zeta_approx})\else\cite{Ertl2021}\fi.

\subsection{Cardinality Estimation}
\label{sec:cardinality_estimation}
Instead of using the straightforward way of estimating the cardinality using the \ac{ML} method based on \eqref{equ:set_sketch_distribution}, we derive a closed-form estimator, based on our previous work \cite{Ertl2017, Ertl2017a}, that is simpler to implement, cheaper to evaluate, and gives almost identical results.
The expectation of $(\symExponentialRate \symBase^{-\symRegValVariate_\symIndexI})^\symIntPower$ with $\symIntPower\in\lbrace 1,2 \rbrace$ and $\symRegValVariate_\symIndexI$ distributed according to \eqref{equ:set_sketch_distribution} is given by
\begin{align}
\label{equ:moment}
\symExpectation((\symExponentialRate \symBase^{-\symRegValVariate_\symIndexI})^\symIntPower) &=
  {\textstyle\sum_{\symRegVal = -\infty}^\infty
(\symExponentialRate\symBase^{-\symRegVal})^\symIntPower
  (e^{-\symCardinality\symExponentialRate \symBase^{-\symRegVal}}-e^{-\symCardinality \symExponentialRate \symBase^{-\symRegVal+1}})}\nonumber\\
  &={\textstyle(1-\symBase^{-\symIntPower})\sum_{\symRegVal = -\infty}^\infty
  (\symExponentialRate\symBase^{-\symRegVal})^\symIntPower
  e^{-\symCardinality\symExponentialRate \symBase^{-\symRegVal}}}
  \nonumber\\
  &={\textstyle\frac{1-\symBase^{-\symIntPower}}{\symCardinality^\symIntPower}\sum_{\symRegVal = -\infty}^\infty
  \symBase^{{\symIntPower\left(\log_\symBase(\symCardinality\symExponentialRate)-\symRegVal\right) } }
  e^{-\symBase^{\left(\log_\symBase(\symCardinality\symExponentialRate)-\symRegVal\right)}}}
  \nonumber\\
  &=
  {\textstyle\frac{(1-\symBase^{-\symIntPower})\,\Gamma(\symIntPower)}{\symCardinality^\symIntPower \log \symBase}
  \symPowerSeriesFunc_\symBase^\symIntPower(\log_\symBase(\symCardinality\symExponentialRate))
\approx
  \frac{(1-\symBase^{-\symIntPower})\,\Gamma(\symIntPower)}{\symCardinality^\symIntPower \log \symBase}}.
\end{align}
Here the asymptotic identity \eqref{equ:approx_xi} was used as approximation in the final step. 
We now consider the statistic $\symStatisticX_\symNumReg = \frac{\log \symBase}{1-1/\symBase}\frac{1}{\symNumReg}\sum_{\symIndexI=1}^\symNumReg \symExponentialRate \symBase^{-\symRegValVariate_\symIndexI}$. Using \eqref{equ:moment} for the cases $\symIntPower=1$ and $\symIntPower=2$ we get for its expectation and its variance \ifextended(see \cref{lem:card_variance})\else\cite{Ertl2021}\fi
\begin{equation*}
{\textstyle
\symExpectation(\symStatisticX_\symNumReg) \approx \frac{1}{\symCardinality}
\quad
\textnormal{and}
\quad
\symVariance(\symStatisticX_\symNumReg) \approx \frac{1}{\symNumReg\symCardinality^2}\left(\frac{\symBase+1}{\symBase-1}\log(\symBase) - 1\right)}.
\end{equation*}
The delta method \cite{Casella2002} gives for $\symNumReg\rightarrow\infty$ 
\begin{equation*}
{\textstyle
\symExpectation(\symStatisticX_\symNumReg^{-1}) \approx \symCardinality
\quad
\textnormal{and}
\quad
\symVariance(\symStatisticX_\symNumReg^{-1}) \approx \frac{\symCardinality^2}{\symNumReg}\left(\frac{\symBase+1}{\symBase-1}\log(\symBase) - 1\right)}
\end{equation*}
which suggests to use $\symStatisticX_\symNumReg^{-1}$ as estimator for the cardinality $\symCardinality$
\begin{equation}
\label{equ:raw_cardinality_estimator}
\textstyle\symCardinalityEstimate
=
\frac{\symNumReg(1-1/\symBase)}{\symExponentialRate\log(\symBase)\sum_{\symIndexI=1}^\symNumReg  \symBase^{-\symRegValVariate_\symIndexI}}.
\end{equation}
This estimator can be quickly evaluated, if the powers of $\symBase$ are precalculated and stored in a lookup table for all possible exponents which are known to be from $\lbrace 0,1,\ldots,\symMaxRegularValue+1\rbrace$. 

The corresponding \ac{RSD} $\sqrt{\symVariance(\symCardinalityEstimate)}/\symCardinality$ is given by $\sqrt{\frac{1}{\symNumReg}(\frac{\symBase+1}{\symBase-1}\log(\symBase) - 1)}$. It is minimized for $\symBase\rightarrow 1$, where it equals $1/\sqrt{\symNumReg}$, and increases slowly with $\symBase$. For $\symBase=2$ the expected error is still as low as $1.04/\sqrt{m}$. However, as already discussed, smaller values of $\symBase$ require larger values of $\symMaxRegularValue$ and hence more space. Optimal memory efficiency, measured as the product of the variance and the memory footprint, is typically obtained for values of $\symBase$ ranging from 2 to 4. Theoretically, if register values are compressed, a better memory efficiency can be achieved for $\symBase\rightarrow\infty$ \cite{Pettie2020}.

Values significantly greater than $2$ invalidate the approximation \eqref{equ:approx_xi} which was used for the derivation of estimator \eqref{equ:raw_cardinality_estimator}. To reduce the estimation error in this regime, random offsets, which correspond to register-specific values of parameter $\symExponentialRate$, would be necessary \cite{Pettie2020, Lukasiewicz2020}. 
This work, however, focuses on $\symBase\leq 2$. Although small values of $\symBase$ are inefficient for cardinality estimation, they contain, as we will show, more information about how sets relate to each other, which ultimately justifies slightly larger memory footprints.

\subsection{Joint Estimation}
\label{sec:joint_estimation}
The relationship of two sets $\symSetA$ and $\symSetB$ can be characterized by three quantities. Without loss of generality we choose 
the cardinalities $\symCardinalityA = |\symSetA|$, $\symCardinalityB = |\symSetB|$, and the Jaccard similarity $\symJaccard = \frac{|\symSetA\cap\symSetB|}{|\symSetA\cup\symSetB|}$ with the natural constraint $\symJaccard\in[0,\min({\textstyle\frac{\symCardinalityA}{\symCardinalityB},\frac{\symCardinalityB}{\symCardinalityA}})]$ for parameterization.
Other quantities such as
\begin{align*}
&|\symSetA\cup\symSetB| = {\textstyle\frac{\symCardinalityA+\symCardinalityB}{1+\symJaccard}}, &(\textnormal{union size})\\
&|\symSetA\cap\symSetB| = {\textstyle\frac{(\symCardinalityA+\symCardinalityB)\symJaccard}{1+\symJaccard}}, &(\textnormal{intersection size})\\
&|\symSetA\setminus\symSetB| = {\textstyle\frac{\symCardinalityA - \symCardinalityB\symJaccard}{1+\symJaccard}},\ |\symSetB\setminus\symSetA| = {\textstyle\frac{\symCardinalityB - \symCardinalityA\symJaccard}{1+\symJaccard}},  &(\textnormal{difference sizes})\\
&\textstyle{\frac{|\symSetA\cap\symSetB|}{\sqrt{|\symSetA||\symSetB|}}} = \textstyle{\frac{(\symCardinalityA+\symCardinalityB)\symJaccard}{\sqrt{\symCardinalityA\symCardinalityB}(1+\symJaccard)}}, &(\textnormal{cosine similarity})\\
&\textstyle{\frac{|\symSetA\cap\symSetB|}{|\symSetA|}} = {\textstyle\frac{(\symCardinalityA+\symCardinalityB)\symJaccard}{\symCardinalityA(1+\symJaccard)}},\ \textstyle{\frac{|\symSetA\cap\symSetB|}{|\symSetB|}} = {\textstyle\frac{(\symCardinalityA+\symCardinalityB)\symJaccard}{\symCardinalityB(1+\symJaccard)}}, &(\textnormal{inclusion coefficients})\\
\end{align*}
can be expressed in terms of only these three variables. If we have two SetSketches representing two different sets, we would like to find estimates for $\symCardinalityA$, $\symCardinalityB$, and $\symJaccard$, which would allow us to estimate any other quantity of interest.

While we can use the cardinality estimator derived in the previous section to get estimates for $\symCardinalityA$ and $\symCardinalityB$, respectively, the estimation of $\symJaccard$ is more challenging. 
One possibility is to leverage the mergeability of data sketches and use the inclusion-exclusion principle. Assume $\symCardinalityAEstimate$, $\symCardinalityBEstimate$, and $\symCardinalityUnionEstimate$ are cardinality estimates of $|\symSetA|$, $|\symSetB|$, and $|\symSetA\cup\symSetB|$, respectively. Then the inclusion-exclusion principle allows estimating the intersection size as $\symCardinalityAEstimate+\symCardinalityBEstimate-\symCardinalityUnionEstimate$. This can be used to estimate $\symJaccard$ as
\begin{equation}
\label{equ:joint_incl_excl}
{\textstyle
\symJaccardEstimate_\textnormal{in-ex} = 
\frac{\symCardinalityAEstimate + \symCardinalityBEstimate-\symCardinalityUnionEstimate}{\symCardinalityUnionEstimate}}.
\end{equation}
To ensure the natural constraint $\symJaccard\in[0,\min({\textstyle\frac{\symCardinalityA}{\symCardinalityB},\frac{\symCardinalityB}{\symCardinalityA}})]$ and the non-negativity of estimated intersection and difference sizes, it might be necessary to trim $\symJaccardEstimate_\textnormal{in-ex}$ to the range $[0, \min(\symCardinalityAEstimate/\symCardinalityBEstimate,\symCardinalityBEstimate/\symCardinalityAEstimate)]$.

In the following a new joint estimation approach is derived that dominates the inclusion-exclusion principle as our experiments have shown. Using \eqref{equ:set_sketch_distribution}, the joint cumulative distribution function of registers $\symRegValVariate_{\symSetA\symIndexI}$ and $\symRegValVariate_{\symSetB\symIndexI}$ of two SetSketches representing sets $\symSetA$ and $\symSetB$, respectively, is given by
\begin{align*}
&\symProbability(\symRegValVariate_{\symSetA\symIndexI} \leq \symRegVal_\symSetA\wedge\symRegValVariate_{\symSetB\symIndexI} \leq \symRegVal_\symSetB)
\\
&=
\symProbability(\symRegValVariate_{\symSetA\setminus\symSetB,\symIndexI} \leq \symRegVal_\symSetA)
\,
\symProbability(\symRegValVariate_{\symSetB\setminus\symSetA,\symIndexI} \leq \symRegVal_\symSetB)
\,
\symProbability(\symRegValVariate_{\symSetA\cap\symSetB,\symIndexI} \leq \min(\symRegVal_\symSetA, \symRegVal_\symSetB))
\\
&=
e^{
-
\frac{\symCardinalityA - \symCardinalityB\symJaccard}{1+\symJaccard} \symExponentialRate \symBase^{-\symRegVal_\symSetA}
}
e^{
-
\frac{\symCardinalityB - \symCardinalityA\symJaccard}{1+\symJaccard} \symExponentialRate \symBase^{-\symRegVal_\symSetB}
}
e^{
-
\frac{(\symCardinalityA+\symCardinalityB)\symJaccard}{1+\symJaccard} \symExponentialRate \symBase^{-\min(\symRegVal_\symSetA, \symRegVal_\symSetB)}
}
\\
&=
\begin{cases}
e^{
-\symExponentialRate(\symBase^{-\symRegVal_\symSetA}\frac{\symCardinalityA - \symCardinalityB\symJaccard}{1+\symJaccard} 
+\symBase^{-\symRegVal_\symSetB}\symCardinalityB)} & \symRegVal_\symSetA \geq \symRegVal_\symSetB,\\
e^{
-\symExponentialRate(\symBase^{-\symRegVal_\symSetB}\frac{\symCardinalityB - \symCardinalityA\symJaccard}{1+\symJaccard}  + \symBase^{-\symRegVal_\symSetA}\symCardinalityA )} & \symRegVal_\symSetA \leq \symRegVal_\symSetB.
\end{cases}
\end{align*}
It can be used to calculate the probability that $\symRegValVariate_{\symSetA\symIndexI} > \symRegValVariate_{\symSetB\symIndexI}$
\begin{align*}
&\symProbability(\symRegValVariate_{\symSetA\symIndexI} > \symRegValVariate_{\symSetB\symIndexI})
=
{\textstyle\sum_{\symRegVal=-\infty}^\infty
\symProbability(\symRegValVariate_{\symSetA\symIndexI} = \symRegVal+1 \wedge \symRegValVariate_{\symSetB\symIndexI} \leq \symRegVal)}
\\
&={\textstyle\sum_{\symRegVal=-\infty}^\infty
\symProbability(\symRegValVariate_{\symSetA\symIndexI} \leq \symRegVal+1 \wedge \symRegValVariate_{\symSetB\symIndexI} \leq \symRegVal)
 -
\symProbability(\symRegValVariate_{\symSetA\symIndexI} \leq \symRegVal\wedge \symRegValVariate_{\symSetB\symIndexI} \leq \symRegVal)}
\\
&={\textstyle\sum_{\symRegVal=-\infty}^\infty
e^{
-\symExponentialRate\symBase^{-\symRegVal}(\frac{\symCardinalityA - \symCardinalityB\symJaccard}{\symBase(1+\symJaccard)}  + \symCardinalityB)}
-
e^{
-\symExponentialRate\symBase^{-\symRegVal}\frac{\symCardinalityA + \symCardinalityB}{1+\symJaccard}}}
\\
&=
\symHelperFunc_\symBase\!\left(
\log_\symBase\!\left(\symExponentialRate\left({\textstyle\frac{\symCardinalityA - \symCardinalityB\symJaccard}{\symBase(1+\symJaccard)}}+ \symCardinalityB\right)\right),
\log_\symBase\!\left(\symExponentialRate{\textstyle\frac{\symCardinalityA + \symCardinalityB}{1+\symJaccard}}\right)
\right)
\\
&\approx
\log_\symBase\!\left(\symExponentialRate{\textstyle\frac{\symCardinalityA + \symCardinalityB}{1+\symJaccard}}\right)
-
\log_\symBase\!\left(\symExponentialRate\left({\textstyle\frac{\symCardinalityA - \symCardinalityB\symJaccard}{\symBase(1+\symJaccard)}}+ \symCardinalityB\right)\right)
=
\symProbFunc_{\symBase}\!\left({\textstyle\frac{\symCardinalityA - \symCardinalityB\symJaccard}{\symCardinalityA + \symCardinalityB}}\right).
\end{align*}
Here we used the approximation \eqref{equ:approx_zeta} which is asymptotically equal as $\symBase\rightarrow1$ and introduces a negligible error if $\symBase\leq 2$. $\symProbFunc_{\symBase}$ is defined as $\symProbFunc_{\symBase}(\symX):=-\log_\symBase(
1-\symX\frac{\symBase-1}{\symBase}
)$. Complemented by $\symProbability(\symRegValVariate_{\symSetA\symIndexI} < \symRegValVariate_{\symSetB\symIndexI})$, which is obtained analogously, and $\symProbability(\symRegValVariate_{\symSetA\symIndexI} = \symRegValVariate_{\symSetB\symIndexI}) = 1 - \symProbability(\symRegValVariate_{\symSetA\symIndexI} > \symRegValVariate_{\symSetB\symIndexI}) -\symProbability(\symRegValVariate_{\symSetA\symIndexI} < \symRegValVariate_{\symSetB\symIndexI})$ we have
\begin{align}
\symProbability(\symRegValVariate_{\symSetA\symIndexI} > \symRegValVariate_{\symSetB\symIndexI})
&\approx
\symProbFunc_{\symBase}(\symCardinalityANorm - \symCardinalityBNorm\symJaccard),
\quad
\symProbability(\symRegValVariate_{\symSetA\symIndexI} < \symRegValVariate_{\symSetB\symIndexI})
\approx
\symProbFunc_{\symBase}(\symCardinalityBNorm - \symCardinalityANorm\symJaccard),
\nonumber
\\
\symProbability(\symRegValVariate_{\symSetA\symIndexI} = \symRegValVariate_{\symSetB\symIndexI})
&\approx
1
-
\symProbFunc_{\symBase}(\symCardinalityANorm - \symCardinalityBNorm\symJaccard)
-
\symProbFunc_{\symBase}(\symCardinalityBNorm - \symCardinalityANorm\symJaccard),\label{equ:probequal}
\end{align}
where we introduced the relative cardinalities $\symCardinalityANorm = \frac{\symCardinalityA}{\symCardinalityA+\symCardinalityB}$ and $\symCardinalityBNorm = \frac{\symCardinalityB}{\symCardinalityA+\symCardinalityB}$ with $\symCardinalityANorm+\symCardinalityBNorm=1$.
The approximation of $\symProbability(\symRegValVariate_{\symSetA\symIndexI} = \symRegValVariate_{\symSetB\symIndexI})$ is also always nonnegative \ifextended(see \cref{lem:prob_consistency})\else\cite{Ertl2021}\fi.

If the cardinalities $\symCardinalityA$ and $\symCardinalityB$ are known, the \ac{ML} method can be used to estimate $\symJaccard$. The corresponding log-likelihood function as function of $\symJaccard$ using the approximated probabilities is
\begin{multline*}
\log\symLikelihood(\symJaccard) = \symDiffCountVariate_{+} \log(\symProbFunc_{\symBase}(\symCardinalityANorm - \symCardinalityBNorm\symJaccard)) + \symDiffCountVariate_{-}\log(\symProbFunc_{\symBase}(\symCardinalityBNorm - \symCardinalityANorm\symJaccard))
\\
+\symDiffCountVariate_{0}\log(
1
-
\symProbFunc_{\symBase}(\symCardinalityANorm - \symCardinalityBNorm\symJaccard)
-
\symProbFunc_{\symBase}(\symCardinalityBNorm - \symCardinalityANorm\symJaccard)
),
\end{multline*}
where $\symDiffCountVariate_{+} := |\lbrace \symIndexI : \symRegValVariate_{\symSetA\symIndexI} > \symRegValVariate_{\symSetB\symIndexI}\rbrace|$, $\symDiffCountVariate_{-} := |\lbrace \symIndexI : \symRegValVariate_{\symSetA\symIndexI} < \symRegValVariate_{\symSetB\symIndexI} \rbrace|$, and $\symDiffCountVariate_{0} := |\lbrace \symIndexI : \symRegValVariate_{\symSetA\symIndexI} = \symRegValVariate_{\symSetB\symIndexI} \rbrace|$ are the number of registers in the sketch of $\symSetA$ that are greater than, less than, or equal to those in the sketch of $\symSetB$, respectively. 
Since this log-likelihood function is cheap to evaluate requiring only 5 logarithm evaluations and is strictly concave on the domain $\symJaccard \in [0, \min(\frac{\symCardinalityA}{\symCardinalityB}, \frac{\symCardinalityB}{\symCardinalityA})]=[0, \min(\frac{\symCardinalityANorm}{\symCardinalityBNorm}, \frac{\symCardinalityBNorm}{\symCardinalityANorm})]$ at least for all $\symBase\leq e\approx 2.718$ \ifextended(see \cref{lem:concavity})\else\cite{Ertl2021}\fi, the \ac{ML} estimate for $\symJaccard$ can be quickly and robustly found using standard univariate optimization algorithms like Brent's method \cite{Brent1973}. 

Since the \ac{ML} estimator is asymptotically efficient, the \ac{RMSE} of the \ac{ML} estimate is expected to be equal to $\symFisher^{-1/2}(\symJaccard)$ as $\symNumReg\rightarrow\infty$. $\symFisher(\symJaccard)$ denotes the Fisher information with respect to $\symJaccard$ for known cardinalities $\symCardinalityA$ and $\symCardinalityB$, which can be derived as \ifextended(see \cref{lem:fisher})\else\cite{Ertl2021}\fi
\begin{equation*}
\symFisher(\symJaccard)
= 
{\scriptstyle 
\frac{\symNumReg
(\symBase - 1)^2}{\symBase^2\log^2(\symBase)}
\left(
\frac{
\left(\symCardinalityBNorm
\symBase^{\symProbFunc_\symBase(\symCardinalityANorm - \symCardinalityBNorm \symJaccard)}
\right)^2
}{
\symProbFunc_\symBase(\symCardinalityANorm - \symCardinalityBNorm \symJaccard)}
+
\frac{
\left(\symCardinalityANorm
\symBase^{\symProbFunc_\symBase(\symCardinalityBNorm - \symCardinalityANorm \symJaccard)}
\right)^2
}{
\symProbFunc_\symBase(\symCardinalityBNorm - \symCardinalityANorm \symJaccard)
}
+
\frac{
\left(
\symCardinalityBNorm
\symBase^{\symProbFunc_\symBase(\symCardinalityANorm - \symCardinalityBNorm \symJaccard)}
+
\symCardinalityANorm
\symBase^{\symProbFunc_\symBase(\symCardinalityBNorm - \symCardinalityANorm \symJaccard)}
\right)^2
}{
1
-
\symProbFunc_\symBase(\symCardinalityANorm - \symCardinalityBNorm \symJaccard)
-
\symProbFunc_\symBase(\symCardinalityBNorm - \symCardinalityANorm \symJaccard)
}
\right)
}.
\end{equation*}
The asymptotic \ac{RMSE} can be compared with the Jaccard estimator of \ac{MH} for which the \ac{RMSE} is $\sqrt{\symJaccard(1-\symJaccard)/\symNumReg}$. The corresponding ratio is shown in \cref{fig:theoretical_variance} for the cases $\symCardinalityA=\symCardinalityB$ and $\symCardinalityA=0.5\symCardinalityB$, and various values of $\symBase$. For $\symCardinalityA=\symCardinalityB$ the curves approach 1 as $\symBase\rightarrow 1$. If $\symBase=1.001$ the difference from 1 is already very small and hence the estimation error for $\symJaccard$ will be almost the same as for \ac{MH} with the same number of registers $\symNumReg$. Given that 2-byte registers are sufficient for $\symBase=1.001$ as discussed in \cref{sec:parameter_config}, this significantly improves the space efficiency compared to \ac{MH} which uses 4- or even 8-byte components.
As shown in \cref{fig:theoretical_variance} for $\symCardinalityA=0.5\symCardinalityB$, the error can be even smaller than that of the \ac{MH} Jaccard estimator. The reason is that our estimation approach incorporates $\symCardinalityA$ and $\symCardinalityB$ and also if register values are greater or smaller while the classic \ac{MH} estimator only considers the number of equal registers.

If the true cardinalities $\symCardinalityA$ and $\symCardinalityB$ are not known, we simply replace them by corresponding estimates $\symCardinalityAEstimate$ and $\symCardinalityBEstimate$ using \eqref{equ:raw_cardinality_estimator}.
As a consequence, the \ac{RMSE} will apparently increase and $\symFisher^{-1/2}(\symJaccard)$ will only represent an asymptotic lower bound as $\symNumReg\rightarrow\infty$. Nevertheless, as supported by our experiments presented later, the estimation error is indistinguishable from the case with known cardinalities if the sets have equal size, thus if $\symCardinalityA=\symCardinalityB$. The reason is that $\symDiffCountVariate_{+}$ and $\symDiffCountVariate_{-}$ are identically distributed in this case, which makes the log-likelihood function symmetric with respect to $\symCardinalityANorm$ and $\symCardinalityBNorm$. Since this also implies $\symCardinalityANorm=\symCardinalityBNorm=\frac{1}{2}$ and estimates $\symCardinalityANormEstimate=\symCardinalityAEstimate/(\symCardinalityAEstimate+\symCardinalityBEstimate)$ and $\symCardinalityBNormEstimate=\symCardinalityBEstimate/(\symCardinalityAEstimate+\symCardinalityBEstimate)$ for $\symCardinalityANorm$ and $\symCardinalityBNorm$ satisfy $\symCardinalityANormEstimate+\symCardinalityBNormEstimate=1$ by definition, they can be written as $\symCardinalityANormEstimate=\frac{1}{2} + \symError$ and $\symCardinalityBNormEstimate=\frac{1}{2} - \symError$, respectively, with some estimation error $\symError$. Therefore and due to the symmetry of the log-likelihood function, the \ac{ML} estimator is an even function of $\symError$. The estimated Jaccard similarity will therefore differ only by an $\symBigO(\symError^2)$ term from the estimate based on the true cardinalities $\symCardinalityA$ and $\symCardinalityB$. Since $\symError\ll 1$ for sufficiently large $\symNumReg$, this second order term can be ignored in practice.

\begin{figure}[t]
  \centering
  \includegraphics[width=\linewidth]{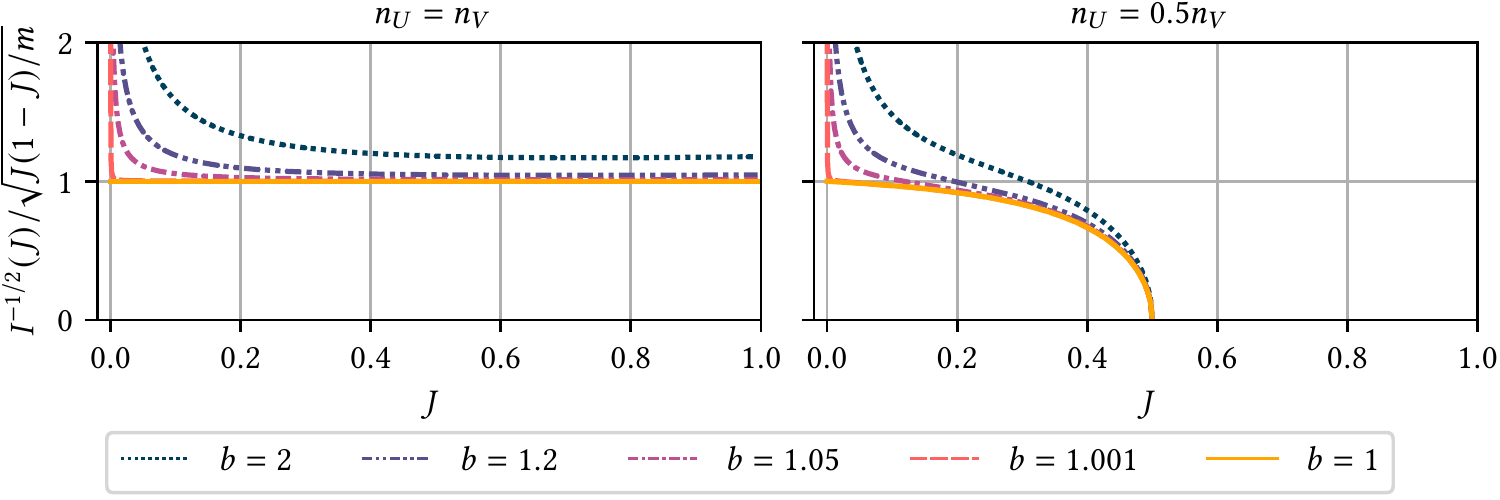}
  \caption{\boldmath Asymptotic \acs*{RMSE} of the new estimation approach with known cardinalities $\symCardinalityA$ and $\symCardinalityB$ relative to the \acs*{RMSE} of \acs*{MH} $\sqrt{\symJaccard(1-\symJaccard)/\symNumReg}$ with equal number of registers $\symNumReg$.}
  \label{fig:theoretical_variance}
\end{figure}

\subsection{Locality Sensitivity}
A collision probability, that is a monotonic function of some similarity measure, is the foundation of \ac{LSH} \cite{Indyk1998, Bawa2005, Lv2007,Zhu2016}. Rewriting of \eqref{equ:probequal} and using $\symCardinalityANorm+\symCardinalityBNorm=1$ gives for the probability of a register to be equal in two different SetSketches
\begin{equation*}
\symProbability(\symRegValVariate_{\symSetA\symIndexI} = \symRegValVariate_{\symSetB\symIndexI})
\approx
\textstyle
\log_\symBase(
1+\symJaccard(\symBase-1)
+
{\textstyle\frac{(\symBase-1)^2}{\symBase}(\symCardinalityANorm-\symCardinalityBNorm\symJaccard)(\symCardinalityBNorm-\symCardinalityANorm\symJaccard)}
).
\end{equation*}
Using $0\leq (\symCardinalityANorm-\symCardinalityBNorm\symJaccard)(\symCardinalityBNorm-\symCardinalityANorm\symJaccard)\leq \frac{1}{4}(1-\symJaccard)^2$ \ifextended(see \cref{lem:lsh_inequality}) \else\cite{Ertl2021} \fi
leads to
\begin{equation*}
\log_\symBase(1+\symJaccard(\symBase-1))\lesssim
\symProbability(\symRegValVariate_{\symSetA\symIndexI} = \symRegValVariate_{\symSetB\symIndexI})
\lesssim
\log_\symBase(\scriptstyle 1+\symJaccard(\symBase-1)
+
(1-\symJaccard)^2{\frac{(\symBase-1)^2}{4\symBase}}
).
\end{equation*}
These bounds are illustrated in \cref{fig:collision_probability} for $\symBase\in\lbrace 2,1.2, 1.001\rbrace$. They are very tight for large $\symJaccard$. Both bounds approach $\symProbability(\symRegValVariate_{\symSetA\symIndexI} = \symRegValVariate_{\symSetB\symIndexI})=\symJaccard$ as $\symBase\rightarrow 1$. Estimating $\symProbability(\symRegValVariate_{\symSetA\symIndexI} = \symRegValVariate_{\symSetB\symIndexI})$ by $\symDiffCountVariate_0/\symNumReg$ and resolving for $\symJaccard$ results in corresponding lower and upper bound estimators
\begin{equation}
\label{equ:jaccard_estimator}
\symJaccardEstimate_\textnormal{low}
:=
\max\!\left(
0,
2
\textstyle
\frac{
\symBase^{\frac{\symDiffCountVariate_0/\symNumReg+1}{2}}
-1
}
{\symBase -1}
-
1
\right),
\quad
\symJaccardEstimate_\textnormal{up}
:=
\frac{\symBase^{\symDiffCountVariate_0/\symNumReg}-1}{\symBase -1}.
\end{equation}

\begin{figure}[t]
  \centering
  \includegraphics[width=\linewidth]{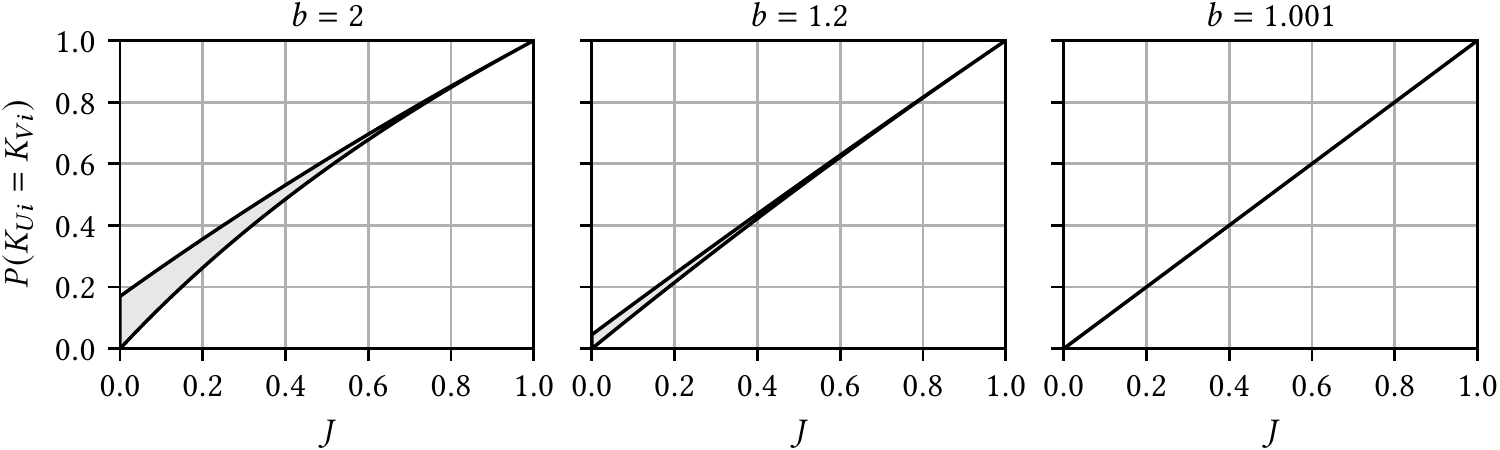}
  \caption{\boldmath The range of possible collision probabilities of SetSketch registers as function of $\symJaccard$.}
  \label{fig:collision_probability}
\end{figure}

\begin{figure}[t]
  \centering
  \includegraphics[width=\linewidth]{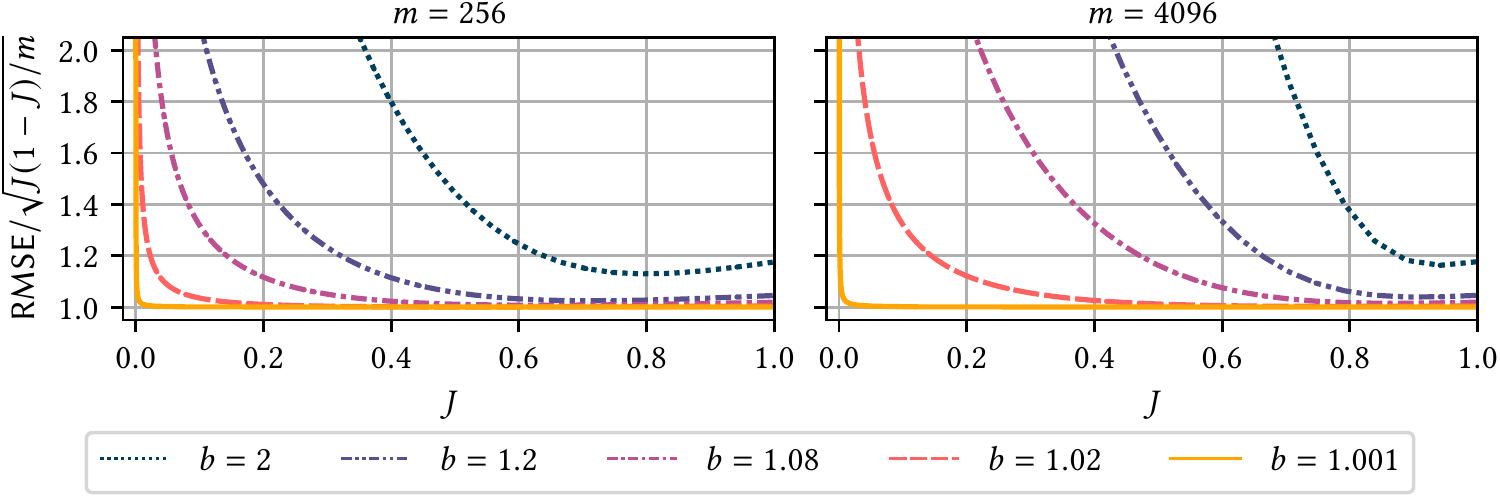}
  \caption{\boldmath The \acs*{RMSE} of $\symJaccardEstimate_\textnormal{\normalfont up}$ for the case $\symCardinalityA = \symCardinalityB$ relative to the \acs*{RMSE} of \acs*{MH} $\sqrt{\symJaccard(1-\symJaccard)/\symNumReg}$.}
  \label{fig:mse_upperbound_estimation.pdf}
\end{figure}

The tight boundaries, especially for large $\symJaccard$, make SetSketch an interesting alternative to \ac{MH} for \ac{LSH}. Like b-bit minwise hashing \cite{Li2010} and odd sketches \cite{Mitzenmacher2014}, SetSketches are more memory efficient than MinHash, but unlike them, SetSketches can be further aggregated. The \ac{RMSE} of \ac{MH} is $\sqrt{\symJaccard(1-\symJaccard)/\symNumReg}$. 
As long as the \ac{RMSE} of SetSketch is not significantly greater than that, the additional error from using SetSketches instead of \ac{MH} is negligible. For comparison, we consider the worst case scenario with $|\symSetA|=|\symSetB|\Leftrightarrow \symCardinalityANorm = \symCardinalityBNorm = \frac{1}{2}$, which maximizes the collision probability, while using $\symJaccardEstimate_\textnormal{up}$, which is based on the minimum possible collision probability, for estimation.
\cref{fig:mse_upperbound_estimation.pdf} shows the theoretical results for different values of $\symBase$ and $\symNumReg$. Even for large values like $\symBase=2$ the \ac{RMSE} is only increased by less than $20\%$, for all similarities greater than $0.7$ or $0.9$ for the cases $\symNumReg=256$ and $\symNumReg=4096$, respectively. The difference becomes smaller with decreasing $\symBase$. The \ac{RMSE} almost matches that of \ac{MH} for $\symBase=1.001$, for which 2-byte registers suffice as mentioned earlier, again showing great potential for space savings compared to \ac{MH}.

When searching for nearest neighbors with \ac{LSH}, a set of candidates is determined first, which is then further filtered. For filtering, the presented more precise joint estimation approach can be used instead of \eqref{equ:jaccard_estimator} to reduce the false positive rate.

\section{Comparison}
This section compares \ac{MH} and \ac{HLL} with SetSketch, and in particular shows that they relate to SetSketches with $\symBase=1$ and $\symBase=2$, respectively. 

\subsection{MinHash}
\label{sec:comp_minhash}
\ac{MH} can be regarded as an extreme case of SetSketch1 with $\symBase\rightarrow 1$ for which the statistic $e^{-\symExponentialRate\symBase^{-\symRegValVariate}}$ approaches a continuous uniform distribution according to \eqref{equ:reg_val_distribution}. Therefore, since the registers of SetSketch1 are also statistically independent, the transformation 
$\symRegValVariate'_\symIndexI = 1 - e^{-\symExponentialRate\symBase^{-\symRegValVariate_\symIndexI}}$
makes SetSketch1 equivalent to \ac{MH} with values $\symRegValVariate'_\symIndexI$ as $\symBase\rightarrow 1$. 
Applying this transformation to \eqref{equ:raw_cardinality_estimator} for $\symBase\rightarrow1$ and using $\lim_{\symBase\rightarrow 1} (1-1/\symBase)/\log(\symBase)=1$ indeed leads to the cardinality estimator of  \ac{MH} \cite{Clifford2012, Cohen2015} with a \ac{RSD} of $1/\sqrt{\symNumReg}$
\begin{equation}
\label{equ:mh_cardinality_estimator}
\textstyle\symCardinalityEstimate = \frac{\symNumReg}{ \sum_{\symIndexI=1}^\symNumReg -\log(1-\symRegValVariate'_\symIndexI)}.
\end{equation}

The presented joint estimation method can also be applied to \ac{MH}. However, as \ac{MH} uses the minimum instead of the maximum for state updates we have to redefine $\symDiffCountVariate_{+}$ and $\symDiffCountVariate_{-}$ as $\symDiffCountVariate_{+} := |\lbrace \symIndexI : \symRegValVariate'_{\symSetA\symIndexI} < \symRegValVariate'_{\symSetB\symIndexI}\rbrace|$, $\symDiffCountVariate_{-} := |\lbrace \symIndexI : \symRegValVariate'_{\symSetA\symIndexI} > \symRegValVariate'_{\symSetB\symIndexI} \rbrace|$. For $\symBase\rightarrow 1$, the \ac{ML} estimate can be explicitly expressed as \ifextended(see \cref{lem:ml_estimate_mh})\else\cite{Ertl2021}\fi
\begin{equation}
\label{equ:new_mh_estimate}
\textstyle
\symJaccardEstimate
=
\frac{
\symCardinalityANorm^2
(\symDiffCountVariate_0+\symDiffCountVariate_{-})
+
\symCardinalityBNorm^2
(\symDiffCountVariate_0+\symDiffCountVariate_{+})
-\sqrt{
\left(
\symCardinalityANorm^2
(\symDiffCountVariate_0+\symDiffCountVariate_{-})
-
\symCardinalityBNorm^2
(\symDiffCountVariate_0+\symDiffCountVariate_{+})
\right)^2
+
4
\symDiffCountVariate_{-}
\symDiffCountVariate_{+}
\symCardinalityANorm^2
\symCardinalityBNorm^2
}
}{
2\symNumReg\symCardinalityANorm
\symCardinalityBNorm
}
\end{equation}
and has an asymptotic \ac{RMSE} of \ifextended(see \cref{lem:fisher_limit})\else\cite{Ertl2021}\fi
\begin{equation*}
\symRMSE(\symJaccardEstimate)
\stackrel{\symNumReg\rightarrow\infty}{=}
\symFisher^{-1/2}(\symJaccard)
= 
\textstyle
\sqrt{\frac{\symJaccard(1-\symJaccard)}{\symNumReg}}
\sqrt{
1-\frac{(\symCardinalityANorm-\symCardinalityBNorm)^2\symJaccard}{\symCardinalityANorm\symCardinalityBNorm(1-\symJaccard)^2}
}
\leq
\sqrt{\frac{\symJaccard(1-\symJaccard)}{\symNumReg}}.
\end{equation*}
This shows that this estimator outperforms the state-of-the-art Jaccard estimator, which has a \ac{RMSE} of $\sqrt{\symJaccard(1-\symJaccard)/\symNumReg}$. Our experiments showed that this estimator also works better, if the cardinalities are unknown and need to be estimated using \eqref{equ:new_mh_estimate}. Therefore, this approach is a much less expensive alternative to the \ac{ML} approach described in \cite{Cohen2017}, which considers the full likelihood as a function of all 3 parameters and requires solving a three-dimensional optimization problem.

The state of SetSketch2 is logically equivalent to that of SuperMinHash as $\symBase\rightarrow 1$. SuperMinHash also uses correlation between components and is able to reduce the variance of the standard estimator for $\symJaccard$ by up to a factor of 2 for small sets \cite{Ertl2017b}.

\subsection{HyperLogLog}
\label{sec:comp_hll}
The \acf{GHLL} with stochastic averaging, which includes \ac{HLL} as a special case with $\symBase=2$, is similar to SetSketch with $\symExponentialRate=1/\symNumReg$. Under the Poisson model \cite{Flajolet2007, Ertl2017}, which assumes that the cardinality $\symCardinality$ is not fixed but Poisson distributed with mean $\symPoissonRate$, the register values will be distributed as $\symProbability(\symRegValVariate_\symIndexI\leq \symRegVal) = e^{-\symPoissonRate\symNumReg^{-1}\symBase^{-\symRegVal}}$ for $\symRegVal\geq 0$ \ifextended(see \cref{lem:poisson_approx})\else\cite{Ertl2021}\fi.
Comparison with \eqref{equ:set_sketch_distribution} shows that the register values of \ac{GHLL} are distributed as those of SetSketch for $\symExponentialRate=1/\symNumReg$ and $\symCardinality=\symPoissonRate$ provided that all register values are nonzero. Therefore, the cardinality estimator \eqref{equ:raw_cardinality_estimator} can be used to estimate $\symPoissonRate$ in this case. An unbiased estimator for the Poisson parameter $\symPoissonRate$ is also an unbiased estimator for the true cardinality $\symCardinality$ \ifextended(see \cref{lem:depoissonization})\else\cite{Ertl2021}\fi. Since \eqref{equ:raw_cardinality_estimator} is asymptotically unbiased as $\symNumReg\rightarrow \infty$, it can be used to estimate $\symCardinality$. And indeed, \eqref{equ:raw_cardinality_estimator} corresponds to the cardinality estimator with a \ac{RSD} of $\sqrt{3\log(2)-1}/\sqrt{\symNumReg} \approx 1.04/\sqrt{\symNumReg}$ presented in \cite{Flajolet2007} for \ac{HLL} with base $\symBase=2$. 

For small cardinalities, when many registers are zero, the value distribution will differ significantly from \eqref{equ:set_sketch_distribution} and the estimator will fail. 
However, it can be fixed by applying the same trick as presented for the case $\symBase=2$ in \cite{Ertl2017}. If registers values are limited to the range $\lbrace 0, 1, \ldots, \symMaxRegularValue+1\rbrace$, and $\symCountVariate_\symRegVal := |\lbrace \symIndexI : \symRegValVariate_\symIndexI = \symRegVal\rbrace|$ is the histogram of register values, the corrected estimator can be written as \ifextended(see \cref{sec:corrected_cardinality_estimator})\else\cite{Ertl2021}\fi
\begin{equation}
\label{equ:corrected_cardinality_estimator}
\symCardinalityCorrectedEstimate = \textstyle\frac{\symNumReg(1-1/\symBase)}{
\symExponentialRate \log(\symBase)
{
\scriptstyle
\left(
\symNumReg\symSmallCorrectionFunc_\symBase(\symCountVariate_0/\symNumReg)
+
(\sum_{\symRegVal=1}^\symMaxRegularValue \symCountVariate_\symRegVal \symBase^{-\symRegVal})
+
\symNumReg
\symBase^{-\symMaxRegularValue}
\symLargeCorrectionFunc_\symBase(1-\symCountVariate_{\symMaxRegularValue+1}/\symNumReg)
\right)}}.
\end{equation}
The functions $\symSmallCorrectionFunc_\symBase$ and $\symLargeCorrectionFunc_\symBase$ are defined as converging series
\begin{align*}
\symSmallCorrectionFunc_\symBase(\symX)
&:=
\textstyle
\symX + (\symBase-1)\sum_{\symRegVal=1}^\infty \symBase^{\symRegVal-1}\symX^{\symBase^\symRegVal} ,
\\
\symLargeCorrectionFunc_\symBase(\symX)
&:=
\textstyle
1-\symX
+
(\symBase-1)
\sum_{\symRegVal=0}^\infty \symBase^{-\symRegVal-1}(\symX^{\symBase^{-\symRegVal}}-1).
\end{align*}
$\symCardinalityCorrectedEstimate$ also includes a correction for very large cardinalities, for which register update values greater than $\symMaxRegularValue+1$, exceeding the value range of registers, are more likely to occur.
The corrected estimator could also be applied to misconfigured SetSketches that do not allow to ignore the probability of registers with values less than 0 or greater than $\symMaxRegularValue+1$ as discussed in \cref{sec:parameter_config}. To save computation time, the values of $\symSmallCorrectionFunc_\symBase(\symCountVariate_0/\symNumReg)$ and $\symLargeCorrectionFunc_\symBase(\symCountVariate_{\symMaxRegularValue+1}/\symNumReg)$ can be tabulated for all possible values of $\symCountVariate_0,\symCountVariate_{\symMaxRegularValue+1}\in\lbrace 0,1,\ldots,\symNumReg\rbrace$ for fixed $\symNumReg$ and $\symBase$.

Due to the similarity of the register value distribution, the proposed joint estimation method also works well for \ac{GHLL}, provided that all register values are from $\lbrace 1,2,\ldots,\symMaxRegularValue\rbrace$. Since the estimator relies only on the relative order, it can be even applied as long as there are no registers having value 0 or $\symMaxRegularValue+1$ simultaneously in both \ac{GHLL} sketches. Registers equal to $\symMaxRegularValue+1$ can be easily avoided by choosing $\symMaxRegularValue$ sufficiently large. Registers, that have never been updated and thus are zero in both sketches, are expected for union cardinalities $|\symSetA\cup\symSetB|$ smaller than $\symNumReg \symHarmonic_\symNumReg$, where $\symHarmonic_\symNumReg:=\sum_{\symIndexI=1}^\symNumReg 1/\symIndexI$ is the $\symNumReg$-th harmonic number. This follows directly from the coupon collector's problem \cite{Mitzenmacher2005}. If the registers do not satisfy the prerequisites for applying the new estimation approach, the inclusion-exclusion principle \eqref{equ:joint_incl_excl} could still be used as fallback. 

\subsection{HyperMinHash}
The similarity with \ac{GHLL} as discussed in \cref{sec:hyperminhash_intro} allows using the proposed joint estimation approach for HyperMinHash as well. If the condition of not too small cardinalities is satisfied, the experiments presented below have shown that the new approach also outperforms the original estimator of HyperMinHash, especially when the cardinalities of the two sets are very different.

\section{Experiments}

For the sake of reproducibility, we used synthetic data for our experiments to verify SetSketch and the proposed estimators. 
The wide and successful application of \ac{MH}, \ac{HLL}, and many other probabilistic data structures has proven that the output of high-quality hash functions \cite{Urban2020} is indistinguishable from uniform random values in practice. This justifies to simply use sets consisting of random 64-bit integers for our experiments. In this way arbitrary many random sets of predefined cardinality can be easily generated, which is fundamental for the rigorous verification of the presented estimators. Real-world data sets usually do not include thousands of different sets with the same predefined cardinality.

\subsection{Implementation}
Both SetSketch variants as well as \ac{GHLL} have been implemented in C++. The corresponding source code, including scripts to reproduce all the presented results, has been made available at \url{https://github.com/dynatrace-research/set-sketch-paper}. We used the Wyrand pseudorandom number generator \cite{Yi2021} which is extremely fast, has a state of 64 bits, and passes a series of statistical quality tests \cite{Lemire2019a}. Seeded with the element of a set, it is able to return a sequence of 64-bit pseudorandom integers. The random bits are used very economically. Only if all 64 bits are consumed, the next bunch of 64 bits will be generated. Sampling with replacement was realized as constant-time operation as described in \cite{Ertl2020} based on Fisher-Yates shuffling \cite{Fisher1938}. The algorithm proposed in \cite{Lemire2019} was used to sample random integers from intervals. The ziggurat method \cite{Marsaglia2000} as implemented in the Boost.Random C++ library \cite{Watanabe2020} was used to obtain exponentially distributed random values. The algorithm described in \cite{Ertl2020} was applied to efficiently sample from a truncated exponential distribution as needed for SetSketch2. 
Our implementation precalculates and stores all relevant powers of $\symBase^{-1}$ in a sorted array, which is then used to find register update values as defined by \eqref{equ:setsketch_update} using binary search instead of expensive logarithm evaluations. This also allows to limit the search space to values greater than $\symRegValVariateLow$, which further saves time with increasing cardinality.

\subsection{Cardinality Estimation}
\label{sec:card_experiment}

\begin{figure}
\centering
\includegraphics[width=\columnwidth]{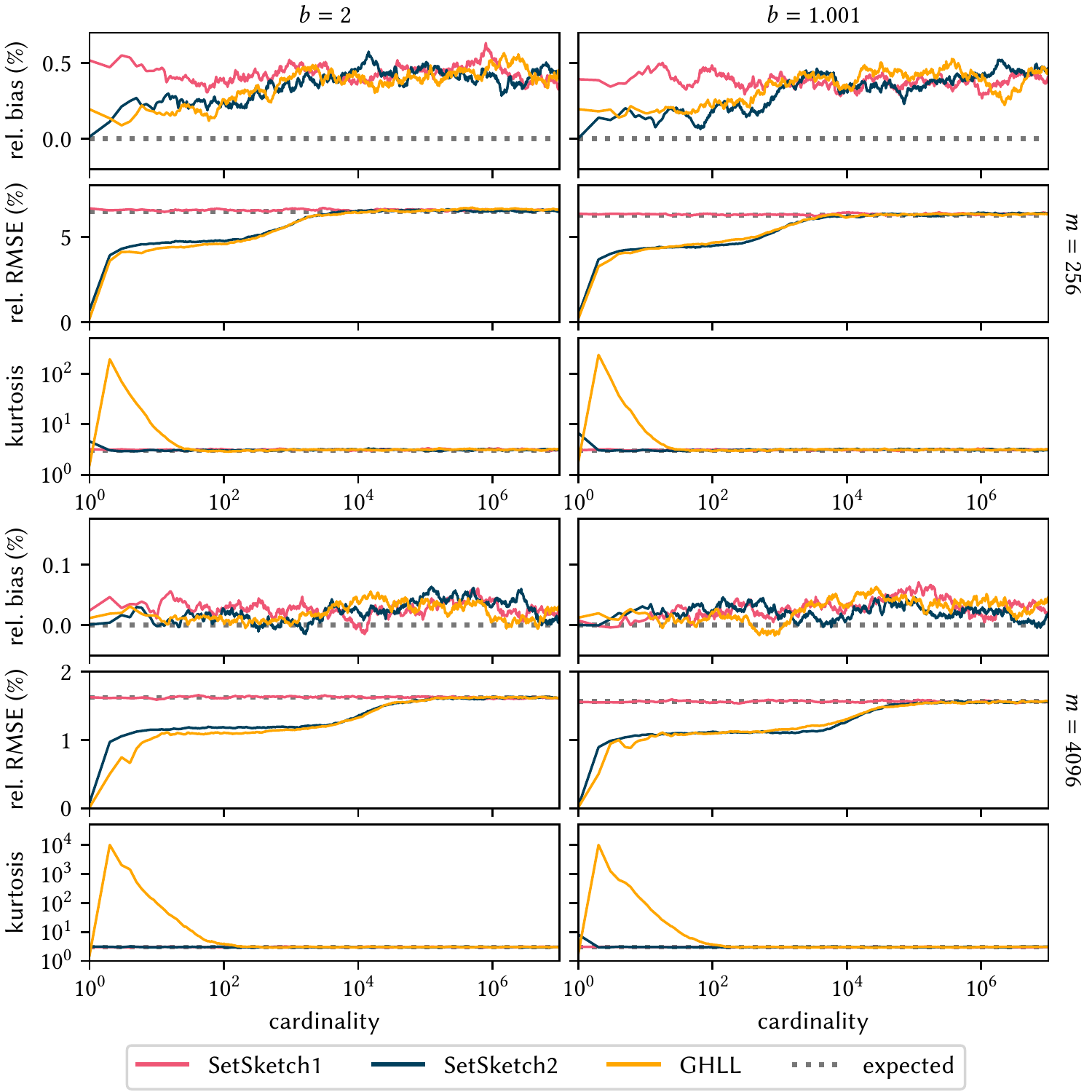}
\caption{\boldmath The relative bias, the relative \acs*{RMSE}, and the kurtosis of $\symCardinalityEstimate$ \eqref{equ:raw_cardinality_estimator} for SetSketch1 and SetSketch2 and of $\symCardinalityCorrectedEstimate$ \eqref{equ:corrected_cardinality_estimator} for \acs*{GHLL}.}
\label{fig:cardinality}
\end{figure}

\begin{figure*}
\centering
\includegraphics[width=\linewidth]{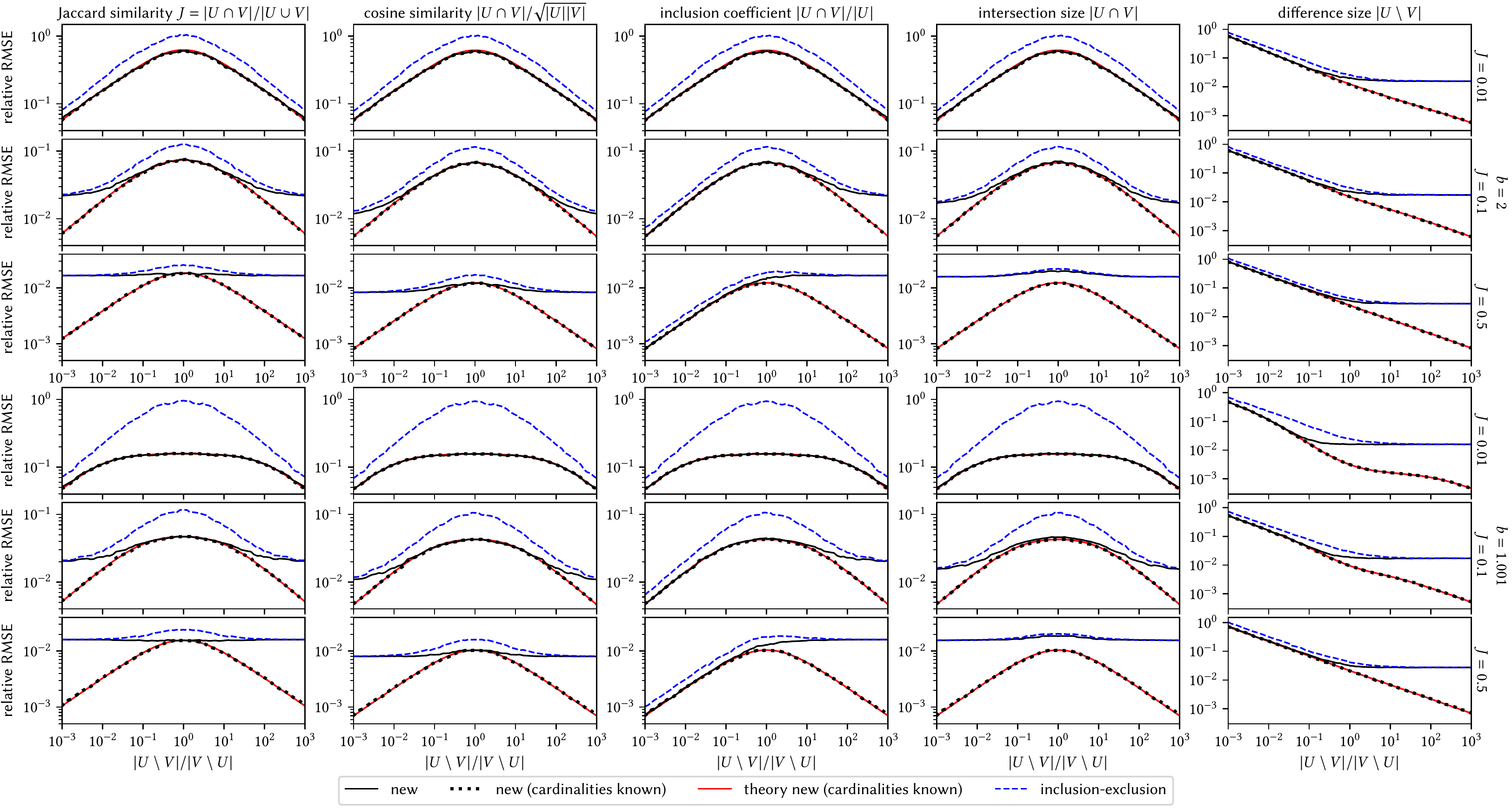}
\caption{\boldmath The relative \acs*{RMSE} of various estimated joint quantities when using SetSketch1 with $\symBase\in\lbrace1.001, 2\rbrace$ and $\symNumReg=4096$ for sets with a fixed union cardinality of $|\symSetA\cup\symSetB|=10^6$.}
\label{fig:joint_setsketch1}
\end{figure*}

\begin{figure*}
\centering
\includegraphics[width=\linewidth]{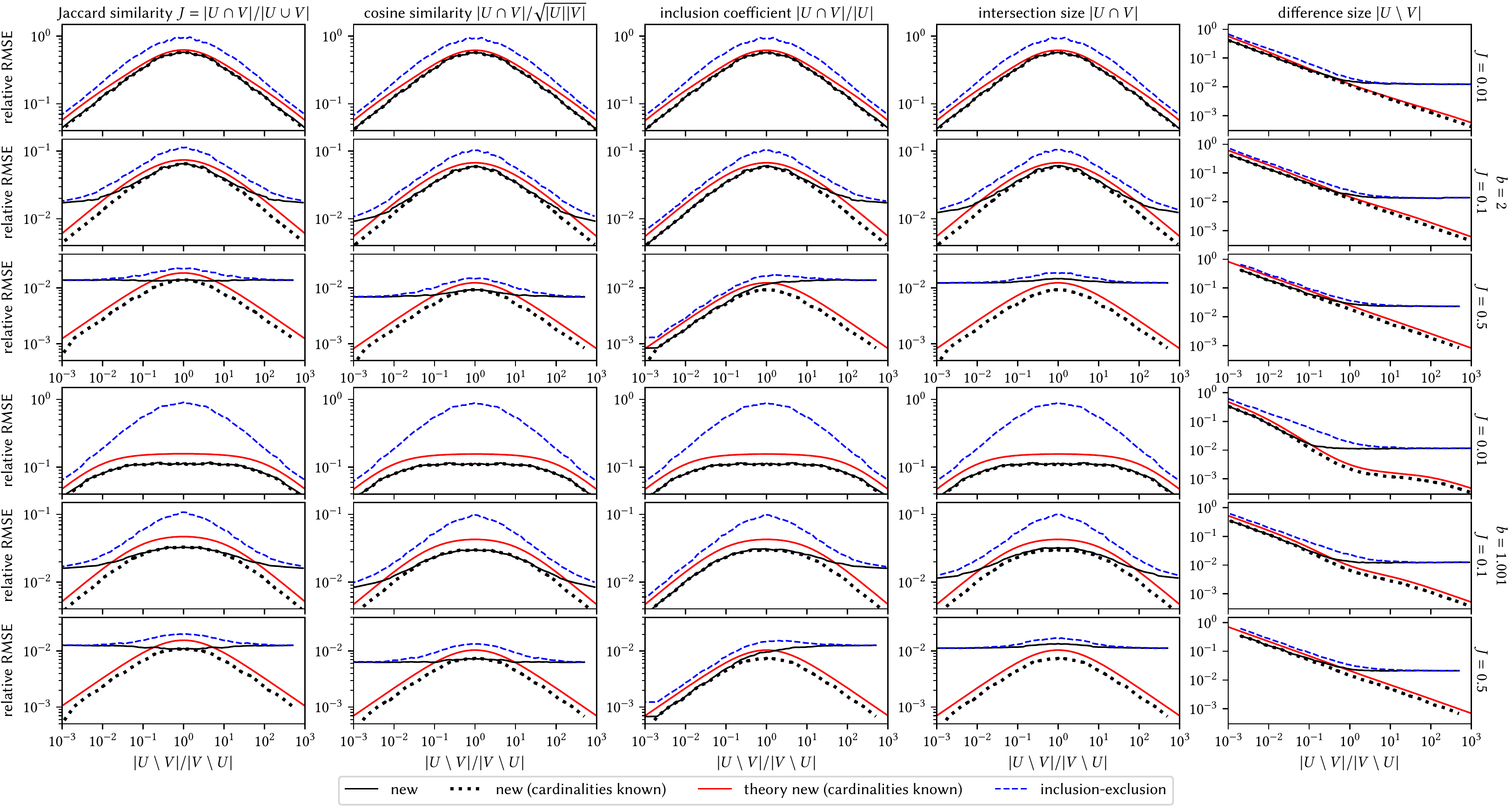}
\caption{\boldmath The relative \acs*{RMSE} of various estimated joint quantities when using SetSketch2 with $\symBase\in\lbrace1.001, 2\rbrace$ and $\symNumReg=4096$ for sets with a fixed union cardinality of $|\symSetA\cup\symSetB|=10^3$.}
\label{fig:joint_setsketch2}
\end{figure*}

\begin{figure*}
\centering
\includegraphics[width=\linewidth]{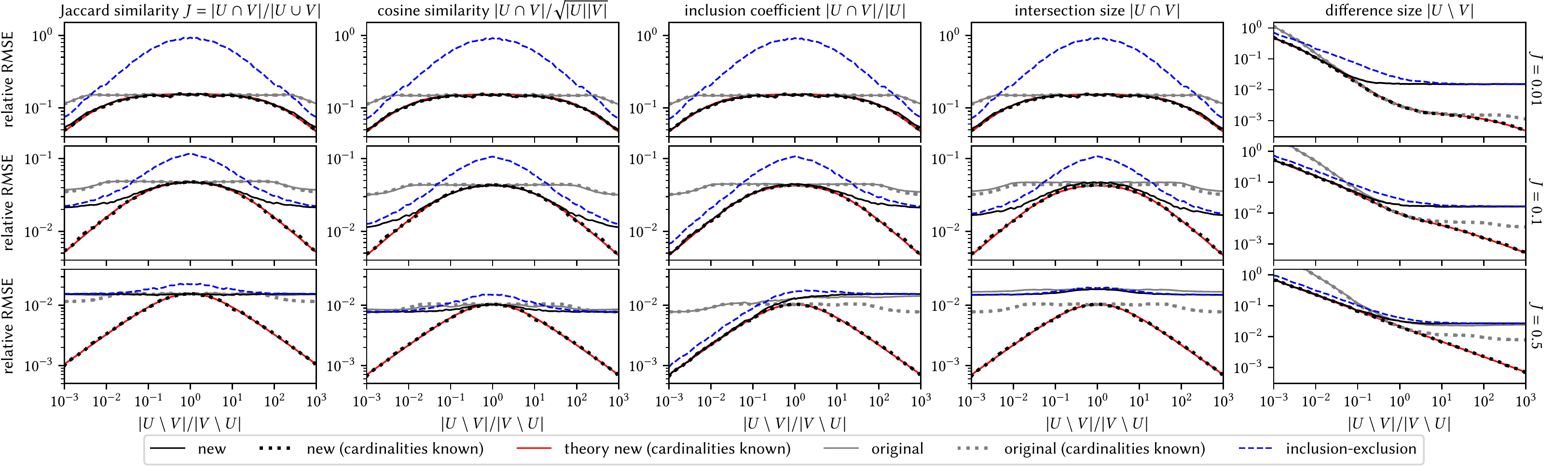}
\caption{\boldmath The relative \acs*{RMSE} of various estimated joint quantities when using \acs*{MH} with $\symNumReg=4096$ for sets with a fixed union cardinality of $|\symSetA\cup\symSetB|=10^6$.}
\label{fig:joint_minhash}
\end{figure*}

\begin{figure*}
\centering
\includegraphics[width=\linewidth]{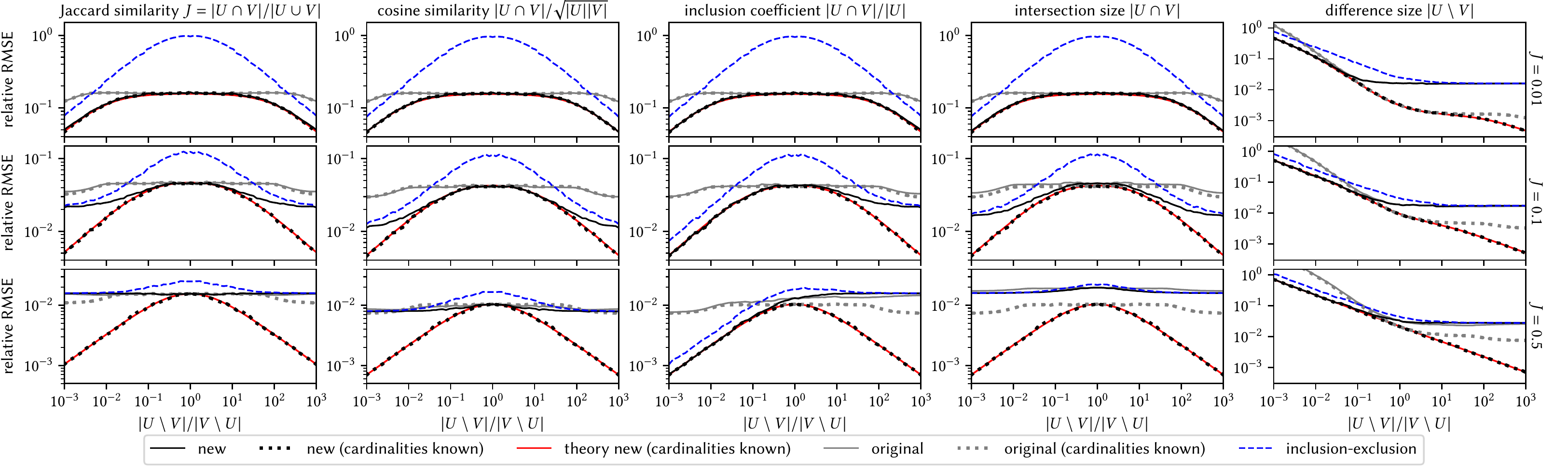}
\caption{\boldmath The relative \acs*{RMSE} of various estimated joint quantities when using HyperMinHash with $\symNumReg=4096$ and $\symHyperMinHashParameter=10$, which corresponds to $\symBase=2^{-2^{10}}\approx 1.000677$, for sets with a fixed union cardinality of $|\symSetA\cup\symSetB|=10^6$.}
\label{fig:joint_hyperminhash}
\end{figure*}

Our cardinality estimation approach was tested by running \num{10000} simulation cycles for different SetSketch and \ac{GHLL} configurations. In each cycle we generated 10 million 64-bit random integer values and inserted them into the data structure. The use of 64 bits makes collisions very unlikely and allows regarding the number of inserted random values as the cardinality of the recorded set. During each simulation cycle the cardinality was estimated for a predefined set of true cardinality values which have been chosen to be roughly equally spaced on the log-scale.

We investigated SetSketch1, SetSketch2, and \ac{GHLL} with stochastic averaging using $\symNumReg=256$ and $\symNumReg=4096$ registers. We considered different bases $\symBase=1.001$ and $\symBase=2$, for which we used 6-bit ($\symMaxRegularValue=62$) and 2-byte registers ($\symMaxRegularValue = 2^{16}-2 = 65534$), respectively. For both SetSketch variants, we used $\symExponentialRate=20$ and \eqref{equ:raw_cardinality_estimator} for cardinality estimation. The corrected estimator \eqref{equ:corrected_cardinality_estimator} was used for \ac{GHLL}. 

\cref{fig:cardinality} shows the relative bias, the relative \ac{RMSE}, and the kurtosis of the cardinality estimates over the true cardinality. The bias is significantly smaller than the \ac{RMSE} in all cases. Furthermore, when comparing $\symNumReg=256$ and $\symNumReg=4096$, the bias decreases more quickly than the \ac{RMSE} as $\symNumReg\rightarrow\infty$. Hence, the bias can be ignored in practice for reasonable values of $\symNumReg$.

The independence of register values in SetSketch1 leads to a constant error over the entire cardinality range, that matches well with the theoretically derived \ac{RSD} from \cref{sec:cardinality_estimation}. In accordance with the discussion in \cref{sec:cardinality_estimation}, the error is only marginally improved when decreasing $\symBase$ from $2$ to $1.001$. The error curves of SetSketch2 and \ac{GHLL} are very similar and show a significant improvement for cardinalities smaller than $\symNumReg$. 
This is attributed to the correlation of register values of SetSketch2 and the stochastic averaging of \ac{GHLL}, respectively, both of which have a similar positive effect. However, when comparing the kurtosis, which is expected to be equal to 3 for normal-like distributions, we observed huge values for \ac{GHLL}. This means that \ac{GHLL} is very susceptible to outliers for small cardinalities. This is not surprising, as stochastic averaging only updates a single register per element. For a set with cardinality $\symCardinality=2$, for example, it happens with $1/\symNumReg$ probability that both elements update the same register, which makes the state indistinguishable from that of a set with just a single element and obviously results in a large \SI{50}{\percent} error. We also tested the \ac{ML} method for cardinality estimation and obtained visually almost identical results \ifextended (see \cref{fig:cardinality_ml}) \else\cite{Ertl2021} \fi which are therefore omitted in \cref{fig:cardinality} and which prove the efficiency of estimators \eqref{equ:raw_cardinality_estimator} and \eqref{equ:corrected_cardinality_estimator}.

\subsection{Joint Estimation}

To verify the presented joint estimation approach, we generated many pairs of random sets with predefined relationship.
Each pair of sets $(\symSetA, \symSetB)$ was constructed by generating three sets $\symSetS_1$, $\symSetS_2$, and $\symSetS_3$ of 64-bit random numbers with fixed cardinalities $\symCardinality_1$, $\symCardinality_2$, and $\symCardinality_3$, respectively, which are merged according to $\symSetA = \symSetS_1\cup \symSetS_3$ and $\symSetB = \symSetS_2\cup \symSetS_3$.
The use of 64-bit random numbers allows ignoring collisions and all three sets can be considered to be distinct. By construction, we have $\symJaccard = \frac{\symCardinality_3}{\symCardinality_1+\symCardinality_2+\symCardinality_3}$, $\symCardinalityA=\symCardinality_1+\symCardinality_3$, and $\symCardinalityB=\symCardinality_2+\symCardinality_3$, which guarantees both cardinalities, the Jaccard similarity, and hence also other joint quantities to be the same for all generated pairs $(\symSetA,\symSetB)$.
After recording both sets using the data structure under consideration and applying the proposed joint estimation approach, we finally compared the estimates with the prescribed true values for various joint quantities.

\cref{fig:joint_setsketch1} shows the relative \ac{RMSE} when estimating the Jaccard similarity, cosine similarity, inclusion coefficients, intersection size, and difference sizes using different approaches from SetSketch1 with $\symExponentialRate=20$ and $\symBase\in\lbrace 1.001,2\rbrace$. The union cardinality was fixed $|\symSetA\cup\symSetB|=\num{e6}$ and the Jaccard similarity was selected from $\symJaccard \in\lbrace 0.01, 0.1, 0.5\rbrace$. Each data point shows the result after evaluating 1000 pairs of randomly generated sets. The charts were obtained by varying the ratio $|\symSetA\setminus\symSetB|/|\symSetB\setminus\symSetA|$ over $[\num{e-3},\num{e3}]$ while keeping $\symJaccard$ and $|\symSetA\cup\symSetB|$ fixed. The symmetry allowed us to perform the experiments only for ratios from $[1,\num{e3}]$.

\cref{fig:joint_setsketch1} clearly shows that the proposed joint estimator dominates the inclusion-exclusion principle for all investigated joint quantities. The difference is more significant for small set overlaps like for $\symJaccard=0.01$. Comparing the results for $\symBase=1.001$ and $\symBase=2$ shows that joint estimation can be significantly improved when using smaller bases $\symBase$. Knowing the true values of $\symCardinalityA$ and $\symCardinalityB$ further reduces the estimation error for which we observed perfect agreement with the theoretically derived \ac{RMSE}. If $\symCardinalityA$ and $\symCardinalityB$ are known, any other joint quantity $\symQuantity$ can be expressed as a function of $\symJaccard$. The corresponding \ac{ML} estimate is therefore given by $\symQuantityEstimate = \symQuantity(\symJaccardEstimate)$. Variable transformation of the Fisher information given in \cref{sec:joint_estimation}, allows to calculate the corresponding asymptotic \ac{RMSE} of $\symQuantity$ which is $\symFisher^{-1/2}(\symJaccard)|\symQuantity'(\symJaccard)|$ as $\symNumReg\rightarrow \infty$.
As also predicted for the case $\symCardinalityA=\symCardinalityB$, the estimation error of $\symJaccard$ is the same regardless of whether the true cardinalities are known or not.

When running the same simulations for SetSketch2 and \ac{GHLL} also with bases $\symBase\in\lbrace 1.001,2\rbrace$, we got almost identical charts \ifextended(see \cref{fig:joint_set_sketch2_1000000} and \cref{fig:joint_ghll_1000000})\else\cite{Ertl2021}\fi, which therefore have been omitted here. The reason why our estimation approach also worked for \ac{GHLL} is that the union cardinality was fixed at \num{e6} which is large enough to not have any registers that are zero in both sketches as discussed in \cref{sec:comp_hll}. In particular, this shows that our approach can significantly improve joint estimation from \ac{HLL} sketches with $\symBase=2$ for which the inclusion-exclusion principle is the state-of-the-art approach for joint estimation. 

We repeated all simulations with a fixed union cardinality of $|\symSetA\cup\symSetB|=\num{e3}$.
For SetSketch1, we obtained the same results as for $|\symSetA\cup\symSetB|=\num{e6}$ \ifextended(see \cref{fig:joint_set_sketch1_1000})\else\cite{Ertl2021}\fi. 
However, the errors for SetSketch2 shown in \cref{fig:joint_setsketch2} are significantly smaller and also lower than theoretically predicted. The reason for this improvement is, as before for cardinality estimation, the statistical dependence between register values in case of small sets. For $\symBase=1.001$, the error is reduced by a factor of up to $\sqrt{2}$ when estimating the Jaccard similarity using our new approach. This is expected as SetSketch2 corresponds to SuperMinHash \cite{Ertl2017b} as $\symBase\rightarrow 1$, for which the variance is known to be approximately 50\% smaller than for \ac{MH}, if $|\symSetA\cup\symSetB|$ is smaller than $\symNumReg$. 
Our joint estimator failed for \ac{GHLL} in this case \ifextended(see \cref{fig:joint_ghll_1000})\else\cite{Ertl2021}\fi, because $|\symSetA\cup\symSetB|=\num{e3}$ is significantly smaller than $\symHarmonic_\symNumReg \symNumReg$ with $\symNumReg=4096$, and hence, the condition for its applicability is not satisfied as discussed in \cref{sec:comp_hll}.

We also applied the new approach to \ac{MH} and used the explicit estimation formula \eqref{equ:new_mh_estimate}. \cref{fig:joint_minhash} only shows the results for $|\symSetA\cup\symSetB|=10^6$, because the results are very similar for $|\symSetA\cup\symSetB|=10^3$, as expected \ifextended(see \cref{fig:joint_minhash_1000})\else\cite{Ertl2021}\fi. The results are also almost indistinguishable from those obtained for SetSketch with $\symBase=1.001$ shown in \cref{fig:joint_setsketch1}. 
Thus, SetSketch is able to give almost the same estimation accuracy using significantly less space as 2-byte registers are sufficient for $\symBase=1.001$. 
We also analyzed the state-of-the-art \ac{MH} estimator based on the fraction of equal components. For the case that the cardinalities are not known, they have been estimated using \eqref{equ:mh_cardinality_estimator}. Our new estimator has a significantly better overall performance in both cases with known and unknown cardinalities, respectively. Only for inclusion coefficients and difference sizes the original method led to a slightly smaller error for $\symJaccard=0.5$ and $|\symSetA\setminus\symSetB|/|\symSetB\setminus\symSetA| > 1$. However, the new estimator clearly dominated, if sets have a small overlap. In contrast to the original estimator, it also dominates the inclusion-exclusion principle in all cases.

Finally, due to its similarity to \ac{GHLL} for which our approach was able to improve joint estimation, we also considered HyperMinHash. \cref{fig:joint_hyperminhash} shows the results for a HyperMinHash with parameter $\symHyperMinHashParameter=10$ which corresponds to a base of $\symBase=2^{-2^{10}}\approx 1.000677$ as discussed in \cref{sec:hyperminhash_intro}. The original estimator of HyperMinHash led to similar results as the original estimator of \ac{MH} (compare \cref{fig:joint_minhash}). 
In contrast to the original estimator, which is based on empirically determined constants, our approach is solely based on theory and never performed worse than the inclusion-exclusion principle. Furthermore, since the estimation error was significantly reduced in many cases, our estimator seems to be superior to the original estimation approach. 
The results for the case that the cardinalities are known show perfect agreement with the theoretical \ac{RMSE} originally derived for SetSketch1. Therefore, at least for large sets, HyperMinHash seems to encode joint information equally well as SetSketch with corresponding base. However, the big advantage of SetSketch is that the same estimator can be applied for any cardinalities, while estimation from \ac{GHLL} or HyperMinHash sketches requires special treatment of small sets. The original HyperMinHash estimator delegates to a second estimator in this case.

\subsection{Performance}
\begin{figure}
  \centering
  \includegraphics[width=\linewidth]{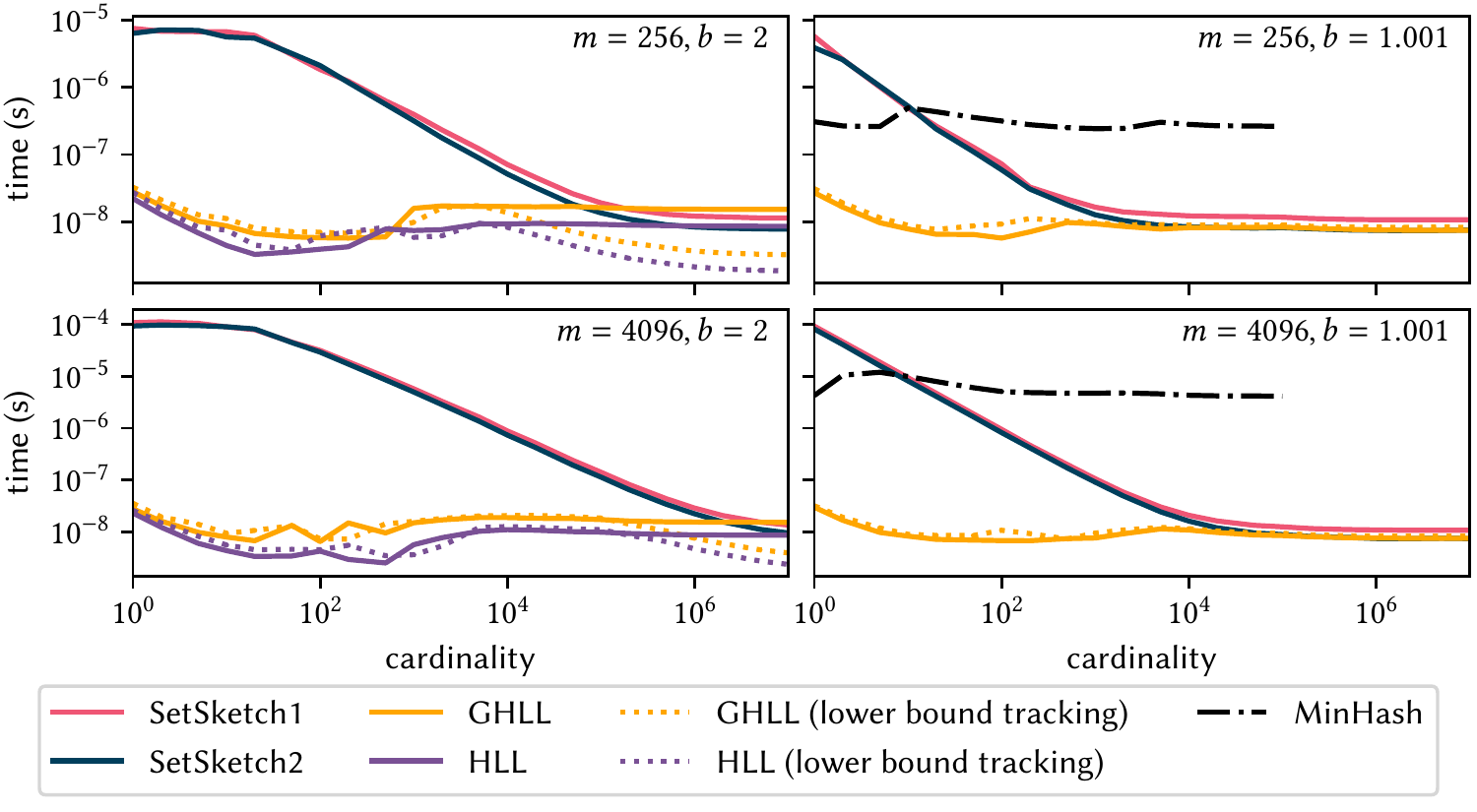}
  \caption{\boldmath The average recording time per element as function of set cardinality.}
  \label{fig:performance}
\end{figure}
The runtime behavior of a sketch is crucial for its practicality. Therefore, we measured the recording time for sets with cardinalities up to $10^7$. Instead of generating a set first, we simply generated 64-bit random numbers on the fly using the very fast Wyrand generator \cite{Yi2021}. As hashing of more complex items is usually more expensive than generating random values, this experimental setup amplifies the runtime differences compared to reality, where elements also need to be loaded from main memory. Furthermore, we excluded initialization costs, which are comparable for all considered data structures and which are a negligible factor for large cardinalities.
For each cardinality considered, we performed 1000 simulation runs. In each run, we measured the time needed to generate the corresponding number of random elements and to insert them into the data structure. Afterwards, we calculated the average recording time per element. We measured the performance for SetSketch1, SetSketch2, \ac{GHLL} with $\symBase\in\lbrace 1.001, 2 \rbrace$, \ac{MH}, and \ac{HLL} and sketch sizes $\symNumReg\in\lbrace 256,4096\rbrace$ on a Dell Precision 5530 notebook with an Intel Core i9-8950HK processor.

\cref{fig:performance} summarizes the results. The recording speed was roughly independent of the set size for \ac{HLL} and \ac{GHLL} due to stochastic averaging. We also implemented variants of both algorithms that use lower bound tracking as described in \cref{sec:lower_bound_tracking}. If update values are not greater than the current lower bound, they will not be able to change any register. This can avoid many relatively costly random accesses to individual registers. This optimization, which is simpler than other approaches \cite{Ertl2017, Reviriego2020}, led to a significant performance improvement for \ac{HLL} and \ac{GHLL} with $\symBase=2$ at large cardinalities. However, this optimization did not speed up recording for \ac{GHLL} with $\symBase=1.001$, because the calculation of register update values is the bottleneck here as it is more expensive for small $\symBase$.

For \ac{MH} the recording time was also independent of the cardinality, but many orders of magnitude slower due to its $\symBigO(\symNumReg)$ insert operation, which was also the reason why we only simulated sets up to a size of $10^5$.
The insert operations of both SetSketch variants have almost identical performance characteristics. As expected, insertions are quite slow for small sets. However, with increasing cardinality, the tracked lower bound $\symRegValVariateLow$ will also increase and finally leads to better and better recording speeds. For large sets, SetSketch is several orders of magnitude faster than \ac{MH} and SetSketch2 even achieves the performance of the non-optimized (without lower bound tracking) versions of \ac{HLL} and \ac{GHLL}. We observed a quicker decay for $\symBase=1.001$ than for $\symBase=2$. The reason is that register values are updated more frequently for smaller $\symBase$, which allows $\symRegValVariateLow$ to be raised earlier.

\section{Conclusion}
We have presented a new data structure for sets that combines the properties of \ac{MH} and \ac{HLL} and allows fine-tuning between estimation accuracy and memory efficiency. The presented estimators for cardinality and joint quantities do not require empirical calibration, give consistent errors over the full cardinality range without having to consider special cases such as small sets, and can be evaluated in a numerically robust fashion. The simple estimation from SetSketches, plus locality sensitivity as a bonus, compensate for the slower recording speed compared to \ac{HLL} and \ac{GHLL} for small sets. The developed joint estimator can also be straightforwardly applied to existing \ac{MH}, \ac{HLL}, \ac{GHLL}, and HyperMinHash data structures to obtain more accurate results for joint quantities than with the corresponding state-of-the-art estimators in many cases. We expect that our estimation approach will also work for other set similarity measures or joint quantities that have not been covered by our experiments.

\bibliographystyle{ACM-Reference-Format}

\ifextended

\appendix 

\section{Proofs}

\begin{lemma}
\label{lem:max_binary_op}
The only commutative ($\symBinaryOperation(\symX,\symY) = \symBinaryOperation(\symY,\symX)$) and idempotent ($\symBinaryOperation(\symX,\symX)=\symX$) binary operation $\symBinaryOperation$ on $\mathbb{R}$, that satisfies $\symBinaryOperation(\symX,\symY)\geq \symX$ and $\symY\geq\symZ\Rightarrow\symBinaryOperation(\symX,\symY)\geq\symBinaryOperation(\symX,\symZ)$, is the maximum function $\symBinaryOperation(\symX,\symY) = \max(\symX,\symY)$.
\end{lemma}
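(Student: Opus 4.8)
The plan is to first verify that $\max$ satisfies all four required properties, which is immediate, and then to establish uniqueness by trapping $\symBinaryOperation(\symX,\symY)$ between a lower and an upper bound that both equal $\max(\symX,\symY)$. The key observation is that although the hypotheses are stated asymmetrically---the bound $\symBinaryOperation(\symX,\symY)\geq\symX$ singles out the first argument while the monotonicity condition singles out the second---commutativity lets me reduce to the single case $\symX\geq\symY$.

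First I would fix arbitrary reals with $\symX\geq\symY$ and read off the lower bound directly from the hypothesis $\symBinaryOperation(\symX,\symY)\geq\symX$; since $\symX\geq\symY$ this already gives $\symBinaryOperation(\symX,\symY)\geq\symX=\max(\symX,\symY)$. Next I would obtain the matching upper bound: because $\symX\geq\symY$, applying monotonicity in the second argument (with $\symX$ in the role of the larger value) yields $\symBinaryOperation(\symX,\symX)\geq\symBinaryOperation(\symX,\symY)$, and idempotency gives $\symBinaryOperation(\symX,\symX)=\symX$, so $\symBinaryOperation(\symX,\symY)\leq\symX=\max(\symX,\symY)$. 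The two bounds force $\symBinaryOperation(\symX,\symY)=\max(\symX,\symY)$ whenever $\symX\geq\symY$.

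Finally, for the remaining case $\symX\leq\symY$ I would invoke commutativity: $\symBinaryOperation(\symX,\symY)=\symBinaryOperation(\symY,\symX)$, and since now $\symY\geq\symX$ the previous argument applies verbatim to $\symBinaryOperation(\symY,\symX)$, giving $\symBinaryOperation(\symY,\symX)=\symY=\max(\symX,\symY)$. This exhausts all pairs of reals and completes the proof. I do not anticipate a genuine obstacle here; the only delicate point is the bookkeeping between the two asymmetric hypotheses, making sure commutativity is applied precisely where the two arguments need to exchange roles so that the bound on the first argument and the monotonicity in the second can each be brought to bear.
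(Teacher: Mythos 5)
Your proof is correct and follows essentially the same route as the paper's: the squeeze $\symX=\symBinaryOperation(\symX,\symX)\geq\symBinaryOperation(\symX,\symY)\geq\symX$ for $\symX\geq\symY$ (idempotency plus monotonicity in the second argument for the upper bound, the hypothesis $\symBinaryOperation(\symX,\symY)\geq\symX$ for the lower bound), with commutativity handling the case $\symX\leq\symY$. No issues.
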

\begin{proof}
Assume first $\symX\geq\symY$. Then we have $\symX = \symBinaryOperation(\symX,\symX) \geq \symBinaryOperation(\symX, \symY) \geq \symX$. Therefore $\symBinaryOperation(\symX, \symY) = \symBinaryOperation(\symY, \symX) = \symX$. For the case $\symX\leq\symY$ we analogously get $\symBinaryOperation(\symY, \symX) = \symBinaryOperation(\symX, \symY) = \symY$. Hence, combining both cases gives $\symBinaryOperation(\symX,\symY) = \max(\symX,\symY)$.
\end{proof}

\begin{lemma}
\label{lem:func_equation}
Any nonconstant function $\symDistributionFunc:\mathbb{Z}\times(0,\infty) \rightarrow[0,1]$, that satisfies $\symDistributionFunc(\symRegVal; \symCardinality_1 + \symCardinality_2) 
=
\symDistributionFunc(\symRegVal; \symCardinality_1) 
\cdot
\symDistributionFunc(\symRegVal; \symCardinality_2)$
and
$\symDistributionFunc(\symRegVal;\symCardinality) = \symDistributionFunc(\symRegVal + 1;\symCardinality\symBase)$
for all 
$\symRegVal\in\mathbb{Z}$ and $\symCardinality, \symCardinality_1, \symCardinality_2\in\mathbb{R}_{>0}$
and constant $\symBase>1$, has the shape $\symDistributionFunc(\symRegVal;\symCardinality) = e^{-\symCardinality\symExponentialRate \symBase^{-\symRegVal}}$ with some constant $\symExponentialRate> 0$.
\end{lemma}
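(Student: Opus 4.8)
The plan is to treat the two functional equations in sequence: first fix the integer argument $\symRegVal$ and solve the multiplicative equation in the continuous variable $\symCardinality$, and then use the shift equation to determine how the resulting rate depends on $\symRegVal$. For fixed $\symRegVal$, write $\symAnyFunc(\symCardinality) := \symDistributionFunc(\symRegVal;\symCardinality)$, so that the first hypothesis reads $\symAnyFunc(\symCardinality_1 + \symCardinality_2) = \symAnyFunc(\symCardinality_1)\,\symAnyFunc(\symCardinality_2)$ for all $\symCardinality_1,\symCardinality_2 > 0$, which is Cauchy's exponential equation. First I would observe that a single zero value propagates (via $\symAnyFunc(\symCardinality_0) = \symAnyFunc(\symCardinality_0/2)^2$ and $\symAnyFunc(\symCardinality_0+\symCardinality) = \symAnyFunc(\symCardinality_0)\,\symAnyFunc(\symCardinality)$) to force $\symAnyFunc \equiv 0$, a degenerate branch I treat separately below. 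Otherwise $\symAnyFunc > 0$ everywhere, and since $\symAnyFunc$ takes values in $[0,1]$ the function $\log\symAnyFunc$ is well defined, nonpositive, and satisfies the additive Cauchy equation. Being bounded above on $(0,\infty)$, this additive solution must be linear, so $\log\symAnyFunc(\symCardinality) = -\symCardinality\,c(\symRegVal)$ for some constant $c(\symRegVal) \geq 0$; that is, $\symDistributionFunc(\symRegVal;\symCardinality) = e^{-\symCardinality\,c(\symRegVal)}$.

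Next I would feed this form into the shift equation $\symDistributionFunc(\symRegVal;\symCardinality) = \symDistributionFunc(\symRegVal+1;\symCardinality\symBase)$, which becomes $e^{-\symCardinality\,c(\symRegVal)} = e^{-\symCardinality\symBase\,c(\symRegVal+1)}$ for all $\symCardinality > 0$. Comparing exponents yields the recurrence $c(\symRegVal+1) = c(\symRegVal)/\symBase$, whose solution on $\mathbb{Z}$ is $c(\symRegVal) = \symExponentialRate\,\symBase^{-\symRegVal}$ with $\symExponentialRate := c(0) \geq 0$. Substituting back gives $\symDistributionFunc(\symRegVal;\symCardinality) = e^{-\symCardinality\symExponentialRate\symBase^{-\symRegVal}}$, the claimed shape. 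To finish I would dispose of the two degenerate possibilities using the nonconstancy hypothesis: if $\symAnyFunc \equiv 0$ for one value of $\symRegVal$, the shift equation propagates this to every $\symRegVal$, so $\symDistributionFunc \equiv 0$ is constant, a contradiction; and if $\symExponentialRate = 0$ then $\symDistributionFunc \equiv 1$ is likewise constant. Hence $\symExponentialRate > 0$.

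The main obstacle is the regularity step when solving Cauchy's equation: without any continuity or measurability assumption, the additive equation admits wild, non-measurable solutions. What rescues the argument is the codomain constraint $\symDistributionFunc \in [0,1]$, which makes $\log\symAnyFunc$ bounded above; an additive function bounded above on a set of positive measure is necessarily linear, and this rules out the pathological solutions. I would cite this standard fact rather than reprove it. The remaining steps---the halving argument for the zero branch, solving the geometric recurrence, and the nonconstancy dichotomy---are routine.
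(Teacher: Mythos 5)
Your proof is correct and follows essentially the same route as the paper's: reduce to the exponential Cauchy equation in the cardinality variable and then use the shift equation to pin down the $\symRegVal$-dependence of the rate (the paper solves the Cauchy equation only at $\symRegVal=0$ and transports via $\symDistributionFunc(\symRegVal;\symCardinality)=\symDistributionFunc(0;\symCardinality\symBase^{-\symRegVal})$, whereas you solve it at every $\symRegVal$ and derive the recurrence $c(\symRegVal+1)=c(\symRegVal)/\symBase$ --- a cosmetic difference). Your explicit handling of the zero branch and of the regularity step (boundedness of $\log\symAnyFunc$ ruling out pathological additive solutions) is a welcome elaboration of points the paper leaves implicit.
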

\begin{proof}
Setting $\symRegVal=0$ gives $\symDistributionFunc(0; \symCardinality_1 + \symCardinality_2) 
=
\symDistributionFunc(0; \symCardinality_1) 
\cdot
\symDistributionFunc(0; \symCardinality_2)$. This corresponds, to the exponential Cauchy equation, which has either the solution $\symDistributionFunc(0; \symCardinality) = 0$ or $\symDistributionFunc(0; \symCardinality) = e^{-\symCardinality\symExponentialRate}$. The constant $\symExponentialRate$ must be nonnegative, because $\symDistributionFunc$ has domain $[0,1]$. Repeated application of the second equation yields $\symDistributionFunc(\symRegVal;\symCardinality)=\symDistributionFunc(0;\symCardinality \symBase^{-\symRegVal})$. Therefore, the only potential nonconstant solutions are given by $\symDistributionFunc(\symRegVal;\symCardinality)=e^{-\symCardinality\symExponentialRate\symBase^{-\symRegVal}}$ with $\symExponentialRate>0$. It can be easily verified that these are indeed solutions of the given system of equations. 
\end{proof}

\begin{lemma}
\label{lem:setsketch2}
If $\symStatisticX\sim\symExponential(\symExponentialRate)$ is exponentially distributed with rate $\symExponentialRate$, $\symNumReg\in\mathbb{N}$, and $\symGamma_\symIndexJ:=\frac{1}{\symExponentialRate}\log(1 + \frac{\symIndexJ}{\symNumReg-\symIndexJ})$, the probability that $\symStatisticX\in[\symGamma_{\symIndexJ-1},\symGamma_{\symIndexJ})$ with $1\leq \symIndexJ \leq \symNumReg$ is equal to $\frac{1}{\symNumReg}$.
\end{lemma}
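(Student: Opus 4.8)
The plan is to compute the probability directly from the cumulative distribution function of the exponential law, exploiting the fact that the intervals $[\symGamma_{\symIndexJ-1},\symGamma_\symIndexJ)$ partition the support $[0,\infty)$. First I would recall that $\symStatisticX\sim\symExponential(\symExponentialRate)$ has distribution function $\symProbability(\symStatisticX\leq x)=1-e^{-\symExponentialRate x}$ for $x\geq 0$, so that the probability mass assigned to a single interval is the telescoping difference
\[
\symProbability(\symStatisticX\in[\symGamma_{\symIndexJ-1},\symGamma_\symIndexJ)) = e^{-\symExponentialRate\symGamma_{\symIndexJ-1}} - e^{-\symExponentialRate\symGamma_\symIndexJ}.
\]

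The central step is to evaluate $e^{-\symExponentialRate\symGamma_\symIndexJ}$ in closed form. Substituting the definition of $\symGamma_\symIndexJ$ gives $\symExponentialRate\symGamma_\symIndexJ = \log(1+\frac{\symIndexJ}{\symNumReg-\symIndexJ}) = \log\frac{\symNumReg}{\symNumReg-\symIndexJ}$, hence $e^{-\symExponentialRate\symGamma_\symIndexJ}=\frac{\symNumReg-\symIndexJ}{\symNumReg}$. Plugging this into the difference above yields $\frac{\symNumReg-(\symIndexJ-1)}{\symNumReg} - \frac{\symNumReg-\symIndexJ}{\symNumReg} = \frac{1}{\symNumReg}$, which is exactly the claim. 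So the entire content of the lemma reduces to simplifying the logarithm in the definition of $\symGamma_\symIndexJ$.

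The only point that needs care, and the closest thing to an obstacle, is the behaviour at the two endpoints $\symGamma_0$ and $\symGamma_\symNumReg$. The definition gives $\symGamma_0=\frac{1}{\symExponentialRate}\log 1 = 0$ and $\symGamma_\symNumReg=\infty$, since the argument of the logarithm diverges as $\symIndexJ\to\symNumReg$. I would check that the closed form $e^{-\symExponentialRate\symGamma_\symIndexJ}=(\symNumReg-\symIndexJ)/\symNumReg$ stays consistent in these limiting cases, evaluating to $1$ for $\symIndexJ=0$ and to $0$ for $\symIndexJ=\symNumReg$, which match $e^{-\symExponentialRate\cdot 0}=1$ and $\lim_{x\to\infty}e^{-\symExponentialRate x}=0$, respectively. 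Consequently the two boundary intervals also carry mass exactly $\frac{1}{\symNumReg}$, no special treatment is required, and the lemma follows as a direct one-line calculation once the logarithm has been simplified.
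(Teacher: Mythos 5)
Your proposal is correct and follows exactly the same route as the paper's proof: write the interval probability as the difference $e^{-\symExponentialRate\symGamma_{\symIndexJ-1}}-e^{-\symExponentialRate\symGamma_{\symIndexJ}}$ of the exponential CDF and simplify $e^{-\symExponentialRate\symGamma_{\symIndexJ}}$ to $(\symNumReg-\symIndexJ)/\symNumReg$, which telescopes to $1/\symNumReg$. Your additional check of the endpoint cases $\symIndexJ=1$ and $\symIndexJ=\symNumReg$ is a harmless extra verification that the paper leaves implicit.
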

\begin{proof}
Using the cumulative distribution function of the exponential distribution $\symProbability(\symStatisticX<\symX) = 
1 - e^{-\symExponentialRate \symX}$ we get
\begin{multline*}
\symProbability(\symStatisticX\in[\symGamma_{\symIndexJ-1},\symGamma_{\symIndexJ}))
=
\symProbability(\symStatisticX<\symGamma_{\symIndexJ}) - \symProbability(\symStatisticX<\symGamma_{\symIndexJ-1})
\\
\begin{aligned}
&=
(1 - e^{-\symExponentialRate \symGamma_{\symIndexJ}})
-
(1 - e^{-\symExponentialRate \symGamma_{\symIndexJ-1}})
=
e^{-\symExponentialRate \symGamma_{\symIndexJ-1}}
-
e^{-\symExponentialRate \symGamma_{\symIndexJ}}
\\
&=
\frac{1}{1 + \frac{\symIndexJ-1}{\symNumReg-\symIndexJ+1}}
-\frac{1}{1 + \frac{\symIndexJ}{\symNumReg-\symIndexJ}}
=
\frac{\symNumReg-\symIndexJ+1}{\symNumReg}
-\frac{\symNumReg-\symIndexJ}{\symNumReg}
=
\frac{1}{\symNumReg}.
\end{aligned}
\end{multline*}
\end{proof}

\begin{lemma}
\label{lem:singleton}
If $\symExponentialRate \geq \log(\symNumReg/\symSmallProbability)/\symBase$ with $\symSmallProbability>0$, the probability, that any register value of a SetSketch representing some nonempty set is negative, is bounded by $\symSmallProbability$, hence 
$\symProbability(\min\nolimits_\symIndexI \symRegValVariate_\symIndexI < 0) \leq \symSmallProbability$.
\end{lemma}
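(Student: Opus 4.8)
The plan is to reduce the event $\{\min_\symIndexI \symRegValVariate_\symIndexI < 0\}$ to a union of single-register events and to control each such event directly through the explicit cumulative distribution \eqref{equ:set_sketch_distribution}. First I would note that, since the registers are integer-valued, the event $\symRegValVariate_\symIndexI < 0$ coincides with $\symRegValVariate_\symIndexI \leq -1$, so \eqref{equ:set_sketch_distribution} gives the closed form $\symProbability(\symRegValVariate_\symIndexI < 0) = \symDistributionFunc(-1;\symCardinality) = e^{-\symCardinality\symExponentialRate\symBase}$, using $\symBase^{-(-1)} = \symBase$.

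Next I would identify the worst case over admissible cardinalities. A nonempty set has $\symCardinality \geq 1$, and the expression $e^{-\symCardinality\symExponentialRate\symBase}$ is strictly decreasing in $\symCardinality$ (as $\symExponentialRate>0$ and $\symBase>1$). Hence the probability that a fixed register is negative is largest for the smallest cardinality $\symCardinality = 1$, where it equals $e^{-\symExponentialRate\symBase}$. It therefore suffices to establish the bound under $\symCardinality = 1$, and the general claim follows a fortiori.

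I would then apply a union bound across the $\symNumReg$ registers: $\symProbability(\min_\symIndexI \symRegValVariate_\symIndexI < 0) = \symProbability(\bigcup_{\symIndexI=1}^\symNumReg \{\symRegValVariate_\symIndexI < 0\}) \leq \sum_{\symIndexI=1}^\symNumReg \symProbability(\symRegValVariate_\symIndexI < 0) \leq \symNumReg\, e^{-\symExponentialRate\symBase}$. Notably this step requires no independence assumption, so the argument applies verbatim to both SetSketch1 and the correlated SetSketch2. Finally, substituting the hypothesis $\symExponentialRate \geq \log(\symNumReg/\symSmallProbability)/\symBase$ yields $\symExponentialRate\symBase \geq \log(\symNumReg/\symSmallProbability)$, whence $e^{-\symExponentialRate\symBase} \leq \symSmallProbability/\symNumReg$, and the union bound collapses to $\symNumReg\cdot(\symSmallProbability/\symNumReg) = \symSmallProbability$, as desired.

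There is no genuine obstacle here: the result is an elementary tail estimate. The only two points that warrant explicit care are the monotonicity in $\symCardinality$ that pins the worst case at the singleton set $\symCardinality = 1$, and the use of the union bound in place of an exact inclusion–exclusion computation, which keeps the proof distribution-free with respect to the (possibly dependent) joint law of the registers.
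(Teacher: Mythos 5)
Your proof is correct, and it takes a genuinely different route from the paper's. The paper splits into two cases: for SetSketch1 it exploits independence to compute $\symProbability(\min_\symIndexI \symRegValVariate_\symIndexI < 0\mid\symCardinality=1) = 1-(1-e^{-\symExponentialRate\symBase})^\symNumReg$ exactly and then applies Bernoulli's inequality, while for SetSketch2 it identifies $\min_\symIndexI\symRegValVariate_\symIndexI$ with $\lfloor 1-\log_\symBase\symPoint_\symNumReg\rfloor$ for the largest order point $\symPoint_\symNumReg\sim\symExponential(\symExponentialRate;\log(\symNumReg)/\symExponentialRate,\infty)$ and computes the tail of the minimum directly. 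You replace both computations by a single union bound over the $\symNumReg$ registers, which needs only the marginal law \eqref{equ:set_sketch_distribution} and no independence, and you pin the worst case at $\symCardinality=1$ by monotonicity of $e^{-\symCardinality\symExponentialRate\symBase}$ rather than by the paper's coupling-style remark that register values are smallest for singletons; the resulting bound $\symNumReg e^{-\symExponentialRate\symBase}$ is identical. What the paper's longer route buys is the exact (rather than upper-bounded) collision probability in each case, which in particular shows the union bound is essentially tight for SetSketch2; what your route buys is brevity and uniformity across both variants. The one point you should make explicit is that for SetSketch2 the marginal distribution of each register really is \eqref{equ:set_sketch_distribution}: this holds because the random assignment of the interval-stratified points $\symPoint_1<\dots<\symPoint_\symNumReg$ makes each hash value a uniform mixture of truncated exponentials over equiprobable intervals, hence marginally $\symExponential(\symExponentialRate)$ — a fact the paper asserts in the text but which your argument silently leans on.
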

\begin{proof}
By definition, the register values are smallest when the data structure represents just a set with a single element. 
Therefore, and because $\symExponentialRate \geq \log(\symNumReg/\symSmallProbability)/\symBase\Leftrightarrow \symNumReg e^{-\symExponentialRate\symBase} \leq \symSmallProbability$ it is sufficient to show that 
$\symProbability(\min\nolimits_\symIndexI \symRegValVariate_\symIndexI < 0 \mid \symCardinality = 1) \leq \symNumReg e^{-\symExponentialRate\symBase}$
holds for SetSketch1 and SetSketch2.

For SetSketch1 with $\symCardinality = 1$, the register values $\symRegValVariate_\symIndexI$ are independent and distributed according to \eqref{equ:reg_val_distribution} as $\symProbability(\symRegValVariate_\symIndexI \leq \symRegVal\mid \symCardinality = 1) = e^{-\symExponentialRate\symBase^{-\symRegVal}}$. Therefore,
$
\symProbability(\min\nolimits_\symIndexI \symRegValVariate_\symIndexI < 0\mid \symCardinality = 1)
=
1 - \symProbability(\min\nolimits_\symIndexI \symRegValVariate_\symIndexI \geq 0\mid \symCardinality = 1)
=
1 - (\symProbability(\symRegValVariate_\symIndexI \geq 0\mid \symCardinality = 1))^\symNumReg
=
1 - (1-\symProbability(\symRegValVariate_\symIndexI \leq -1\mid \symCardinality = 1))^\symNumReg
=
1 - (1-e^{-\symExponentialRate\symBase})^{\symNumReg}
\leq
1 - (1-\symNumReg e^{-\symExponentialRate\symBase})
=
\symNumReg e^{-\symExponentialRate\symBase}.
$
Here we used Bernoulli's inequality $(1+\symX)^\symNumReg\geq 1+\symNumReg\symX$ with $\symX = -e^{-\symExponentialRate\symBase}\geq -1$.

For SetSketch2 with $\symCardinality = 1$, the smallest register value is distributed according to \eqref{equ:correlated} as 
$\min\nolimits_\symIndexI \symRegValVariate_\symIndexI \sim \lfloor 1 - \log_\symBase \symPoint_\symNumReg\rfloor$
with $\symPoint_\symNumReg\sim \symExponential(\symExponentialRate;\symGamma_{\symNumReg-1},\symGamma_{\symNumReg})\sim\symExponential(\symExponentialRate;\log(\symNumReg)/\symExponentialRate,\infty)\Leftrightarrow \symExponentialRate\symPoint_\symNumReg-\log\symNumReg\sim\symExponential(1)$. Hence,
$
\symProbability(\min\nolimits_\symIndexI \symRegValVariate_\symIndexI < 0\mid \symCardinality = 1)
=
\symProbability(\lfloor 1 - \log_\symBase \symPoint_\symNumReg\rfloor < 0)
=
\symProbability(1 - \log_\symBase \symPoint_\symNumReg < 0)
=
\symProbability(\symPoint_\symNumReg > \symBase)
=
\symProbability(\symExponentialRate\symPoint_\symNumReg-\log\symNumReg > \symExponentialRate\symBase-\log\symNumReg)
=
\min(1, \symNumReg^{-\symExponentialRate\symBase})
\leq
\symNumReg e^{-\symExponentialRate\symBase}
$
where we used that $\symStatisticX = \symExponentialRate\symPoint_\symNumReg-\log\symNumReg$ is exponentially distributed with rate 1 and that $\symProbability(\symStatisticX>\symX) = \min(1, e^{-\symX})$ for all $\symX\in\mathbb{R}$.
\end{proof}

\begin{lemma}
\label{lem:max}
If $\symMaxRegularValue \geq \lfloor \log_\symBase \frac{\symNumReg\symCardinality_\textnormal{max}\symExponentialRate}{\symSmallProbability}\rfloor$ with $\symSmallProbability>0$, the probability, that any register value of a SetSketch representing a set with a maximum cardinality of $\symCardinality_\textnormal{max}\geq 1$ is greater than $\symMaxRegularValue+1$, is bounded by $\symSmallProbability$, hence
$\symProbability(\max\nolimits_\symIndexI \symRegValVariate_\symIndexI > \symMaxRegularValue+1) \leq \symSmallProbability$.
\end{lemma}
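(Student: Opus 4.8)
The plan is to reduce the claim to the worst-case cardinality, apply a union bound over the $\symNumReg$ registers, and linearize the exponential. The starting point is that the marginal distribution of each register is $\symProbability(\symRegValVariate_\symIndexI \leq \symRegVal) = \symDistributionFunc(\symRegVal;\symCardinality) = e^{-\symCardinality\symExponentialRate\symBase^{-\symRegVal}}$ from \eqref{equ:set_sketch_distribution}, which holds for both SetSketch1 and SetSketch2 since each register is a maximum over exponentially distributed hash values that are independent across the elements. For fixed $\symRegVal$ this is decreasing in $\symCardinality$, so the tail $\symProbability(\symRegValVariate_\symIndexI > \symRegVal) = 1 - \symDistributionFunc(\symRegVal;\symCardinality)$ is increasing in $\symCardinality$. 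Hence, over all sets of cardinality at most $\symCardinality_\textnormal{max}$, the probability of exceeding $\symMaxRegularValue+1$ is largest at $\symCardinality = \symCardinality_\textnormal{max}$, and it suffices to bound that case. This mirrors \cref{lem:singleton}, except that the extreme cardinality is now the largest rather than the smallest.

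Next I would apply the union bound. Since the registers are identically distributed, $\symProbability(\max_\symIndexI \symRegValVariate_\symIndexI > \symMaxRegularValue+1) \leq \sum_{\symIndexI=1}^{\symNumReg} \symProbability(\symRegValVariate_\symIndexI > \symMaxRegularValue+1) = \symNumReg\bigl(1 - e^{-\symCardinality_\textnormal{max}\symExponentialRate\symBase^{-(\symMaxRegularValue+1)}}\bigr)$. It is worth stating explicitly that the union bound needs no independence, so it covers SetSketch2's correlated registers just as well as SetSketch1's independent ones; this is the reason this argument does not need the Bernoulli-inequality route used in \cref{lem:singleton}. Using the elementary inequality $1 - e^{-\symX} \leq \symX$ then yields the clean estimate $\symProbability(\max_\symIndexI \symRegValVariate_\symIndexI > \symMaxRegularValue+1) \leq \symNumReg\symCardinality_\textnormal{max}\symExponentialRate\symBase^{-(\symMaxRegularValue+1)}$.

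Finally I would substitute the hypothesis on $\symMaxRegularValue$. From $\symMaxRegularValue \geq \lfloor \log_\symBase \frac{\symNumReg\symCardinality_\textnormal{max}\symExponentialRate}{\symSmallProbability}\rfloor$ together with $\lfloor \symX \rfloor > \symX - 1$ we get $\symMaxRegularValue + 1 > \log_\symBase \frac{\symNumReg\symCardinality_\textnormal{max}\symExponentialRate}{\symSmallProbability}$, so $\symBase^{\symMaxRegularValue+1} > \frac{\symNumReg\symCardinality_\textnormal{max}\symExponentialRate}{\symSmallProbability}$ and therefore $\symNumReg\symCardinality_\textnormal{max}\symExponentialRate\symBase^{-(\symMaxRegularValue+1)} < \symSmallProbability$, which closes the chain. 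There is no genuine technical difficulty here; the only step that demands care is the off-by-one bookkeeping between $\symMaxRegularValue$ and $\symMaxRegularValue+1$ in tandem with the floor, since that is precisely where a careless estimate would lose a factor of $\symBase$ and break the bound.
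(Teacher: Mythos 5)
Your proof is correct, but it takes a genuinely different route from the paper's. The paper computes the \emph{exact} distribution of $\max_\symIndexI \symRegValVariate_\symIndexI$ for a singleton set, which forces a case split: for SetSketch1 it exploits independence of the registers to get $1-(e^{-\symExponentialRate\symBase^{-\symMaxRegularValue-1}})^\symNumReg$, while for SetSketch2 it works out the truncated-exponential density of the smallest sampled point $\symPoint_1$ (which governs the largest register) and arrives at $\min(1,\symNumReg(1-e^{-\symExponentialRate\symBase^{-\symMaxRegularValue-1}}))$; it then lifts the singleton bound to cardinality $\symCardinality_\textnormal{max}$ via independence across elements and Bernoulli's inequality. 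You replace all of this with a union bound over the $\symNumReg$ registers applied directly to the marginal CDF \eqref{equ:set_sketch_distribution} at cardinality $\symCardinality_\textnormal{max}$, followed by $1-e^{-\symX}\leq\symX$, which lands on the same intermediate quantity $\symNumReg\symCardinality_\textnormal{max}\symExponentialRate\symBase^{-\symMaxRegularValue-1}<\symSmallProbability$. What your approach buys is the elimination of the SetSketch1/SetSketch2 case split, since the union bound needs no independence across registers --- as you correctly point out. The one step you should justify rather than assert is that the marginal of a single SetSketch2 register really does follow \eqref{equ:set_sketch_distribution}: this holds because the random assignment of the points $\symPoint_1<\dots<\symPoint_\symNumReg$ to registers makes each $\symHashFunc_\symIndexI(\symElement)$ marginally an equal-weight mixture of the truncated exponentials $\symExponential(\symExponentialRate;\symGamma_{\symIndexJ-1},\symGamma_\symIndexJ)$, which by \cref{lem:setsketch2} is exactly $\symExponential(\symExponentialRate)$, and the hash values for distinct elements are independent. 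Your floor bookkeeping ($\lfloor\symX\rfloor+1>\symX$, hence $\symBase^{\symMaxRegularValue+1}>\symNumReg\symCardinality_\textnormal{max}\symExponentialRate/\symSmallProbability$) matches the paper's and is handled correctly.
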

\begin{proof}
It is sufficient to prove the statement for the extreme case where the cardinality $\symCardinality$ equals $\symCardinality_\textnormal{max}$. $\symMaxRegularValue \geq \lfloor \log_\symBase \frac{\symNumReg\symCardinality_\textnormal{max}\symExponentialRate}{\symSmallProbability}\rfloor$ implies $\symMaxRegularValue + 1 > \log_\symBase \frac{\symNumReg\symCardinality_\textnormal{max}\symExponentialRate}{\symSmallProbability}$ and further $\symCardinality_\textnormal{max}\symNumReg\symExponentialRate\symBase^{-\symMaxRegularValue-1} < \symSmallProbability$. 
Since the claimed statement is obvious for $\symSmallProbability\geq 1$, we can assume $\symSmallProbability<1$ and hence $\symNumReg\symExponentialRate\symBase^{-\symMaxRegularValue-1} < \frac{\symSmallProbability}{\symCardinality_\textnormal{max}}< 1$.
First, we consider $\symProbability(\max\nolimits_\symIndexI \symRegValVariate_\symIndexI > \symMaxRegularValue+1\mid \symCardinality = 1)$ for a set with cardinality 1. 

For SetSketch1, this probability is given by 
$
\symProbability(\max\nolimits_\symIndexI \symRegValVariate_\symIndexI > \symMaxRegularValue+1\mid \symCardinality = 1)
=
1 - \symProbability(\max\nolimits_\symIndexI \symRegValVariate_\symIndexI \leq \symMaxRegularValue+1\mid \symCardinality = 1)
=
1 - \symProbability(\symRegValVariate_\symIndexI \leq \symMaxRegularValue+1\mid\symCardinality=1)^{\symNumReg}
=
1 - (e^{-\symExponentialRate\symBase^{-\symMaxRegularValue-1}})^\symNumReg
=
1 - e^{-\symNumReg\symExponentialRate\symBase^{-\symMaxRegularValue-1}}
\leq
\symNumReg\symExponentialRate\symBase^{-\symMaxRegularValue-1}
<
\frac{\symSmallProbability}{\symCardinality_\textnormal{max}}
$ where we used the inequality $1 - e^{-\symX} \leq \symX$ for $\symX\geq 0$. 

We can find the same upper bound for SetSketch2. For $\symCardinality=1$ $\max\nolimits_\symIndexI \symRegValVariate_\symIndexI$ is distributed according to \eqref{equ:correlated} as
$\max\nolimits_\symIndexI \symRegValVariate_\symIndexI \sim \lfloor 1 - \log_\symBase \symPoint_1\rfloor$
with $\symPoint_1\sim \symExponential(\symExponentialRate;\symGamma_{0},\symGamma_{1})\sim\symExponential(\symExponentialRate;0,\frac{1}{\symExponentialRate}\log(1+ \frac{1}{\symNumReg-1}))$. The probability density of this truncated exponential distribution with support $[0, \frac{1}{\symExponentialRate}\log(1+ \frac{1}{\symNumReg-1})]$ is given by $\symDensity_{\symPoint_1}(\symX) = \symNumReg\symExponentialRate e^{-\symExponentialRate\symX}$. Hence, 
$\symProbability(\max\nolimits_\symIndexI \symRegValVariate_\symIndexI>\symMaxRegularValue+1\mid\symCardinality=1)
=
\symProbability(\lfloor 1 - \log_\symBase \symPoint_1\rfloor > \symMaxRegularValue+1)
=
\symProbability(\symPoint_1 \leq \symBase^{-\symMaxRegularValue-1})
=
\min(1, \int_0^{\symBase^{-\symMaxRegularValue-1}}\symDensity_{\symPoint_1}(\symX) d\symX)
=
\min(1, \symNumReg(1-e^{-\symExponentialRate\symBase^{-\symMaxRegularValue-1}}))
\leq
\symNumReg(1-e^{-\symExponentialRate\symBase^{-\symMaxRegularValue-1}})
\leq
\symNumReg\symExponentialRate\symBase^{-\symMaxRegularValue-1}
<
\frac{\symSmallProbability}{\symCardinality_\textnormal{max}}
$. Here we used again the inequality $1 - e^{-\symX} \leq \symX$ for $\symX\geq 0$.

For a set with cardinality $\symCardinality_\textnormal{max}$ the probability that $\max\nolimits_\symIndexI \symRegValVariate_\symIndexI > \symMaxRegularValue+1$
can therefore be bounded by
$\symProbability(\max\nolimits_\symIndexI \symRegValVariate_\symIndexI > \symMaxRegularValue+1\mid\symCardinality=\symCardinality_\textnormal{max})
=
1 - \symProbability(\max\nolimits_\symIndexI \symRegValVariate_\symIndexI \leq \symMaxRegularValue+1\mid\symCardinality=\symCardinality_\textnormal{max})
=
1 - (\symProbability(\max\nolimits_\symIndexI \symRegValVariate_\symIndexI \leq \symMaxRegularValue+1\mid\symCardinality=1))^{\symCardinality_\textnormal{max}}
=
1-(1-\symProbability(\max\nolimits_\symIndexI \symRegValVariate_\symIndexI > \symMaxRegularValue+1\mid\symCardinality=1))^{\symCardinality_\textnormal{max}}
<
1  - (1 - \frac{\symSmallProbability}{\symCardinality_\textnormal{max}})^{\symCardinality_\textnormal{max}}
\leq
1  - (1 - \symCardinality_\textnormal{max}\frac{\symSmallProbability}{\symCardinality_\textnormal{max}})
=
\symSmallProbability
$ for both SetSketch variants when using Bernoulli's inequality $(1+\symX)^\symCardinality\geq 1+\symCardinality\symX$ with $\symX = -\frac{\symSmallProbability}{\symCardinality_\textnormal{max}} > -1$.
\end{proof}

\begin{lemma}
\label{lem:xi}
The function 
$\symPowerSeriesFunc_\symBase^\symIntPower(\symX):= 
\frac{\log \symBase}{\Gamma(\symIntPower)} \sum_{\symRegVal = -\infty}^\infty
\symBase^{\symIntPower (\symX-\symRegVal)}
e^{-\symBase^{\symX-\symRegVal}}
$ with $\symBase>1$ and $\symIntPower>0$ is periodic with period 1 and oscillates around 1. Its Fourier series is 
$\symPowerSeriesFunc_\symBase^\symIntPower(\symX)= 
 1 + \frac{2}{\Gamma(\symIntPower)}\operatorname{Re}\!\left(\sum_{\symIndexL=1}^\infty \Gamma\!\left(\symIntPower - \textstyle\frac{\symImaginary 2\pi \symIndexL}{\log \symBase}\right)e^{ \symImaginary 2\pi\symIndexL \symX}\right)$ where $\Gamma$ denotes the gamma function.
\end{lemma}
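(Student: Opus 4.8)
The plan is to establish the period-1 periodicity directly and then identify $\symPowerSeriesFunc_\symBase^\symIntPower$ with its Fourier series by computing the Fourier coefficients in closed form, the key analytic input being a reduction to the integral representation of the gamma function.

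First I would verify periodicity. Replacing $\symX$ by $\symX+1$ in the defining sum and reindexing $\symRegVal\mapsto\symRegVal+1$ leaves the summand unchanged, so $\symPowerSeriesFunc_\symBase^\symIntPower(\symX+1)=\symPowerSeriesFunc_\symBase^\symIntPower(\symX)$. Before computing coefficients I would record the decay of the summand $g_\symRegVal(\symX):=\symBase^{\symIntPower(\symX-\symRegVal)}e^{-\symBase^{\symX-\symRegVal}}$: as $\symRegVal\to+\infty$ the factor $e^{-\symBase^{\symX-\symRegVal}}\to1$ and $g_\symRegVal$ decays geometrically like $\symBase^{-\symIntPower\symRegVal}$, while as $\symRegVal\to-\infty$ the factor $e^{-\symBase^{\symX-\symRegVal}}$ forces doubly-exponential decay. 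Hence the series together with all of its termwise derivatives converges uniformly on compact $\symX$-intervals, so $\symPowerSeriesFunc_\symBase^\symIntPower$ is smooth and its Fourier series converges uniformly back to it; this regularity also licenses the interchange of summation and integration used below.

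The heart of the argument is the computation of $\symFourierCoefficient_\symIndexL=\int_0^1\symPowerSeriesFunc_\symBase^\symIntPower(\symX)e^{-\symImaginary 2\pi\symIndexL\symX}\,d\symX$. Substituting the series and exchanging sum and integral, each term contributes $\int_0^1 g_\symRegVal(\symX)e^{-\symImaginary 2\pi\symIndexL\symX}\,d\symX$; the shift $t=\symX-\symRegVal$ together with $e^{-\symImaginary 2\pi\symIndexL\symRegVal}=1$ turns the sum of integrals over the intervals $[-\symRegVal,1-\symRegVal]$ into a single unfolded integral over $\mathbb{R}$, so that
\[
\symFourierCoefficient_\symIndexL=\frac{\log\symBase}{\Gamma(\symIntPower)}\int_{-\infty}^{\infty}\symBase^{\symIntPower t}e^{-\symBase^{t}}e^{-\symImaginary 2\pi\symIndexL t}\,dt.
\]
The change of variables $u=\symBase^{t}$, for which $dt=du/(u\log\symBase)$ and $e^{-\symImaginary 2\pi\symIndexL t}=u^{-\symImaginary 2\pi\symIndexL/\log\symBase}$, then yields
\[
\symFourierCoefficient_\symIndexL=\frac{1}{\Gamma(\symIntPower)}\int_0^{\infty}u^{\,\symIntPower-\symImaginary 2\pi\symIndexL/\log\symBase-1}e^{-u}\,du=\frac{1}{\Gamma(\symIntPower)}\,\Gamma\!\left(\symIntPower-\frac{\symImaginary 2\pi\symIndexL}{\log\symBase}\right),
\]
where the integral is the gamma function evaluated at an argument of real part $\symIntPower>0$ and hence converges.

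Finally I would assemble $\symPowerSeriesFunc_\symBase^\symIntPower(\symX)=\sum_{\symIndexL=-\infty}^{\infty}\symFourierCoefficient_\symIndexL e^{\symImaginary 2\pi\symIndexL\symX}$. The term $\symIndexL=0$ gives $\symFourierCoefficient_0=1$, so the mean over one period is $1$ and the function oscillates around $1$. For $\symIndexL\neq 0$, since $\symIntPower$ and $\log\symBase$ are real we have $\Gamma(\bar z)=\overline{\Gamma(z)}$, whence $\symFourierCoefficient_{-\symIndexL}=\overline{\symFourierCoefficient_\symIndexL}$; pairing the $\pm\symIndexL$ terms produces $2\operatorname{Re}(\symFourierCoefficient_\symIndexL e^{\symImaginary 2\pi\symIndexL\symX})$, and summing over $\symIndexL\geq1$ gives exactly the claimed series. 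The main obstacle is not any single computation but the analytic bookkeeping: establishing the decay bounds that simultaneously justify the termwise interchange of sum and integral and the uniform convergence of the Fourier series to $\symPowerSeriesFunc_\symBase^\symIntPower$. Once those are in place, the unfolding and the gamma-function identification are routine.
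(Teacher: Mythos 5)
Your proposal is correct and follows essentially the same route as the paper: periodicity by reindexing, computing the Fourier coefficients by unfolding the sum of integrals over $[0,1]$ into a single integral over $\mathbb{R}$, and the substitution $u=\symBase^{t}$ reducing it to the gamma integral. The only cosmetic difference is that you use the two-sided Fourier convention and pair conjugate terms, whereas the paper builds the factor $2$ into its one-sided coefficients; your extra attention to the decay estimates justifying the interchanges is a welcome addition rather than a deviation.
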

\begin{proof}
The periodicity follows directly from  $\symPowerSeriesFunc_\symBase^\symIntPower(\symX) = \symPowerSeriesFunc_\symBase^\symIntPower(\symX + 1)$. 
Therefore, the Fourier series can be written as 
\begin{equation*}
\symPowerSeriesFunc_\symBase^\symIntPower(\symX)
=
\frac{\symFourierCoefficient_0}{2}
+
\operatorname{Re}\!\left(
\sum_{\symIndexL = 1}^\infty
\symFourierCoefficient_\symIndexL
e^{2\pi\symImaginary\symIndexL\symX}
\right)
\quad
\end{equation*}
with coefficients
\begin{align*}
\symFourierCoefficient_\symIndexL &= 
2\int_0^1
\symPowerSeriesFunc_\symBase^\symIntPower(\symX)
e^{-2\pi\symImaginary\symIndexL\symX}
d\symX
\\
&=
\frac{2\log \symBase}{\Gamma(\symIntPower)} 
\int_0^1
\sum_{\symRegVal = -\infty}^\infty
\symBase^{\symIntPower (\symX-\symRegVal)}
e^{-\symBase^{\symX-\symRegVal}}
e^{-2\pi\symImaginary\symIndexL\symX}
d\symX
\\
&=
\frac{2\log \symBase}{\Gamma(\symIntPower)} 
\int_{-\infty}^{\infty}
\symBase^{\symIntPower \symX}
e^{-\symBase^{\symX}}
e^{-2\pi\symImaginary\symIndexL\symX}
d\symX
\\
&=
\frac{2\log \symBase}{\Gamma(\symIntPower)} 
\int_{-\infty}^{\infty}
\symBase^{\symX\left(\symIntPower -\textstyle\frac{\symImaginary 2\pi \symIndexL}{\log \symBase}\right)}
e^{-\symBase^{\symX}}
d\symX.
\end{align*}
Substitution $\symBase^\symX=\symY$ finally gives
\begin{equation*}
\symFourierCoefficient_\symIndexL
= 
\frac{2}{\Gamma(\symIntPower)} 
\int_{0}^{\infty}
\symY^{\symIntPower-\frac{\symImaginary 2\pi\symIndexL}{\log\symBase}-1}
e^{-\symY}
d\symY
=
\frac{2}{\Gamma(\symIntPower)} 
\Gamma\!\left(\symIntPower - \textstyle\frac{\symImaginary 2\pi \symIndexL}{\log \symBase}\right)
\end{equation*}
where we used the definition of the gamma function 
\begin{equation*}
\Gamma(\symZ):=\int_0^\infty \symY^{\symZ-1}e^{-\symY}d\symY.
\end{equation*}
\end{proof}

\begin{lemma}
\label{lemma:inquality_sinh}
$\frac{\symY\symIndexL}{\sinh(\symY\symIndexL)}\leq\left(\frac{\symY}{\sinh(\symY)}\right)^\symIndexL$ holds for all integers $\symIndexL \geq 1$ and all $\symY\in\mathbb{R}\setminus\lbrace 0\rbrace$.
\end{lemma}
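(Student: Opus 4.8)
The plan is to reduce the inequality to the concavity of $\log(y/\sinh y)$. Since $\sinh$ is odd, the map $y \mapsto y/\sinh(y)$ is even, so both sides of the claimed inequality depend only on $|y|$ and I may assume $y > 0$. Write $f(y) := y/\sinh(y)$; because $\sinh(y) > y$ for $y > 0$ we have $0 < f(y) < 1$, and both sides of the asserted bound are positive. Taking logarithms, the claim $f(l y) \le f(y)^{l}$ is therefore equivalent to $g(l y) \le l\, g(y)$, where $g(y) := \log f(y) = \log y - \log \sinh y$.

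Next I would show that $g$ is concave on $(0,\infty)$. A direct computation gives $g'(y) = \tfrac{1}{y} - \coth y$ and hence $g''(y) = \tfrac{1}{\sinh^2 y} - \tfrac{1}{y^2}$. Since $\sinh y \ge y > 0$ for $y > 0$, we have $\sinh^2 y \ge y^2$, so $g''(y) \le 0$ and $g$ is concave. Moreover $g$ extends continuously to $0$ with $g(0) = \lim_{y \to 0^+} \log(y/\sinh y) = 0$.

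Finally, a concave function on $[0,\infty)$ that vanishes at the origin has a non-increasing chord slope $y \mapsto g(y)/y$ (equivalently, $g$ is subadditive). Applying this to the pair $y \le l y$ for $l \ge 1$ gives $g(l y)/(l y) \le g(y)/y$, i.e. $g(l y) \le l\, g(y)$; exponentiating recovers $f(l y) \le f(y)^{l}$, which is the assertion. The same conclusion can be reached inductively via subadditivity, since $g(l y) = g((l-1)y + y) \le g((l-1)y) + g(y)$ iterates to $g(l y) \le l\, g(y)$.

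I expect the only delicate point to be the sign of $g''$, which rests entirely on the elementary estimate $\sinh y \ge y$; the remaining step is the standard fact that concavity through the origin forces a non-increasing secant slope. Notably, nothing in the argument uses that $l$ is an integer, so the bound in fact holds for every real $l \ge 1$.
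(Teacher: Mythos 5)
Your proof is correct, but it takes a genuinely different route from the paper. The paper proceeds by induction on $l$: the inductive step bounds $1+\frac{1}{k}$ by $\frac{y}{\tanh y}+\frac{1}{k}\frac{yk}{\tanh(yk)}$ (using $y/\tanh y\geq 1$) and then collapses the resulting sum via the addition formula $\sinh(y(k+1))=\sinh(yk)\cosh(y)+\cosh(yk)\sinh(y)$ to obtain $\frac{y(k+1)}{\sinh(y(k+1))}\leq\frac{y}{\sinh y}\cdot\frac{yk}{\sinh(yk)}$, after which the induction hypothesis finishes. You instead pass to $g(y)=\log(y/\sinh y)$, show $g''(y)=\frac{1}{\sinh^2 y}-\frac{1}{y^2}\leq 0$ from the single estimate $\sinh y\geq y$, and invoke the standard fact that a concave function vanishing at the origin has non-increasing secant slope, giving $g(ly)\leq l\,g(y)$. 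Both arguments are complete; yours is shorter once the concavity fact is granted, rests on one elementary inequality rather than hyperbolic addition identities, and yields the strictly stronger conclusion that the bound holds for every real $l\geq 1$ (the paper only needs, and only proves, the integer case). The paper's induction is more self-contained in that it avoids any appeal to convexity theory, which may be why the author chose it for an appendix of elementary lemmas.
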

\begin{proof}
We use induction. The case $\symIndexL=1$ is trivial. Assume that the inequality holds for $\symIndexL=\symIndexK$, which is used together with $1\leq\frac{\symY}{\tanh(\symY)}$ to prove the inequality for $\symIndexL = \symIndexK+1$:
\begin{multline*}
\frac{\symY(\symIndexK+1)}{\sinh(\symY(\symIndexK+1))}
=
\frac{\symY\symIndexK}{\sinh(\symY(\symIndexK+1))}
\left(
1
+
\frac{1}{\symIndexK}
\right)
\\
\begin{aligned}
&\leq
\frac{\symY\symIndexK}{\sinh(\symY(\symIndexK+1))}
\left(
\frac{\symY}{\tanh(\symY)} + \frac{1}{\symIndexK}\frac{\symY\symIndexK}{\tanh(\symY\symIndexK)}
\right)
\\
&=
\frac{\symY^2\symIndexK}{\sinh(\symY(\symIndexK+1))}
\left(
\frac{\cosh(\symY)}{\sinh(\symY)} + \frac{\cosh(\symY\symIndexK)}{\sinh(\symY\symIndexK)}
\right)
\\
&=
\frac{\symY^2\symIndexK}{\sinh(\symY(\symIndexK+1))}
\left(
\frac{\sinh(\symY\symIndexK)\cosh(\symY)+\cosh(\symY\symIndexK)\sinh(\symY)}{\sinh(\symY)\sinh(\symY\symIndexK)}
\right)
\\
&=
\frac{\symY^2\symIndexK}{\sinh(\symY(\symIndexK+1))}
\left(
\frac{\sinh(\symY(\symIndexK+1))}{\sinh(\symY)\sinh(\symY\symIndexK)}
\right)
\\
&=
\frac{\symY}{\sinh(\symY)}
\frac{\symY\symIndexK}{\sinh(\symY\symIndexK)}
\leq
\left(\frac{\symY}{\sinh(\symY)}\right)^{\symIndexK+1}.
\end{aligned}
\end{multline*}
\end{proof}

\begin{figure}
  \centering
  \includegraphics[width=0.5\linewidth]{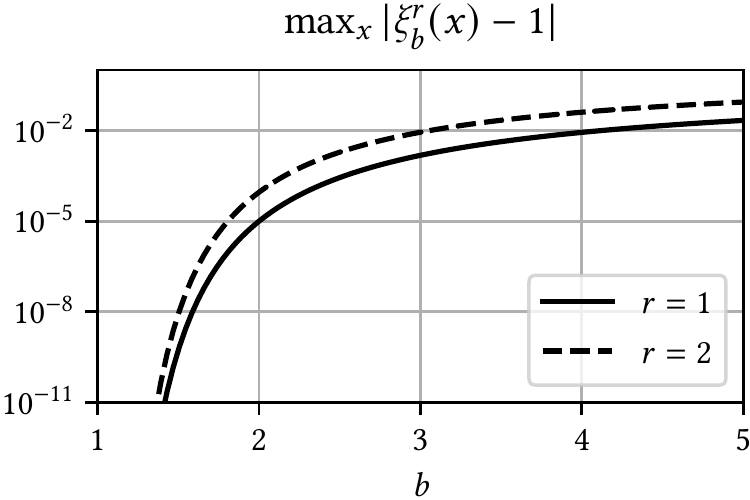}
  \caption{\boldmath Maximum deviation from 1 for $\symPowerSeriesFunc_\symBase^1$ and $\symPowerSeriesFunc_\symBase^2$ as function of $\symBase$.}
  \label{fig:helper_func_error}
\end{figure}

\begin{lemma}
\label{lem:xi1_approx}
The function $\symPowerSeriesFunc_\symBase^1$ defined in \cref{lem:xi} satisfies
\begin{equation*}
|\symPowerSeriesFunc_\symBase^1(\symX) - 1| \leq 
\frac{2}{\sqrt{
\sinh\!\left(\frac{2\pi^2}{\log \symBase}\right)
\frac{\log \symBase}{2\pi^2}
}
-1
}
\end{equation*}
for all $\symX\in\mathbb{R}$ and $\symBase>1$.
\end{lemma}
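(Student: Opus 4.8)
The plan is to start directly from the Fourier series established in \cref{lem:xi}. Specialising to $\symIntPower = 1$ and using $\Gamma(1) = 1$, that lemma gives $\symPowerSeriesFunc_\symBase^1(\symX) = 1 + 2\operatorname{Re}\bigl(\sum_{\symIndexL=1}^\infty \Gamma(1 - \frac{\symImaginary 2\pi\symIndexL}{\log\symBase})\, e^{\symImaginary 2\pi\symIndexL\symX}\bigr)$, so that the deviation from $1$ is governed entirely by the magnitudes of the Fourier coefficients. Applying $|\operatorname{Re}(\symZ)| \leq |\symZ|$ together with the triangle inequality to the series yields the $\symX$-independent bound $|\symPowerSeriesFunc_\symBase^1(\symX) - 1| \leq 2\sum_{\symIndexL=1}^\infty \bigl|\Gamma(1 - \frac{\symImaginary 2\pi\symIndexL}{\log\symBase})\bigr|$. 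This reduces the entire problem to estimating the magnitude of the gamma function along the vertical line with real part $1$.

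The key analytic ingredient is the classical identity $|\Gamma(1 + \symImaginary\symY)|^2 = \frac{\pi\symY}{\sinh(\pi\symY)}$, which follows from Euler's reflection formula $\Gamma(\symZ)\Gamma(1-\symZ) = \pi/\sin(\pi\symZ)$ combined with $\Gamma(1-\symImaginary\symY) = \overline{\Gamma(1+\symImaginary\symY)}$ for real $\symY$. Writing $\symY_\symIndexL := \frac{2\pi\symIndexL}{\log\symBase}$ and abbreviating $w := \frac{2\pi^2}{\log\symBase}$ so that $\pi\symY_\symIndexL = w\symIndexL$, this identity gives $\bigl|\Gamma(1 - \symImaginary\symY_\symIndexL)\bigr|^2 = \frac{w\symIndexL}{\sinh(w\symIndexL)}$ for every term of the sum.

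At this point the decisive step is to invoke \cref{lemma:inquality_sinh} with $\symY = w$, which gives $\frac{w\symIndexL}{\sinh(w\symIndexL)} \leq \bigl(\frac{w}{\sinh(w)}\bigr)^\symIndexL$ and hence $\bigl|\Gamma(1 - \symImaginary\symY_\symIndexL)\bigr| \leq \bigl(\frac{w}{\sinh(w)}\bigr)^{\symIndexL/2}$. This is precisely what converts the coefficient sum into a geometric series: since $\sinh(w) > w > 0$, the common ratio $t := \sqrt{w/\sinh(w)}$ lies in $(0,1)$, so $2\sum_{\symIndexL=1}^\infty t^\symIndexL = \frac{2t}{1-t} = \frac{2}{t^{-1}-1} = \frac{2}{\sqrt{\sinh(w)/w}-1}$. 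Substituting $w = \frac{2\pi^2}{\log\symBase}$ and noting $\sqrt{\sinh(w)/w} = \sqrt{\sinh(\frac{2\pi^2}{\log\symBase})\frac{\log\symBase}{2\pi^2}}$ reproduces the claimed bound exactly.

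The main obstacle is really the gamma-magnitude identity together with recognising the role of \cref{lemma:inquality_sinh}: everything downstream is a routine geometric summation, but seeing that this somewhat peculiar inequality about $\symY/\sinh(\symY)$ is exactly the tool that turns $\frac{w\symIndexL}{\sinh(w\symIndexL)}$ into a summable geometric form is the conceptual crux. One should also note that this route deliberately discards the cancellation inside $\operatorname{Re}$, so the resulting bound is uniform in $\symX$ but not tight; this is harmless in practice, since $w$ is large for $\symBase\leq 2$ (for instance $w\approx 28.5$ at $\symBase=2$), making the ratio $t$ minuscule and the crude triangle-inequality estimate more than sharp enough to justify the approximation $\symPowerSeriesFunc_\symBase^1\approx 1$.
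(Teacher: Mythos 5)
Your proposal is correct and follows essentially the same route as the paper's own proof: the Fourier series from \cref{lem:xi}, the triangle inequality, the identity $|\Gamma(1-\symY\symImaginary)|^2 = \pi\symY/\sinh(\pi\symY)$, the reduction via \cref{lemma:inquality_sinh} to a geometric series, and the closed-form summation. No gaps; the argument matches the paper step for step.
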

For $\symBase=2$ the right-hand side is less than \num{9.885e-6}. Since $\sinh(\symX) \sim \frac{e^\symX}{2}$ as $\symX\rightarrow\infty$, we have $|\symPowerSeriesFunc_\symBase^1(\symX) - 1| = \symBigO(e^{-\frac{\pi^2}{\log\symBase}}(\log\symBase)^{-\frac{1}{2}})$ which shows the rapid decay of the deviation from 1 as $\symBase\rightarrow 1$ (see \cref{fig:helper_func_error}). Therefore, $\symPowerSeriesFunc_\symBase^1(\symX)$ is well approximated 
by the constant $1$ for not too large values of $\symBase$, especially for $\symBase\leq 2$.
\begin{proof}
We use the Fourier series of $\symPowerSeriesFunc_\symBase^1$ (\cref{lem:xi}), the identity $|\Gamma(1-\symY\symImaginary)|^2 = \frac{\pi\symY}{\sinh(\pi\symY)}$, and \cref{lemma:inquality_sinh}:
\begin{multline*}
|\symPowerSeriesFunc_\symBase^1(\symX) - 1|
=
2\left|\operatorname{Re}\!\left(\sum_{\symIndexL=1}^\infty \Gamma\!\left(1 - \textstyle\frac{\symImaginary 2\pi \symIndexL}{\log \symBase}\right)e^{ \symImaginary 2\pi\symIndexL \symX}\right)\right|
\leq
2
\sum_{\symIndexL=1}^\infty
\left|
\Gamma\!\left(1 - \textstyle\frac{\symImaginary 2\pi \symIndexL}{\log \symBase}\right)
\right|
\\
=
2
\sum_{\symIndexL=1}^\infty
\sqrt{
\frac{\textstyle\frac{2\pi^2\symIndexL}{\log \symBase}}{\sinh\!\left(\textstyle\frac{2\pi^2  \symIndexL}{\log \symBase}\right)}
}
\leq
2
\sum_{\symIndexL=1}^\infty
\left(
\frac{\textstyle\frac{2\pi^2}{\log \symBase}}{\sinh\!\left(\textstyle\frac{2\pi^2}{\log \symBase}\right)}
\right)^\frac{\symIndexL}{2}
\\
=
\frac{2}{\sqrt{
\sinh\!\left(\frac{2\pi^2}{\log \symBase}\right)
\frac{\log \symBase}{2\pi^2}
}
-1
}.
\end{multline*}
Here we used the closed-form formula $\sum_{\symIndexL=1}^\infty \symZ^\symIndexL = \frac{1}{\frac{1}{\symZ}-1}$ for the geometric series which converges, because $\symZ=\frac{\symY}{\sinh\symY}\in(0,1)$ with $\symY = \frac{2\pi^2}{\log \symBase}>0$.
\end{proof}

\begin{lemma}
\label{lemma:inequality2}
$1+\symY\symIndexL^2\leq (1+\symY)^\symIndexL$ holds for all integers $\symIndexL \geq 1$ and all real $\symY\geq2$.
\end{lemma}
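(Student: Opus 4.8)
The plan is to proceed by induction on $\symIndexL$, mirroring the structure of \cref{lemma:inquality_sinh}. The base case $\symIndexL = 1$ is immediate, since both sides equal $1 + \symY$, so equality holds. For the inductive step I assume the inequality for some $\symIndexL = \symIndexK \geq 1$ and multiply the hypothesis $1 + \symY\symIndexK^2 \leq (1+\symY)^\symIndexK$ by the strictly positive factor $1 + \symY$ to get $(1+\symY)^{\symIndexK+1} \geq (1+\symY)(1 + \symY\symIndexK^2)$. It then suffices to show that this lower bound dominates the target value $1 + \symY(\symIndexK+1)^2$.

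The key algebraic step is to expand both expressions: the lower bound is $(1+\symY)(1+\symY\symIndexK^2) = 1 + \symY(1 + \symIndexK^2) + \symY^2\symIndexK^2$, while the target is $1 + \symY(\symIndexK^2 + 2\symIndexK + 1)$. Cancelling the common terms $1 + \symY(1+\symIndexK^2)$ reduces the required inequality to $\symY^2\symIndexK^2 \geq 2\symY\symIndexK$, i.e., after dividing by the positive quantity $\symY\symIndexK$, to the clean condition $\symY\symIndexK \geq 2$. This holds because $\symY \geq 2$ and $\symIndexK \geq 1$, which closes the induction.

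I do not expect any serious obstacle; the only point needing care is that each division (by $1+\symY$ in the inductive step and by $\symY\symIndexK$ in the final reduction) is by a strictly positive quantity, which is guaranteed by $\symY \geq 2 > 0$ and $\symIndexK \geq 1$. An equally clean alternative avoids induction entirely: by the binomial theorem, and since every term is nonnegative for $\symY > 0$, one has $(1+\symY)^\symIndexL \geq 1 + \symIndexL\symY + \binom{\symIndexL}{2}\symY^2$, whereupon the desired bound reduces to $\binom{\symIndexL}{2}\symY \geq \symIndexL(\symIndexL-1)$, i.e., to $\symY \geq 2$ for $\symIndexL \geq 2$ (with the trivial equality case $\symIndexL = 1$ handled separately). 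Either route works, and I would likely present the induction to stay consistent with the surrounding lemmas.
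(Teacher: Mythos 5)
Your proposal is correct. Both of your routes work: the inductive step is sound (multiplying the hypothesis by $1+\symY>0$ and reducing the comparison to $\symY^2\symIndexK^2\geq 2\symY\symIndexK$, i.e.\ $\symY\symIndexK\geq 2$, which indeed follows from $\symY\geq 2$ and $\symIndexK\geq 1$), and the binomial computation checks out as well. Interestingly, your \emph{primary} route (induction) is not the one the paper takes: the paper uses exactly your ``alternative,'' writing $1+\symY\symIndexL^2 = 1+\symY\symIndexL+\symY\symIndexL(\symIndexL-1) \leq 1+\symY\symIndexL+\tfrac{\symY}{2}\symY\symIndexL(\symIndexL-1) = \binom{\symIndexL}{0}+\symY\binom{\symIndexL}{1}+\symY^2\binom{\symIndexL}{2} \leq (1+\symY)^\symIndexL$, where the middle inequality is where $\symY\geq 2$ enters. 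The two arguments are of essentially equal difficulty; the binomial version is a one-line chain and makes the role of the hypothesis $\symY\geq 2$ slightly more transparent, while the induction mirrors the structure of \cref{lemma:inquality_sinh} as you intended. Either would be acceptable here.
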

\begin{proof}
The case $\symIndexL=1$ is trivial. For $\symIndexL\geq2$ we have
\begin{multline*}
1+\symY\symIndexL^2 = 1 + \symY\symIndexL + \symY\symIndexL(\symIndexL-1) \leq 
1 + \symY\symIndexL + \frac{\symY}{2} \symY\symIndexL(\symIndexL-1)
\\
=
{\symIndexL\choose 0}+\symY {\symIndexL\choose 1}+\symY^2 {\symIndexL\choose 2}
\leq
\sum_{\symIndexJ=0}^{\symIndexL}\symY^\symIndexJ {\symIndexL\choose \symIndexJ}
=
(1+\symY)^\symIndexL.
\end{multline*}
\end{proof}
\begin{lemma}
\label{lem:xi2_approx}
The function $\symPowerSeriesFunc_\symBase^2$ defined in \cref{lem:xi} satisfies
\begin{equation*}
|\symPowerSeriesFunc_\symBase^2(\symX) - 1| 
\leq 
\frac{2}{\sqrt{
\frac{\sinh\left(\frac{2\pi^2}{\log \symBase}\right)
\log^3 \symBase
}{
2\pi^2(\log^2\symBase+4\pi^2)
}
}
-1
}
\end{equation*}
for all $\symX\in\mathbb{R}$ and all $\symBase$ with $1 < \symBase \leq e^{\sqrt{2}\pi}\approx\num{85.02}$.
\end{lemma}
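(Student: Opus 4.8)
The plan is to mirror the proof of \cref{lem:xi1_approx} but to control the extra polynomial factor that appears when $\symIntPower=2$. First I would invoke the Fourier series of \cref{lem:xi} with $\symIntPower=2$, where $\Gamma(2)=1$, and apply the triangle inequality to the real part, obtaining $|\symPowerSeriesFunc_\symBase^2(\symX) - 1| \leq 2\sum_{\symIndexL=1}^\infty |\Gamma(2 - \frac{\symImaginary 2\pi\symIndexL}{\log\symBase})|$. Writing $t_\symIndexL := \frac{2\pi\symIndexL}{\log\symBase}$, the task reduces to a geometric bound on each $|\Gamma(2-\symImaginary t_\symIndexL)|$.

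The key identity I would use is the recurrence $\Gamma(z+1)=z\Gamma(z)$, which gives $\Gamma(2-\symImaginary t_\symIndexL) = (1-\symImaginary t_\symIndexL)\Gamma(1-\symImaginary t_\symIndexL)$ and hence $|\Gamma(2-\symImaginary t_\symIndexL)|^2 = (1+t_\symIndexL^2)\,|\Gamma(1-\symImaginary t_\symIndexL)|^2 = (1+t_\symIndexL^2)\frac{\pi t_\symIndexL}{\sinh(\pi t_\symIndexL)}$, where the last step reuses the identity $|\Gamma(1-\symY\symImaginary)|^2 = \frac{\pi\symY}{\sinh(\pi\symY)}$ already employed in \cref{lem:xi1_approx}. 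Setting $\symY := \frac{2\pi^2}{\log\symBase}$ so that $\pi t_\symIndexL = \symY\symIndexL$, I would bound the sinh factor by $\frac{\symY\symIndexL}{\sinh(\symY\symIndexL)} \leq (\frac{\symY}{\sinh\symY})^\symIndexL$ via \cref{lemma:inquality_sinh}, exactly as in the $\symIntPower=1$ case.

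The new ingredient is the prefactor $1 + t_\symIndexL^2 = 1 + \frac{4\pi^2}{\log^2\symBase}\symIndexL^2$, which on its own would destroy geometric summability. Here I would apply \cref{lemma:inequality2} with its variable set equal to $\frac{4\pi^2}{\log^2\symBase}$, yielding $1 + \frac{4\pi^2}{\log^2\symBase}\symIndexL^2 \leq (1 + \frac{4\pi^2}{\log^2\symBase})^\symIndexL$. The hypothesis of that lemma, namely that this variable be at least $2$, translates into $\frac{4\pi^2}{\log^2\symBase}\geq 2 \Leftrightarrow \log\symBase\leq\sqrt{2}\pi \Leftrightarrow \symBase\leq e^{\sqrt{2}\pi}$, which is precisely the restriction stated in the lemma; I expect recognizing this correspondence to be the main subtlety of the argument. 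Combining the two bounds gives $|\Gamma(2-\symImaginary t_\symIndexL)| \leq z^\symIndexL$ with $z := \sqrt{\frac{2\pi^2(\log^2\symBase+4\pi^2)}{\log^3\symBase\,\sinh(2\pi^2/\log\symBase)}}$.

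Finally, provided $z<1$ (which holds exactly where the claimed right-hand side is positive, i.e.\ whenever $\frac{1}{z}>1$), I would sum the geometric series $2\sum_{\symIndexL=1}^\infty z^\symIndexL = \frac{2z}{1-z} = \frac{2}{1/z - 1}$. Substituting $1/z = \sqrt{\frac{\log^3\symBase\,\sinh(2\pi^2/\log\symBase)}{2\pi^2(\log^2\symBase+4\pi^2)}}$ reproduces the stated bound. Apart from the sinh identity and the two auxiliary inequalities already available, every step is routine algebra, so the only genuine work lies in the factorization via $\Gamma(z+1)=z\Gamma(z)$ and in matching \cref{lemma:inequality2}'s hypothesis to the admissible range of $\symBase$.
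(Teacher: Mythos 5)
Your proposal follows the paper's proof essentially step for step: the same triangle-inequality bound on the Fourier series, the same identity $|\Gamma(2-\symImaginary\symY)|^2=(1+\symY^2)\frac{\pi\symY}{\sinh(\pi\symY)}$, the same use of \cref{lemma:inquality_sinh} and \cref{lemma:inequality2} (with the hypothesis $\frac{4\pi^2}{\log^2\symBase}\geq 2$ matched to $\symBase\leq e^{\sqrt{2}\pi}$), and the same geometric summation. The only point you leave slightly implicit is that the ratio of the geometric series is actually below $1$ on the admissible range; the paper closes this with the elementary observation that $\symY+\symY^3/6<\sinh\symY$, so nothing substantive is missing.
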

For $\symBase=2$ the right-hand side is less than \num{9.015e-5}. Since $\sinh(\symX) \sim \frac{e^\symX}{2}$ as $\symX\rightarrow\infty$, we have $|\symPowerSeriesFunc_\symBase^2(\symX) - 1| = \symBigO(e^{-\frac{\pi^2}{\log\symBase}}(\log\symBase)^{-\frac{3}{2}})$ which shows the rapid decay of the deviation from 1 as $\symBase\rightarrow 1$ (see \cref{fig:helper_func_error}). Therefore, $\symPowerSeriesFunc_\symBase^2(\symX)$ is well approximated by the constant $1$ for not too large values of $\symBase$, especially for $\symBase\leq 2$.
\begin{proof}
$\symBase \leq e^{\sqrt{2}\pi}$ implies $\frac{4\pi^2}{\log^2\symBase}\geq 2$. Furthermore, 
we use the identity $|\Gamma(2-\symY\symImaginary)|^2 = (1 + \symY^2)\frac{\pi\symY}{\sinh(\pi\symY)}$, \cref{lemma:inquality_sinh}, and \cref{lemma:inequality2}:
\begin{multline*}
|\symPowerSeriesFunc_\symBase^2(\symX) - 1|
=
2\left|\operatorname{Re}\!\left(\sum_{\symIndexL=1}^\infty \Gamma\!\left(2 - \textstyle\frac{\symImaginary 2\pi \symIndexL}{\log \symBase}\right)e^{ \symImaginary 2\pi\symIndexL \symX}\right)\right|
\\
\begin{aligned}
&\leq
2
\sum_{\symIndexL=1}^\infty
\left|
\Gamma\!\left(2 - \textstyle\frac{\symImaginary 2\pi \symIndexL}{\log \symBase}\right)
\right|
=
2
\sum_{\symIndexL=1}^\infty
\sqrt{1+\frac{4\pi^2\symIndexL^2}{\log^2\symBase}}
\sqrt{
\frac{\textstyle\frac{2\pi^2\symIndexL}{\log \symBase}}{\sinh\!\left(\textstyle\frac{2\pi^2  \symIndexL}{\log \symBase}\right)}
}
\\
&\leq
2
\sum_{\symIndexL=1}^\infty
\sqrt{\left(1+\frac{4\pi^2}{\log^2\symBase}\right)^\symIndexL}
\sqrt{
\left(\frac{\textstyle\frac{2\pi^2}{\log \symBase}}{\sinh\!\left(\textstyle\frac{2\pi^2}{\log \symBase}\right)}
\right)^\symIndexL}
\\
&=
2
\sum_{\symIndexL=1}^\infty
\left(
\frac{\textstyle\left(1+\frac{4\pi^2}{\log^2\symBase}\right)\frac{2\pi^2}{\log \symBase}}{\sinh\!\left(\textstyle\frac{2\pi^2}{\log \symBase}\right)}
\right)^\frac{\symIndexL}{2}
=
\frac{2}{\sqrt{
\frac{\sinh\left(\frac{2\pi^2}{\log \symBase}\right)
}{
\left(1+\frac{4\pi^2}{\log^2\symBase}\right)\frac{2\pi^2}{\log \symBase}
}
}
-1
}
\\
&=
\frac{2}{\sqrt{
\frac{\sinh\left(\frac{2\pi^2}{\log \symBase}\right)
\log^3 \symBase
}{
2\pi^2(\log^2\symBase+4\pi^2)
}
}
-1
}.
\end{aligned}
\end{multline*}
The geometric series converges, because $\frac{(1+\symY^2/\pi^2)\symY}{\sinh\symY}<\frac{\symY+\symY^3/6}{\sinh\symY} <1$ with $\symY = \frac{2\pi^2}{\log \symBase} > 0$. The numerator $\symY+\symY^3/6$ is obviously smaller than $\sinh\symY$ as it corresponds to the first two terms of the Taylor series of the hyperbolic sine function.
\end{proof}

\begin{lemma}
\label{lem:zeta_approx}
The relative difference of the function
\begin{equation*}
\symHelperFunc_\symBase(\symX_1,\symX_2):={\textstyle\sum_{\symRegVal=-\infty}^\infty e^{-\symBase^{\symX_1-\symRegVal}} - e^{-\symBase^{\symX_2-\symRegVal}}}
\end{equation*}
from $\symX_2-\symX_1$ is bounded by
\begin{equation*}
\left|\frac{\symHelperFunc_\symBase(\symX_1,\symX_2) - (\symX_2 - \symX_1)}{\symX_2 - \symX_1}\right|\leq
\frac{2}{\sqrt{
\sinh\!\left(\frac{2\pi^2}{\log \symBase}\right)
\frac{\log \symBase}{2\pi^2}
}
-1
}
\quad
\textnormal{for $\symBase>1$}.
\end{equation*}
\end{lemma}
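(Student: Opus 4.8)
The plan is to exploit the fact that the right-hand side of the claimed bound is \emph{identical} to the bound on $|\symPowerSeriesFunc_\symBase^1(\symX)-1|$ proved in \cref{lem:xi1_approx}. This is a strong hint that $\symHelperFunc_\symBase$ should be written as an integral of $\symPowerSeriesFunc_\symBase^1$, after which the estimate becomes a one-line consequence of the earlier lemma.

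First I would establish the key identity
\[
\symHelperFunc_\symBase(\symX_1,\symX_2)=\int_{\symX_1}^{\symX_2}\symPowerSeriesFunc_\symBase^1(t)\,dt .
\]
Recalling that $\symPowerSeriesFunc_\symBase^1(t)=\log\symBase\sum_{\symRegVal=-\infty}^\infty \symBase^{t-\symRegVal}e^{-\symBase^{t-\symRegVal}}$ (since $\Gamma(1)=1$), the substitution $\symY=\symBase^{t-\symRegVal}$, with $d\symY=\log\symBase\,\symBase^{t-\symRegVal}\,dt$, turns the $\symRegVal$-th term into
\[
\int_{\symX_1}^{\symX_2}\log\symBase\,\symBase^{t-\symRegVal}e^{-\symBase^{t-\symRegVal}}\,dt
=\int_{\symBase^{\symX_1-\symRegVal}}^{\symBase^{\symX_2-\symRegVal}}e^{-\symY}\,d\symY
=e^{-\symBase^{\symX_1-\symRegVal}}-e^{-\symBase^{\symX_2-\symRegVal}} ,
\]
and summing over $\symRegVal$ reproduces exactly the definition of $\symHelperFunc_\symBase(\symX_1,\symX_2)$. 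For $\symX_2\geq\symX_1$ each integrand is nonnegative, so the interchange of sum and integral is justified by Tonelli's theorem; the case $\symX_2<\symX_1$ then follows from the antisymmetry $\symHelperFunc_\symBase(\symX_1,\symX_2)=-\symHelperFunc_\symBase(\symX_2,\symX_1)$.

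With the identity in hand, the rest is routine. I would write $\symHelperFunc_\symBase(\symX_1,\symX_2)-(\symX_2-\symX_1)=\int_{\symX_1}^{\symX_2}(\symPowerSeriesFunc_\symBase^1(t)-1)\,dt$ and bound
\[
\left|\symHelperFunc_\symBase(\symX_1,\symX_2)-(\symX_2-\symX_1)\right|
\leq\left|\int_{\symX_1}^{\symX_2}\left|\symPowerSeriesFunc_\symBase^1(t)-1\right|\,dt\right|
\leq\left|\symX_2-\symX_1\right|\,\max_t\left|\symPowerSeriesFunc_\symBase^1(t)-1\right| .
\]
Dividing by $|\symX_2-\symX_1|$ and inserting the explicit bound on $\max_t|\symPowerSeriesFunc_\symBase^1(t)-1|$ from \cref{lem:xi1_approx} gives precisely the claimed inequality.

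The main obstacle is purely the bookkeeping around convergence: the bare series $\sum_\symRegVal e^{-\symBase^{\symX-\symRegVal}}$ diverges (its summands tend to $1$ as $\symRegVal\to+\infty$), so $\symHelperFunc_\symBase$ is meaningful only as a difference whose tails cancel. I would therefore phrase the termwise-integration argument directly in terms of the convergent differences $e^{-\symBase^{\symX_1-\symRegVal}}-e^{-\symBase^{\symX_2-\symRegVal}}$ — which decay geometrically as $\symRegVal\to+\infty$ and doubly exponentially as $\symRegVal\to-\infty$ — rather than splitting into two separately divergent sums. Once the integral representation is secured, the bound is immediate from \cref{lem:xi1_approx}.
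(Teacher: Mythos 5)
Your proof is correct and follows essentially the same route as the paper's: both rewrite $\symHelperFunc_\symBase(\symX_1,\symX_2)$ as $\int_{\symX_1}^{\symX_2}\symPowerSeriesFunc_\symBase^1(\symX)\,d\symX$ and then bound the deviation by $\max_\symX|\symPowerSeriesFunc_\symBase^1(\symX)-1|$ via \cref{lem:xi1_approx}. Your treatment is in fact slightly more careful than the paper's, since you explicitly justify the termwise integration and the convergence of the telescoping differences, which the paper passes over silently.
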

As this is the same upper bound as in \cref{lem:xi1_approx}, the relative error is also smaller than \num{9.885e-6} for $\symBase\leq 2$ and quickly approaches zero as $\symBase\rightarrow 1$.
\begin{proof}
The mean value theorem and \cref{lem:xi1_approx} yield
\begin{multline*}
\left|\frac{\symHelperFunc_\symBase(\symX_1,\symX_2) - (\symX_2 - \symX_1)}{\symX_2 - \symX_1}\right|
=
{\textstyle\left|\frac{\left(\sum_{\symRegVal=-\infty}^\infty e^{-\symBase^{\symX_1-\symRegVal}} - e^{-\symBase^{\symX_2-\symRegVal}}\right) - (\symX_2 - \symX_1)}{\symX_2 - \symX_1}\right|}
\\
\begin{aligned}
&=
{\textstyle\left|\frac{\int_{\symX_1}^{\symX_2}
\left(
\log(\symBase) \sum_{\symRegVal = -\infty}^\infty
\symBase^{\symX-\symRegVal}
e^{-\symBase^{\symX-\symRegVal}}
\right)
-1\,d\symX}{\symX_2 - \symX_1}\right|}
=
{\textstyle\left|\frac{\int_{\symX_1}^{\symX_2}
\symPowerSeriesFunc_\symBase^1(\symX)-1\,d\symX}{\symX_2 - \symX_1}\right|}
\\
&\leq
\max_\symX
\left|\symPowerSeriesFunc_\symBase^1(\symX)-1\right|
\leq
\frac{2}{\sqrt{
\sinh\!\left(\frac{2\pi^2}{\log \symBase}\right)
\frac{\log \symBase}{2\pi^2}
}
-1
}.
\end{aligned}
\end{multline*}
\end{proof}

\begin{lemma}
\label{lem:card_variance}
If the random values $\symRegValVariate_\symIndexI$ are independent and identically distributed such that $\symExpectation((\symExponentialRate \symBase^{-\symRegValVariate_\symIndexI})^\symIntPower) = \frac{(1-\symBase^{-\symIntPower})\,\Gamma(\symIntPower)}{\symCardinality^\symIntPower \log \symBase}$ holds for $\symIntPower\in\lbrace 1,2\rbrace$, the variance of $\symStatisticX_\symNumReg = \frac{\log \symBase}{1-\symBase^{-1}}\frac{1}{\symNumReg}\sum_{\symIndexI=1}^\symNumReg \symExponentialRate \symBase^{-\symRegValVariate_\symIndexI}$ is given by $\symVariance(\symStatisticX_\symNumReg) = \frac{1}{\symNumReg\symCardinality^2}\left(\frac{\symBase+1}{\symBase-1}\log(\symBase) - 1\right)$.
\end{lemma}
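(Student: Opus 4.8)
The plan is to treat $\symStatisticX_\symNumReg$ as a constant multiple of the empirical mean of the independent and identically distributed variables $\symExponentialRate\symBase^{-\symRegValVariate_\symIndexI}$, so that the whole computation reduces to the first two moments already supplied by the hypothesis. First I would invoke independence: since the $\symRegValVariate_\symIndexI$ are iid, so are the $\symExponentialRate\symBase^{-\symRegValVariate_\symIndexI}$, and the variance of a sum of independent terms is the sum of their variances. Pulling out the deterministic prefactor gives
\[
\symVariance(\symStatisticX_\symNumReg)
=
\left(\frac{\log\symBase}{1-\symBase^{-1}}\right)^{2}
\frac{1}{\symNumReg}\,
\symVariance\!\left(\symExponentialRate\symBase^{-\symRegValVariate_1}\right).
\]

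Next I would expand the single-term variance as $\symExpectation((\symExponentialRate\symBase^{-\symRegValVariate_1})^2)-(\symExpectation(\symExponentialRate\symBase^{-\symRegValVariate_1}))^2$ and substitute the hypothesised moment formula for $\symIntPower=1$ and $\symIntPower=2$, using $\Gamma(1)=\Gamma(2)=1$. This produces
\[
\symVariance\!\left(\symExponentialRate\symBase^{-\symRegValVariate_1}\right)
=
\frac{1-\symBase^{-2}}{\symCardinality^2\log\symBase}
-
\frac{(1-\symBase^{-1})^2}{\symCardinality^2\log^2\symBase}.
\]

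Finally I would multiply back by the prefactor and simplify. The only step that needs care is the algebra: factoring $1-\symBase^{-2}=(1-\symBase^{-1})(1+\symBase^{-1})$ turns the first contribution into $\log\symBase\cdot\frac{1+\symBase^{-1}}{1-\symBase^{-1}}=\frac{\symBase+1}{\symBase-1}\log\symBase$, while the second contribution cancels the squared prefactor exactly and contributes $-1$. Collecting these over $\symNumReg\symCardinality^2$ yields the claimed expression $\frac{1}{\symNumReg\symCardinality^2}\left(\frac{\symBase+1}{\symBase-1}\log\symBase-1\right)$. There is no genuine obstacle here beyond bookkeeping of the powers of $\log\symBase$ and verifying the cancellations; the main point worth double-checking is that the prefactor $\left(\frac{\log\symBase}{1-\symBase^{-1}}\right)^2$ combines correctly with both moment terms, in particular that the $(1-\symBase^{-1})^2$ in the denominator is precisely what is needed both to reduce $1-\symBase^{-2}$ to $\frac{\symBase+1}{\symBase-1}$ and to annihilate the second term.
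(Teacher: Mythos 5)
Your proposal is correct and matches the paper's own proof essentially step for step: both use independence to reduce to a single-term variance, substitute the hypothesised moments with $\Gamma(1)=\Gamma(2)=1$, and simplify via the factorization $1-\symBase^{-2}=(1-\symBase^{-1})(1+\symBase^{-1})$. The algebraic cancellations you flag for double-checking do work out exactly as you describe.
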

\begin{proof}
\begin{align*}
\symVariance(\symStatisticX_\symNumReg) &= 
\textstyle
\left(\frac{\log \symBase}{1-\symBase^{-1}}\frac{1}{\symNumReg}\right)^2
\sum_{\symIndexI=1}^\symNumReg
\symVariance(\symExponentialRate \symBase^{-\symRegValVariate_\symIndexI})
\\
&=
\textstyle
\frac{1}{\symNumReg}\frac{\log^2 \symBase}{(1-\symBase^{-1})^2}
\symVariance(\symExponentialRate \symBase^{-\symRegValVariate_\symIndexI})
\\
&=
\textstyle
\frac{1}{\symNumReg}\frac{\log^2 \symBase}{(1-\symBase^{-1})^2}
\left(\symExpectation((\symExponentialRate \symBase^{-\symRegValVariate_\symIndexI})^2)
-
(\symExpectation(\symExponentialRate \symBase^{-\symRegValVariate_\symIndexI}))^2
\right)
\\
&=
\textstyle
\frac{1}{\symNumReg}\frac{\log^2 \symBase}{(1-\symBase^{-1})^2}
\left(
\frac{(1-\symBase^{-2})}{\symCardinality^2 \log \symBase}
-
\frac{(1-\symBase^{-1})^2}{\symCardinality^2 \log^2 \symBase}
\right)
\\
&=
\textstyle
\frac{1}{\symNumReg\symCardinality^2}\left(\frac{\symBase+1}{\symBase-1}\log(\symBase) - 1\right).
\end{align*}
\end{proof}

\begin{lemma}
\label{lem:prob_consistency}
$
1
-\symProbFunc_{\symBase}(\symCardinalityANorm - \symCardinalityBNorm\symJaccard)
-\symProbFunc_{\symBase}(\symCardinalityBNorm - \symCardinalityANorm\symJaccard)
> 0$ holds for all $\symCardinalityANorm,\symCardinalityBNorm>0$ with $\symCardinalityANorm+\symCardinalityBNorm=1$, $\symBase>1$, and $\symJaccard\in[0, \min(\frac{\symCardinalityANorm}{\symCardinalityBNorm}, \frac{\symCardinalityBNorm}{\symCardinalityANorm})]$, where $\symProbFunc_{\symBase}(\symX)$ is defined as $\symProbFunc_{\symBase}(\symX):=-\log_\symBase(
1-\symX\frac{\symBase-1}{\symBase}
)$.
\end{lemma}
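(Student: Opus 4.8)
The plan is to reduce the claimed inequality to an elementary algebraic statement by exploiting the explicit form of $\symProbFunc_{\symBase}$ together with the monotonicity of the logarithm. First I would introduce the abbreviation $c := \frac{\symBase-1}{\symBase}\in(0,1)$, so that $\frac{1}{\symBase}=1-c$, and set $\alpha := \symCardinalityANorm-\symCardinalityBNorm\symJaccard$ and $\beta := \symCardinalityBNorm-\symCardinalityANorm\symJaccard$, i.e.\ the two arguments fed into $\symProbFunc_{\symBase}$. The constraint $\symJaccard\in[0,\min(\frac{\symCardinalityANorm}{\symCardinalityBNorm},\frac{\symCardinalityBNorm}{\symCardinalityANorm})]$ immediately yields $\alpha,\beta\geq 0$, and since $\symCardinalityANorm+\symCardinalityBNorm=1$ forces $\alpha\leq\symCardinalityANorm<1$ and $\beta\leq\symCardinalityBNorm<1$, both $1-c\alpha$ and $1-c\beta$ lie in $(0,1]$. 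This guarantees that the logarithms defining $\symProbFunc_{\symBase}(\alpha)$ and $\symProbFunc_{\symBase}(\beta)$ are well-defined and finite, which is a prerequisite for the statement to even make sense.

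Next I would transform the target. Summing the two instances of the definition and merging the logarithms gives $\symProbFunc_{\symBase}(\alpha)+\symProbFunc_{\symBase}(\beta) = -\log_{\symBase}\bigl((1-c\alpha)(1-c\beta)\bigr)$, so the claim $1-\symProbFunc_{\symBase}(\alpha)-\symProbFunc_{\symBase}(\beta)>0$ is equivalent to $\log_{\symBase}\bigl((1-c\alpha)(1-c\beta)\bigr)>-1$. Because $\symBase>1$ makes $\log_{\symBase}$ strictly increasing, this in turn is equivalent to the cleaner inequality $(1-c\alpha)(1-c\beta) > \symBase^{-1} = 1-c$.

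The remaining step is purely algebraic. Using $\alpha+\beta = (\symCardinalityANorm+\symCardinalityBNorm)(1-\symJaccard) = 1-\symJaccard$, I would expand $(1-c\alpha)(1-c\beta) = 1 - c(\alpha+\beta) + c^2\alpha\beta = (1-c) + c\symJaccard + c^2\alpha\beta$. Subtracting $1-c$ collapses the inequality to $c\symJaccard + c^2\alpha\beta > 0$, and dividing by $c>0$ leaves the equivalent statement $\symJaccard + c\alpha\beta > 0$.

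The only delicate point, and the closest thing to an obstacle, is the \emph{strictness} of the inequality at the boundary, where the nonnegative summands might individually vanish. I would close with a short case split: if $\symJaccard>0$ then $\symJaccard + c\alpha\beta \geq \symJaccard > 0$; and if $\symJaccard=0$ then $\alpha=\symCardinalityANorm$ and $\beta=\symCardinalityBNorm$ are both strictly positive by the hypothesis $\symCardinalityANorm,\symCardinalityBNorm>0$, so $c\alpha\beta = c\,\symCardinalityANorm\symCardinalityBNorm>0$. In either case the quantity is strictly positive, which establishes the lemma. No analytic machinery is required beyond the monotonicity of $\log_{\symBase}$; the entire argument is a change of variables followed by a one-line expansion.
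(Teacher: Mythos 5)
Your proof is correct and follows essentially the same route as the paper's: both combine the two logarithms, use $\alpha+\beta=(\symCardinalityANorm+\symCardinalityBNorm)(1-\symJaccard)=1-\symJaccard$ to expand the product, and reduce the claim to the positivity of $\symJaccard\frac{\symBase-1}{\symBase}+\alpha\beta\bigl(\frac{\symBase-1}{\symBase}\bigr)^2$. Your explicit case split at $\symJaccard=0$ to justify strictness is a small touch of extra care that the paper leaves implicit, but it is not a different argument.
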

\begin{proof}
 $\symJaccard\in[0, \min(\frac{\symCardinalityANorm}{\symCardinalityBNorm}, \frac{\symCardinalityBNorm}{\symCardinalityANorm})]$ implies $\symCardinalityANorm - \symCardinalityBNorm\symJaccard\geq 0$ and $\symCardinalityBNorm - \symCardinalityANorm\symJaccard\geq 0$ and therefore
\begin{multline*}
1
-\symProbFunc_{\symBase}(\symCardinalityANorm - \symCardinalityBNorm\symJaccard)
-\symProbFunc_{\symBase}(\symCardinalityBNorm - \symCardinalityANorm\symJaccard)
\\
\begin{aligned}
&=
\log_\symBase(\symBase)
+
\log_\symBase(1-(\symCardinalityANorm - \symCardinalityBNorm\symJaccard){\textstyle\frac{\symBase-1}{\symBase}})
+
\log_\symBase(1-(\symCardinalityBNorm - \symCardinalityANorm\symJaccard){\textstyle\frac{\symBase-1}{\symBase}})
\\
&=
\log_\symBase\!\left(
\symBase
(1-(\symCardinalityANorm - \symCardinalityBNorm\symJaccard){\textstyle\frac{\symBase-1}{\symBase}})
(1-(\symCardinalityBNorm - \symCardinalityANorm\symJaccard){\textstyle\frac{\symBase-1}{\symBase}})
\right)
\\
&=
\log_\symBase\!\left(
\symBase
\left(
1
-
(\symCardinalityANorm+\symCardinalityBNorm)(1-\symJaccard){\textstyle\frac{\symBase-1}{\symBase}}
+
(\symCardinalityANorm - \symCardinalityBNorm\symJaccard)
(\symCardinalityBNorm - \symCardinalityANorm\symJaccard)
({\textstyle\frac{\symBase-1}{\symBase}})^2
\right)
\right)
\\
&=
\log_\symBase\!\left(
1
+
\symJaccard(\symBase-1)
+
(\symCardinalityANorm - \symCardinalityBNorm\symJaccard)
(\symCardinalityBNorm - \symCardinalityANorm\symJaccard)
{\textstyle\frac{(\symBase-1)^2}{\symBase}}
\right)
>
\log_\symBase(1)
=
0.
\end{aligned}
\end{multline*}
\end{proof}

\begin{lemma}
\label{lem:concavity}
Assume $1<\symBase\leq e\approx\num{2.718}$, $\symDiffCountVariate_{+},\symDiffCountVariate_{-}, \symDiffCountVariate_{0}\geq 0$ with $\symDiffCountVariate_{+}+\symDiffCountVariate_{-}+\symDiffCountVariate_{0}> 0$, and $\symCardinalityANorm,\symCardinalityBNorm>0$ with $\symCardinalityANorm+\symCardinalityBNorm=1$. Then the function 
\begin{multline*}
\log\symLikelihood(\symJaccard)
=
\symDiffCountVariate_{+} \log(\symProbFunc_{\symBase}(\symCardinalityANorm - \symCardinalityBNorm\symJaccard)) + \symDiffCountVariate_{-}\log(\symProbFunc_{\symBase}(\symCardinalityBNorm - \symCardinalityANorm\symJaccard))
\\
+\symDiffCountVariate_{0}\log(
1
-
\symProbFunc_{\symBase}(\symCardinalityANorm - \symCardinalityBNorm\symJaccard)
-
\symProbFunc_{\symBase}(\symCardinalityBNorm - \symCardinalityANorm\symJaccard)
)
\end{multline*}
with $\symProbFunc_{\symBase}(\symX):=-\log_\symBase(
1-\symX\frac{\symBase-1}{\symBase}
)$
is strictly concave on the domain $\symJaccard\in [0, \min(\frac{\symCardinalityANorm}{\symCardinalityBNorm}, \frac{\symCardinalityBNorm}{\symCardinalityANorm})]$.
\end{lemma}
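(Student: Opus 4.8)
The plan is to show that $(\log\symLikelihood)''<0$ on the interior of the domain. Since $\symDiffCountVariate_{+},\symDiffCountVariate_{-},\symDiffCountVariate_{0}\geq 0$ with $\symDiffCountVariate_{+}+\symDiffCountVariate_{-}+\symDiffCountVariate_{0}>0$, it suffices to prove that each of the three summands, viewed as a function of $\symJaccard$, is \emph{strictly} concave: a nonnegative combination with at least one positive weight then has strictly negative second derivative. The first two summands are the single map $x\mapsto\log\symProbFunc_{\symBase}(x)$ precomposed with the affine substitutions $x=\symCardinalityANorm-\symCardinalityBNorm\symJaccard$ and $x=\symCardinalityBNorm-\symCardinalityANorm\symJaccard$. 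Because affine precomposition preserves concavity, and preserves strictness whenever the slopes $\symCardinalityBNorm,\symCardinalityANorm$ are nonzero (which they are, being positive), everything reduces to the concavity of $g(x):=\log\symProbFunc_{\symBase}(x)$.

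First I would handle $g$. Writing $c:=(\symBase-1)/\symBase\in(0,1)$ and $L(x):=-\log(1-cx)$, we have $\symProbFunc_{\symBase}(x)=L(x)/\log\symBase$, so up to an additive constant $g=\log L$. Using $L'=c\,e^{L}$, hence $L''=(L')^2$, a short computation gives
\begin{equation*}
g''(x)=\frac{L''L-(L')^2}{L^2}=\frac{(L')^2\,(L-1)}{L^2},
\end{equation*}
which is strictly negative exactly when $L(x)<1$, i.e. when $cx<1-1/e$. On the domain the arguments $\symCardinalityANorm-\symCardinalityBNorm\symJaccard$ and $\symCardinalityBNorm-\symCardinalityANorm\symJaccard$ lie in $[0,\max(\symCardinalityANorm,\symCardinalityBNorm)]\subset[0,1)$, so $cx<c=1-1/\symBase$; and $1-1/\symBase\leq 1-1/e$ is equivalent to $\symBase\leq e$. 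This is the one place where the hypothesis $\symBase\leq e$ is used, and it yields $g''<0$ strictly.

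Next I would treat the third summand, whose bracket $1-\symProbFunc_{\symBase}(\symCardinalityANorm-\symCardinalityBNorm\symJaccard)-\symProbFunc_{\symBase}(\symCardinalityBNorm-\symCardinalityANorm\symJaccard)$ equals $\log_{\symBase}Q(\symJaccard)$ by the identity derived for \cref{lem:prob_consistency}, where $Q(\symJaccard)=1+\symJaccard(\symBase-1)+(\symCardinalityANorm-\symCardinalityBNorm\symJaccard)(\symCardinalityBNorm-\symCardinalityANorm\symJaccard)\frac{(\symBase-1)^2}{\symBase}$ is a quadratic $A\symJaccard^2+B\symJaccard+C$ with $A=\frac{(\symBase-1)^2}{\symBase}\symCardinalityANorm\symCardinalityBNorm>0$. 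Setting $R:=\log Q$, one checks the identity $Q''Q-(Q')^2=\Delta-2AQ$ with $\Delta:=4AC-B^2$, so $R''=(Q''Q-(Q')^2)/Q^2<0$ as soon as $\Delta\leq 0$. Since \cref{lem:prob_consistency} makes the bracket positive, $\log_{\symBase}Q>0$, hence $Q>1$ and $R>0$; then $(\log R)''=(R''R-(R')^2)/R^2<0$, so $\log\log_{\symBase}Q$ is strictly concave. No restriction on $\symBase$ beyond $\symBase>1$ is needed here.

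The main obstacle is verifying $\Delta\leq 0$. Using $\symCardinalityANorm+\symCardinalityBNorm=1$ one simplifies $B=\frac{\symBase-1}{\symBase}+2A$ and $C=1+A$, after which the discriminant collapses to the clean factorization $\Delta=c^2(4\symCardinalityANorm\symCardinalityBNorm-1)$, which is $\leq 0$ because $\symCardinalityANorm\symCardinalityBNorm\leq\tfrac14$ by AM--GM under $\symCardinalityANorm+\symCardinalityBNorm=1$. Obtaining this factorization is the delicate computational step; once it is in place, concavity of the third term holds for every $\symBase>1$, so the binding constraint $\symBase\leq e$ enters only through $g$. Combining the three strictly concave summands with nonnegative weights summing to a positive number gives $(\log\symLikelihood)''<0$ on the interior, which is the claimed strict concavity.
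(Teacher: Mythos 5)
Your proposal is correct. The overall architecture—splitting $\log\symLikelihood$ into its three summands and showing each is strictly concave—is the same as the paper's, and your treatment of the two $\log\symProbFunc_\symBase$ terms is the paper's computation in different notation: your $\frac{(L')^2(L-1)}{L^2}$ is exactly the paper's second derivative $-\frac{(\frac{\symBase-1}{\symBase})^2(\log(1-\symX\frac{\symBase-1}{\symBase})+1)}{(1-\symX\frac{\symBase-1}{\symBase})^2\log^2(1-\symX\frac{\symBase-1}{\symBase})}$, and both arguments isolate $\symBase\leq e$ as the condition making it negative. Where you genuinely diverge is the third summand. The paper keeps the bracket as a \emph{sum}, $1+\log_\symBase(\textnormal{affine})+\log_\symBase(\textnormal{affine})$, observes that each piece is concave (log of an affine function), that the sum $\symAnyFuncTwo$ is positive by \cref{lem:prob_consistency} and strictly increasing, and concludes from $(\log\symAnyFuncTwo)''=\symAnyFuncTwo''/\symAnyFuncTwo-(\symAnyFuncTwo'/\symAnyFuncTwo)^2$ with $\symAnyFuncTwo''\leq 0$, $\symAnyFuncTwo>0$, $\symAnyFuncTwo'>0$; no computation beyond differentiating an affine composition is needed. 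You instead multiply the two affine factors into the quadratic $Q$, prove the stronger intermediate fact that $\log Q$ is itself strictly concave via the discriminant identity $\Delta=(\frac{\symBase-1}{\symBase})^2(4\symCardinalityANorm\symCardinalityBNorm-1)\leq 0$ (which I have checked, together with your simplifications of $A$, $B$, $C$), and then take one more logarithm. Your route costs a more delicate algebraic verification but buys strictness without invoking monotonicity of the bracket (the term $R''R/R^2<0$ alone suffices), whereas the paper's strictness hinges on $\symAnyFuncTwo'>0$. Both are valid; yours is self-contained modulo \cref{lem:prob_consistency}, which you use only for $Q>1$.
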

\begin{proof}
We show that all three terms are strictly concave themselves. For the first two terms, it is sufficient to show that the function $\symAnyFunc(\symX)=\log(\symProbFunc_\symBase(\symX))$ is strictly concave on $\symX\in[0,1)$.
The second derivative is 
\begin{equation*}
\symAnyFunc''(\symX) = -\frac{(\frac{\symBase-1}{\symBase})^2(\log(1-\symX\frac{\symBase-1}{\symBase})+1)}{(1-\symX\frac{\symBase-1}{\symBase})^2\log^2(1-\symX\frac{\symBase-1}{\symBase})}
\end{equation*}
which is negative for all $\symX\in(0,1)$ as long as $\log(1-\symX\frac{\symBase-1}{\symBase})+1>0\Leftrightarrow 1-\frac{1}{e}>\symX(1-\frac{1}{\symBase})$, which is obviously the case for $\symBase\leq e$.

The last term is $\log(\symAnyFuncTwo(\symJaccard))$ with 
\begin{align*}
\symAnyFuncTwo(\symJaccard) &= 1
-
\symProbFunc_{\symBase}(\symCardinalityANorm - \symCardinalityBNorm\symJaccard)
-
\symProbFunc_{\symBase}(\symCardinalityBNorm - \symCardinalityANorm\symJaccard)
\\
&=
1+
\log_\symBase(
1-(\symCardinalityANorm - \symCardinalityBNorm \symJaccard){\textstyle\frac{\symBase-1}{\symBase}}
)
+
\log_\symBase(
1-(\symCardinalityBNorm - \symCardinalityANorm \symJaccard){\textstyle\frac{\symBase-1}{\symBase}}
).
\end{align*}
Its second derivative is $(\log(\symAnyFuncTwo(\symJaccard)))'' = \frac{\symAnyFuncTwo''(\symJaccard)}{\symAnyFuncTwo(\symJaccard)}-\left(\frac{\symAnyFuncTwo'(\symJaccard)}{\symAnyFuncTwo(\symJaccard)}\right)^2$ which is negative because $\symAnyFuncTwo(\symJaccard)>0$ according to \cref{lem:prob_consistency}, $\symAnyFuncTwo'(\symJaccard)>0$ as $\symProbFunc_{\symBase}$ and hence also $\symAnyFuncTwo$ are strictly increasing, and $\symAnyFuncTwo''(\symJaccard)\leq 0$ as all terms of $\symAnyFuncTwo(\symJaccard)$ are concave. In particular, $\log_\symBase(1-(\symCardinalityANorm - \symCardinalityBNorm \symJaccard)(1 - 1/ \symBase))$ and $\log_\symBase(1-(\symCardinalityBNorm - \symCardinalityANorm \symJaccard)(1 - 1/ \symBase))$ are both concave, because the logarithm of a linear function is concave.
\end{proof}

\begin{lemma}
\label{lem:fisher}
If $\symDiffCountVariate_{+}$, $\symDiffCountVariate_{-}$, and $\symDiffCountVariate_{0}$ are multinomially distributed with $\symDiffCountVariate_{+}+\symDiffCountVariate_{-}+\symDiffCountVariate_{0}=\symNumReg$ trials and probabilities $\symProbFunc_{+}(\symJaccard):=\symProbFunc_{\symBase}(\symCardinalityANorm - \symCardinalityBNorm\symJaccard)$, $\symProbFunc_{-}(\symJaccard):=\symProbFunc_{\symBase}(\symCardinalityBNorm - \symCardinalityANorm\symJaccard)$, and $\symProbFunc_{0}(\symJaccard):=1
-
\symProbFunc_{\symBase}(\symCardinalityANorm - \symCardinalityBNorm\symJaccard)
-
\symProbFunc_{\symBase}(\symCardinalityBNorm - \symCardinalityANorm\symJaccard)
)$, respectively, with $\symBase > 1$, $\symCardinalityANorm,\symCardinalityBNorm>0$, $\symCardinalityANorm+\symCardinalityBNorm=1$, and $\symProbFunc_{\symBase}(\symX):=-\log_\symBase(
1-\symX\frac{\symBase-1}{\symBase}
)$, the Fisher information with respect to $\symJaccard$ is given by
\begin{equation*}
\symFisher(\symJaccard)
= 
{\scriptstyle 
\frac{\symNumReg
(\symBase - 1)^2}{\symBase^2\log^2(\symBase)}
\left(
\frac{
\left(\symCardinalityBNorm
\symBase^{\symProbFunc_\symBase(\symCardinalityANorm - \symCardinalityBNorm \symJaccard)}
\right)^2
}{
\symProbFunc_\symBase(\symCardinalityANorm - \symCardinalityBNorm \symJaccard)}
+
\frac{
\left(\symCardinalityANorm
\symBase^{\symProbFunc_\symBase(\symCardinalityBNorm - \symCardinalityANorm \symJaccard)}
\right)^2
}{
\symProbFunc_\symBase(\symCardinalityBNorm - \symCardinalityANorm \symJaccard)
}
+
\frac{
\left(
\symCardinalityBNorm
\symBase^{\symProbFunc_\symBase(\symCardinalityANorm - \symCardinalityBNorm \symJaccard)}
+
\symCardinalityANorm
\symBase^{\symProbFunc_\symBase(\symCardinalityBNorm - \symCardinalityANorm \symJaccard)}
\right)^2
}{
1
-
\symProbFunc_\symBase(\symCardinalityANorm - \symCardinalityBNorm \symJaccard)
-
\symProbFunc_\symBase(\symCardinalityBNorm - \symCardinalityANorm \symJaccard)
}
\right)
}
\end{equation*}
for $\symJaccard\in[0, \min(\frac{\symCardinalityANorm}{\symCardinalityBNorm}, \frac{\symCardinalityBNorm}{\symCardinalityANorm}))$ and diverges for $\symJaccard=\min(\frac{\symCardinalityANorm}{\symCardinalityBNorm}, \frac{\symCardinalityBNorm}{\symCardinalityANorm})$.
\end{lemma}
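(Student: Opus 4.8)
The plan is to recognize this as the Fisher information of a three-cell multinomial family whose cell probabilities $\symProbFunc_{+},\symProbFunc_{-},\symProbFunc_{0}$ depend on the single scalar parameter $\symJaccard$, and to reduce the whole statement to one chain-rule computation. The first step is to establish the generic identity $\symFisher(\symJaccard)=\symNumReg\big((\symProbFunc_{+}')^2/\symProbFunc_{+}+(\symProbFunc_{-}')^2/\symProbFunc_{-}+(\symProbFunc_{0}')^2/\symProbFunc_{0}\big)$. This follows from the log-likelihood $\log\symLikelihood=\mathrm{const}+\symDiffCountVariate_{+}\log\symProbFunc_{+}+\symDiffCountVariate_{-}\log\symProbFunc_{-}+\symDiffCountVariate_{0}\log\symProbFunc_{0}$: differentiating twice in $\symJaccard$, taking expectations with $\symExpectation(\symDiffCountVariate_{+})=\symNumReg\symProbFunc_{+}$ and analogously for the other two cells, and using $\symProbFunc_{+}+\symProbFunc_{-}+\symProbFunc_{0}\equiv 1$ (so that $\symProbFunc_{+}''+\symProbFunc_{-}''+\symProbFunc_{0}''=0$ makes the second-derivative contribution vanish) leaves exactly $-\symExpectation(\partial_\symJaccard^2\log\symLikelihood)=\symNumReg\sum_{\symIndexI\in\{+,-,0\}}(\symProbFunc_{\symIndexI}')^2/\symProbFunc_{\symIndexI}$. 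Positivity of all three denominators on the open interval is guaranteed by \cref{lem:prob_consistency}, so the expression is well defined there.

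The decisive simplification comes from the defining relation $\symBase^{\symProbFunc_\symBase(\symX)}=(1-\symX\tfrac{\symBase-1}{\symBase})^{-1}$, which is just an exponentiation of $\symProbFunc_\symBase(\symX)=-\log_\symBase(1-\symX\tfrac{\symBase-1}{\symBase})$. Differentiating yields the clean closed form $\symProbFunc_\symBase'(\symX)=\tfrac{\symBase-1}{\symBase\log\symBase}\,\symBase^{\symProbFunc_\symBase(\symX)}$. Applying the chain rule to $\symProbFunc_{+}(\symJaccard)=\symProbFunc_\symBase(\symCardinalityANorm-\symCardinalityBNorm\symJaccard)$ and $\symProbFunc_{-}(\symJaccard)=\symProbFunc_\symBase(\symCardinalityBNorm-\symCardinalityANorm\symJaccard)$ then produces $\symProbFunc_{+}'=-\tfrac{\symBase-1}{\symBase\log\symBase}\symCardinalityBNorm\,\symBase^{\symProbFunc_\symBase(\symCardinalityANorm-\symCardinalityBNorm\symJaccard)}$ and $\symProbFunc_{-}'=-\tfrac{\symBase-1}{\symBase\log\symBase}\symCardinalityANorm\,\symBase^{\symProbFunc_\symBase(\symCardinalityBNorm-\symCardinalityANorm\symJaccard)}$, and hence $\symProbFunc_{0}'=-\symProbFunc_{+}'-\symProbFunc_{-}'=\tfrac{\symBase-1}{\symBase\log\symBase}\big(\symCardinalityBNorm\,\symBase^{\symProbFunc_\symBase(\symCardinalityANorm-\symCardinalityBNorm\symJaccard)}+\symCardinalityANorm\,\symBase^{\symProbFunc_\symBase(\symCardinalityBNorm-\symCardinalityANorm\symJaccard)}\big)$. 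Substituting these three derivatives into the multinomial formula and factoring the common $\tfrac{(\symBase-1)^2}{\symBase^2\log^2\symBase}$ out of each squared derivative reproduces term by term the three fractions in the claimed $\symFisher(\symJaccard)$; in particular the numerators become $(\symCardinalityBNorm\,\symBase^{\symProbFunc_\symBase(\symCardinalityANorm-\symCardinalityBNorm\symJaccard)})^2$, $(\symCardinalityANorm\,\symBase^{\symProbFunc_\symBase(\symCardinalityBNorm-\symCardinalityANorm\symJaccard)})^2$, and $(\symCardinalityBNorm\,\symBase^{\symProbFunc_\symBase(\symCardinalityANorm-\symCardinalityBNorm\symJaccard)}+\symCardinalityANorm\,\symBase^{\symProbFunc_\symBase(\symCardinalityBNorm-\symCardinalityANorm\symJaccard)})^2$, matching the statement.

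For the divergence at the endpoint $\symJaccard=\min(\symCardinalityANorm/\symCardinalityBNorm,\symCardinalityBNorm/\symCardinalityANorm)$, I would note that exactly one of the arguments $\symCardinalityANorm-\symCardinalityBNorm\symJaccard$, $\symCardinalityBNorm-\symCardinalityANorm\symJaccard$ becomes $0$ there, so the corresponding $\symProbFunc_\symBase(\cdot)\to 0^{+}$ while $\symBase^{\symProbFunc_\symBase(0)}=1$ keeps the matching numerator equal to $\symCardinalityBNorm^2$ (or $\symCardinalityANorm^2$), which is strictly positive; thus that single term, and with it $\symFisher(\symJaccard)$, diverges.

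The whole argument is essentially routine once the exponentiated identity $\symBase^{\symProbFunc_\symBase(\symX)}=(1-\symX\tfrac{\symBase-1}{\symBase})^{-1}$ is spotted, since it turns an awkward logarithmic derivative into a product. The only points demanding care are the sign and prefactor bookkeeping in the chain rule, so that the common factor $\tfrac{(\symBase-1)^2}{\symBase^2\log^2\symBase}$ emerges cleanly from all three terms, and the appeal to \cref{lem:prob_consistency} to certify $\symProbFunc_{0}>0$ on the open interval, which guarantees that the third denominator never vanishes prematurely and that the stated formula is the genuine Fisher information rather than a formal expression.
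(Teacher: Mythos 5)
Your proposal is correct and follows essentially the same route as the paper: both reduce the claim to the standard multinomial identity $\symFisher(\symJaccard)=\symNumReg\sum_{\symIndexI\in\{+,-,0\}}(\symProbFunc_{\symIndexI}')^2/\symProbFunc_{\symIndexI}$ and then apply the key derivative identity $\symProbFunc_\symBase'(\symX)=\frac{\symBase-1}{\symBase\log\symBase}\symBase^{\symProbFunc_\symBase(\symX)}$ together with the chain rule, invoking \cref{lem:prob_consistency} for positivity and the vanishing of $\symProbFunc_\symBase$ at the endpoint for divergence. The only (immaterial) difference is that you obtain the generic formula via $-\symExpectation(\partial_\symJaccard^2\log\symLikelihood)$ and $\symProbFunc_{+}''+\symProbFunc_{-}''+\symProbFunc_{0}''=0$, whereas the paper computes $\symVariance((\log\symLikelihood)')$ using the multinomial variance and covariance formulas; also note that at the endpoint both arguments can vanish simultaneously when $\symCardinalityANorm=\symCardinalityBNorm$, so ``exactly one'' should read ``at least one,'' which does not affect the divergence conclusion.
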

\begin{proof}
For $\symJaccard\in[0, \min(\frac{\symCardinalityANorm}{\symCardinalityBNorm}, \frac{\symCardinalityBNorm}{\symCardinalityANorm}))$ all probabilities are positive due to \cref{lem:prob_consistency}.
By definition, $\symProbFunc_{+}(\symJaccard)+\symProbFunc_{-}(\symJaccard)+\symProbFunc_{0}(\symJaccard)=1$ and hence the sum of their derivatives vanishes $\symProbFunc'_{+}(\symJaccard)+\symProbFunc'_{-}(\symJaccard)+\symProbFunc'_{0}(\symJaccard)=0$. The first derivative of the log-likelihood function
\begin{equation*}
\log\symLikelihood(\symJaccard)
=
\symDiffCountVariate_{+} \log(\symProbFunc_{+}(\symJaccard)) + 
\symDiffCountVariate_{-} \log(\symProbFunc_{-}(\symJaccard)) + 
\symDiffCountVariate_{0} \log(\symProbFunc_{0}(\symJaccard))
\end{equation*}
is given by
\begin{equation*}
(\log\symLikelihood(\symJaccard))'
=
\symDiffCountVariate_{+} {\textstyle\frac{\symProbFunc'_{+}(\symJaccard)}{\symProbFunc_{+}(\symJaccard)}} + 
\symDiffCountVariate_{-} {\textstyle\frac{\symProbFunc'_{-}(\symJaccard)}{\symProbFunc_{-}(\symJaccard)}} + 
\symDiffCountVariate_{0} {\textstyle\frac{\symProbFunc'_{0}(\symJaccard)}{\symProbFunc_{0}(\symJaccard)}}.
\end{equation*}
Since $\symExpectation(\symDiffCountVariate_{+}) = \symNumReg\symProbFunc_{+}(\symJaccard)$, $\symExpectation(\symDiffCountVariate_{-}) = \symNumReg\symProbFunc_{-}(\symJaccard)$, and $\symExpectation(\symDiffCountVariate_{0}) = \symNumReg\symProbFunc_{0}(\symJaccard)$, we have $\symExpectation((\log\symLikelihood(\symJaccard))') = \symNumReg(\symProbFunc'_{+}(\symJaccard) + \symProbFunc'_{-}(\symJaccard) + \symProbFunc'_{0}(\symJaccard)) = 0$.
Therefore, the Fisher information can be computed as
\begin{equation*}
\begin{aligned}
\symFisher(\symJaccard)
&=
\symExpectation(((\log\symLikelihood(\symJaccard))')^2)
=
\symVariance((\log\symLikelihood(\symJaccard))') + \symExpectation((\log\symLikelihood(\symJaccard))')^2
\\
&=
\symVariance((\log\symLikelihood(\symJaccard))')
\\
&=
\symVariance(\symDiffCountVariate_{+}) \left({\textstyle\frac{\symProbFunc'_{+}(\symJaccard)}{\symProbFunc_{+}(\symJaccard)}}\right)^2 + 
\symVariance(\symDiffCountVariate_{-}) \left({\textstyle\frac{\symProbFunc'_{-}(\symJaccard)}{\symProbFunc_{-}(\symJaccard)}}\right)^2 + 
\symVariance(\symDiffCountVariate_{0}) \left({\textstyle\frac{\symProbFunc'_{0}(\symJaccard)}{\symProbFunc_{0}(\symJaccard)}}\right)^2
\\
\textstyle
&\quad+
2\symCovariance(\symDiffCountVariate_{+}, \symDiffCountVariate_{-}) {\textstyle\frac{\symProbFunc'_{+}(\symJaccard)}{\symProbFunc_{+}(\symJaccard)}\frac{\symProbFunc'_{-}(\symJaccard)}{\symProbFunc_{-}(\symJaccard)}}
+
2\symCovariance(\symDiffCountVariate_{+}, \symDiffCountVariate_{0}) {\textstyle\frac{\symProbFunc'_{+}(\symJaccard)}{\symProbFunc_{+}(\symJaccard)}\frac{\symProbFunc'_{0}(\symJaccard)}{\symProbFunc_{0}(\symJaccard)}}
\\
&\quad +
2\symCovariance(\symDiffCountVariate_{0}, \symDiffCountVariate_{-}) {\textstyle\frac{\symProbFunc'_{0}(\symJaccard)}{\symProbFunc_{0}(\symJaccard)}\frac{\symProbFunc'_{-}(\symJaccard)}{\symProbFunc_{-}(\symJaccard)}}.
\end{aligned}
\end{equation*}
Using the formulas for variance e.g. $\symVariance(\symDiffCountVariate_{+}) = \symNumReg \symProbFunc_{+}(\symJaccard)(1-\symProbFunc_{+}(\symJaccard))$ and covariance e.g. $\symCovariance(\symDiffCountVariate_{+}, \symDiffCountVariate_{0})=-\symNumReg \symProbFunc_{+}(\symJaccard)\symProbFunc_{0}(\symJaccard)$ of multinomially distributed variables we obtain
\begin{multline*}
\symFisher(\symJaccard)
=
\symNumReg {\textstyle\frac{(\symProbFunc'_{+}(\symJaccard))^2}{\symProbFunc_{+}(\symJaccard)}}
+
\symNumReg {\textstyle\frac{(\symProbFunc'_{-}(\symJaccard))^2}{\symProbFunc_{-}(\symJaccard)}}
+
\symNumReg {\textstyle\frac{(\symProbFunc'_{0}(\symJaccard))^2}{\symProbFunc_{0}(\symJaccard)}}
\\
-
\symNumReg
\left(
\symProbFunc'_{+}(\symJaccard) + \symProbFunc'_{-}(\symJaccard) + \symProbFunc'_{0}(\symJaccard)
\right)^2
\\
=
\symNumReg
{\textstyle\frac{(\symProbFunc'_{+}(\symJaccard))^2}{\symProbFunc_{+}(\symJaccard)}}
+
\symNumReg{\textstyle\frac{(\symProbFunc'_{-}(\symJaccard))^2}{\symProbFunc_{-}(\symJaccard)}}
+
\symNumReg{\textstyle\frac{(\symProbFunc'_{0}(\symJaccard))^2}{\symProbFunc_{0}(\symJaccard)}}
\\
=
\symNumReg
\left(
{\textstyle\frac{(\symCardinalityBNorm\symProbFunc'_{\symBase}(\symCardinalityANorm - \symCardinalityBNorm\symJaccard))^2}{\symProbFunc_{\symBase}(\symCardinalityANorm - \symCardinalityBNorm\symJaccard)}}
+
{\textstyle\frac{(\symCardinalityANorm\symProbFunc'_{\symBase}(\symCardinalityBNorm - \symCardinalityANorm\symJaccard))^2}{\symProbFunc_{\symBase}(\symCardinalityBNorm - \symCardinalityANorm\symJaccard)}}
+
{\textstyle\frac{(
\symCardinalityBNorm\symProbFunc'_{\symBase}(\symCardinalityANorm - \symCardinalityBNorm\symJaccard)
+
\symCardinalityANorm\symProbFunc'_{\symBase}(\symCardinalityBNorm - \symCardinalityANorm\symJaccard)
)^2}{1
-
\symProbFunc_{\symBase}(\symCardinalityANorm - \symCardinalityBNorm\symJaccard)
-
\symProbFunc_{\symBase}(\symCardinalityBNorm - \symCardinalityANorm\symJaccard)
}}
\right).
\end{multline*}
The first derivative of $\symProbFunc_\symBase$ can be expressed as
\begin{equation*}
\symProbFunc_\symBase'(\symX) = \frac{\symBase-1}{\symBase\log\symBase}\symBase^{\symProbFunc_\symBase(\symX)}
\end{equation*}
which finally gives the desired expression for the Fisher information.
If $\symJaccard=\min(\frac{\symCardinalityANorm}{\symCardinalityBNorm}, \frac{\symCardinalityBNorm}{\symCardinalityANorm})$, either $\symProbFunc_{\symBase}(\symCardinalityANorm - \symCardinalityBNorm\symJaccard)=0$ or $\symProbFunc_{\symBase}(\symCardinalityBNorm - \symCardinalityANorm\symJaccard)=0$, which shows the divergence in this case.
\end{proof}

\begin{lemma}
\label{lem:lsh_inequality}
If $\symCardinalityANorm>0$, $\symCardinalityBNorm>0$, $\symCardinalityANorm+\symCardinalityBNorm=1$, and $\symJaccard\in[0, \min(\frac{\symCardinalityANorm}{\symCardinalityBNorm}, \frac{\symCardinalityBNorm}{\symCardinalityANorm})]$, the inequality 
\begin{equation*}
0\leq (\symCardinalityANorm-\symCardinalityBNorm\symJaccard)(\symCardinalityBNorm-\symCardinalityANorm\symJaccard)\leq \frac{1}{4}(1-\symJaccard)^2
\end{equation*}
is satisfied. Left and right equality are obtained for $\lbrace\symCardinalityANorm, \symCardinalityBNorm\rbrace = \lbrace 1/(1+\symJaccard), \symJaccard/(1+\symJaccard)\rbrace$ and $\symCardinalityANorm = \symCardinalityBNorm = \frac{1}{2}$, respectively.
\end{lemma}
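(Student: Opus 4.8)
The plan is to handle the two inequalities separately, each reducing to a short observation once the constraints are unpacked. For the left inequality, I would first note that the hypothesis $\symJaccard \in [0, \min(\symCardinalityANorm/\symCardinalityBNorm, \symCardinalityBNorm/\symCardinalityANorm)]$ is exactly what forces both factors to be nonnegative: $\symJaccard \leq \symCardinalityANorm/\symCardinalityBNorm$ gives $\symCardinalityANorm - \symCardinalityBNorm\symJaccard \geq 0$, and $\symJaccard \leq \symCardinalityBNorm/\symCardinalityANorm$ gives $\symCardinalityBNorm - \symCardinalityANorm\symJaccard \geq 0$. Since a product of two nonnegative reals is nonnegative, this settles $0 \leq (\symCardinalityANorm-\symCardinalityBNorm\symJaccard)(\symCardinalityBNorm-\symCardinalityANorm\symJaccard)$. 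Left equality occurs precisely when one factor vanishes; solving $\symCardinalityANorm = \symCardinalityBNorm\symJaccard$ (or the symmetric equation) together with $\symCardinalityANorm + \symCardinalityBNorm = 1$ yields $\symCardinalityBNorm(1+\symJaccard) = 1$, i.e. $\{\symCardinalityANorm, \symCardinalityBNorm\} = \{1/(1+\symJaccard), \symJaccard/(1+\symJaccard)\}$, as claimed.

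For the right inequality, the key algebraic step I would carry out is to expand the product and eliminate $\symCardinalityANorm^2 + \symCardinalityBNorm^2$ using the constraint $\symCardinalityANorm + \symCardinalityBNorm = 1$. Expanding gives $(\symCardinalityANorm-\symCardinalityBNorm\symJaccard)(\symCardinalityBNorm-\symCardinalityANorm\symJaccard) = \symCardinalityANorm\symCardinalityBNorm(1+\symJaccard^2) - \symJaccard(\symCardinalityANorm^2 + \symCardinalityBNorm^2)$, and substituting $\symCardinalityANorm^2 + \symCardinalityBNorm^2 = 1 - 2\symCardinalityANorm\symCardinalityBNorm$ collapses this to the clean identity $\symCardinalityANorm\symCardinalityBNorm(1+\symJaccard)^2 - \symJaccard$. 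Everything then reduces to bounding $\symCardinalityANorm\symCardinalityBNorm$ from above.

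That bound comes from the arithmetic–geometric mean inequality applied to $\symCardinalityANorm + \symCardinalityBNorm = 1$, namely $\symCardinalityANorm\symCardinalityBNorm \leq \frac{1}{4}$, with equality iff $\symCardinalityANorm = \symCardinalityBNorm = \frac{1}{2}$. Since $(1+\symJaccard)^2 \geq 0$, this gives $\symCardinalityANorm\symCardinalityBNorm(1+\symJaccard)^2 - \symJaccard \leq \frac{1}{4}(1+\symJaccard)^2 - \symJaccard$, and a one-line expansion shows the right-hand side equals $\frac{1}{4}(1-\symJaccard)^2$, which completes the upper bound and simultaneously pins the equality case to $\symCardinalityANorm = \symCardinalityBNorm = \frac{1}{2}$. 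I do not anticipate a genuine obstacle here; the only step requiring any care is the reduction to $\symCardinalityANorm\symCardinalityBNorm(1+\symJaccard)^2 - \symJaccard$, after which the whole claim follows immediately from $\symCardinalityANorm\symCardinalityBNorm \leq \frac{1}{4}$.
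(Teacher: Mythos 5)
Your proof is correct, and the left inequality together with its equality case is handled exactly as in the paper. For the right inequality you take a genuinely different, though closely related, route: the paper applies the AM--GM inequality directly to the two factors, writing $(\symCardinalityANorm-\symCardinalityBNorm\symJaccard)(\symCardinalityBNorm-\symCardinalityANorm\symJaccard)\leq\bigl(\tfrac{(\symCardinalityANorm-\symCardinalityBNorm\symJaccard)+(\symCardinalityBNorm-\symCardinalityANorm\symJaccard)}{2}\bigr)^2=\tfrac{1}{4}(1-\symJaccard)^2$ in one step, with equality pinned down by $\symCardinalityANorm-\symCardinalityBNorm\symJaccard=\symCardinalityBNorm-\symCardinalityANorm\symJaccard$. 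You instead first establish the identity $(\symCardinalityANorm-\symCardinalityBNorm\symJaccard)(\symCardinalityBNorm-\symCardinalityANorm\symJaccard)=\symCardinalityANorm\symCardinalityBNorm(1+\symJaccard)^2-\symJaccard$ and then apply AM--GM to $\symCardinalityANorm,\symCardinalityBNorm$ themselves via $\symCardinalityANorm\symCardinalityBNorm\leq\tfrac{1}{4}$. Your algebra checks out, and since $(1+\symJaccard)^2\geq 1>0$ the equality case is correctly forced to $\symCardinalityANorm=\symCardinalityBNorm=\tfrac{1}{2}$. The paper's version is shorter; yours costs an expansion but yields the reformulation $\symCardinalityANorm\symCardinalityBNorm(1+\symJaccard)^2-\symJaccard$, which makes the dependence on the single quantity $\symCardinalityANorm\symCardinalityBNorm$ explicit and is arguably more transparent about where the extremes of the collision probability come from.
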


\begin{proof}
$\symJaccard\in[0, \min(\frac{\symCardinalityANorm}{\symCardinalityBNorm}, \frac{\symCardinalityBNorm}{\symCardinalityANorm})]$ implies $\symCardinalityANorm - \symCardinalityBNorm\symJaccard\geq 0$ and $\symCardinalityBNorm - \symCardinalityANorm\symJaccard\geq 0$. Hence, the left inequality clearly holds and is equal if either $\symCardinalityANorm - \symCardinalityBNorm\symJaccard = 0$ or $\symCardinalityBNorm - \symCardinalityANorm\symJaccard = 0$, which gives together with $\symCardinalityANorm + \symCardinalityBNorm = 1$ the two solutions.

For the right inequality we have
\begin{equation*}
(\symCardinalityANorm-\symCardinalityBNorm\symJaccard)(\symCardinalityBNorm-\symCardinalityANorm\symJaccard)
\leq
\left(\frac{(\symCardinalityANorm-\symCardinalityBNorm\symJaccard) + (\symCardinalityBNorm-\symCardinalityANorm\symJaccard)}{2}\right)^2
=
\frac{1}{4}(1-\symJaccard)^2
\end{equation*}
where we used the inequality of arithmetic and geometric means $\sqrt{\symX\symY}\leq \frac{\symX+\symY}{2}$. Equality is achieved, if $\symCardinalityANorm-\symCardinalityBNorm\symJaccard = \symCardinalityBNorm-\symCardinalityANorm\symJaccard$ which implies $\symCardinalityANorm=\symCardinalityBNorm=\frac{1}{2}$.
\end{proof}

\begin{lemma}
\label{lem:pb_lim}
$\lim_{\symBase\rightarrow 1}\symProbFunc_{\symBase}(\symX) = \symX$
where the function $\symProbFunc_{\symBase}(\symX)$ is defined for $\symBase>1$ and $\symX\in[0,1]$ as
$\symProbFunc_{\symBase}(\symX):=-\log_\symBase(
1-\symX\frac{\symBase-1}{\symBase}
)$.
\end{lemma}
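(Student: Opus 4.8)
The plan is to rewrite $\symProbFunc_{\symBase}$ by changing the logarithm base and then to recognize the limit as a product of standard limits. First I would write
\begin{equation*}
\symProbFunc_{\symBase}(\symX)
=
-\frac{\log\!\left(1-\symX\tfrac{\symBase-1}{\symBase}\right)}{\log \symBase},
\end{equation*}
which exposes that both numerator and denominator vanish as $\symBase\rightarrow 1$: the argument of the inner logarithm tends to $1$, so the numerator tends to $0$, while $\log\symBase\rightarrow 0$. Hence the quotient is of indeterminate form $0/0$ and the limit is not immediate from continuity alone.

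Next I would factor the quotient into pieces whose limits are already known. Setting $t:=-\symX\tfrac{\symBase-1}{\symBase}$, which satisfies $t\rightarrow 0$ as $\symBase\rightarrow 1$, I would use $\log(1+t)=t\cdot\tfrac{\log(1+t)}{t}$ to obtain
\begin{equation*}
\symProbFunc_{\symBase}(\symX)
=
\frac{\symX}{\symBase}
\cdot
\frac{\symBase-1}{\log \symBase}
\cdot
\frac{\log(1+t)}{t}.
\end{equation*}
The three factors converge separately: $\tfrac{\symX}{\symBase}\rightarrow \symX$, the standard limit $\tfrac{\symBase-1}{\log\symBase}\rightarrow 1$, and $\tfrac{\log(1+t)}{t}\rightarrow 1$ since $t\rightarrow 0$. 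Multiplying the limits then yields $\symProbFunc_{\symBase}(\symX)\rightarrow \symX$, as claimed.

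As an alternative, since the quotient has the indeterminate form $0/0$, I could instead apply L'Hôpital's rule directly in $\symBase$. Differentiating numerator and denominator gives
\begin{equation*}
\frac{\symX/\symBase}{1-\symX+\symX/\symBase},
\end{equation*}
which evaluates to $\symX$ at $\symBase=1$, recovering the same result.

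There is essentially no genuine obstacle here; the statement is a routine limit computation. The only point requiring a moment of care is to confirm the $0/0$ indeterminacy and to invoke the well-known limit $\lim_{\symBase\rightarrow 1}\tfrac{\symBase-1}{\log\symBase}=1$ (equivalently $\lim_{t\rightarrow 0}\tfrac{\log(1+t)}{t}=1$), after which the conclusion follows at once. I would favour the factorization route over L'Hôpital, since it avoids differentiation and makes the role of each standard limit transparent.
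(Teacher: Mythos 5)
Your proposal is correct. Your preferred route differs from the paper's: you factor $\symProbFunc_{\symBase}(\symX)$ as $\frac{\symX}{\symBase}\cdot\frac{\symBase-1}{\log\symBase}\cdot\frac{\log(1+t)}{t}$ and invoke the two standard limits $\frac{\symBase-1}{\log\symBase}\rightarrow 1$ and $\frac{\log(1+t)}{t}\rightarrow 1$, whereas the paper first splits $-\log_\symBase(1-\symX\frac{\symBase-1}{\symBase})$ as $1-\frac{\log(\symBase-\symX(\symBase-1))}{\log\symBase}$ and then resolves the remaining $0/0$ quotient by L'H\^{o}pital's rule. The L'H\^{o}pital alternative you sketch at the end is essentially the paper's argument (your direct differentiation of the original quotient gives $\frac{\symX/\symBase}{1-\symX+\symX/\symBase}\rightarrow\symX$, matching the paper's $1-(1-\symX)$ after its algebraic split), so the two are interchangeable; your factorization buys a derivative-free argument at the cost of introducing the substitution $t=-\symX\frac{\symBase-1}{\symBase}$, for which the degenerate case $\symX=0$ (where $t\equiv 0$ and the factor $\frac{\log(1+t)}{t}$ is formally undefined) should be dispatched separately by noting $\symProbFunc_{\symBase}(0)=0$ identically. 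Either way the lemma holds.
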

\begin{proof}
\begin{align*}
\lim_{\symBase\rightarrow 1}\symProbFunc_{\symBase}(\symX) 
&= 
\lim_{\symBase\rightarrow 1} -\log_\symBase(
1-\symX{\textstyle
\frac{\symBase-1}{\symBase}}
)
\\
&=
\lim_{\symBase\rightarrow 1}
\log_\symBase(
\symBase)
- \log_\symBase(
\symBase-\symX(\symBase-1)
)
\\
&= 
1-\lim_{\symBase\rightarrow 1} \log_\symBase(
\symBase-\symX(\symBase-1)
)
\\
&= 
1-\lim_{\symBase\rightarrow 1} {\textstyle
\frac{\log(
\symBase-\symX(\symBase-1)
)}{\log\symBase}}
\\
&=
1-\lim_{\symBase\rightarrow 1} {\textstyle
\frac{\frac{1-\symX}{
\symBase-\symX(\symBase-1)
}}{\frac{1}{\symBase}}}
=
1-(1-\symX)
=
\symX,
\end{align*}
where we used L'Hospital's rule. 
\end{proof}

\begin{lemma}
\label{lem:ml_estimate_mh}
The maximum of the function
\begin{multline*}
\log\symLikelihood(\symJaccard) = \lim_{\symBase\rightarrow 1} \symDiffCountVariate_{+} \log(\symProbFunc_{\symBase}(\symCardinalityANorm - \symCardinalityBNorm\symJaccard)) + \symDiffCountVariate_{-}\log(\symProbFunc_{\symBase}(\symCardinalityBNorm - \symCardinalityANorm\symJaccard))
\\
+\symDiffCountVariate_{0}\log(
1
-
\symProbFunc_{\symBase}(\symCardinalityANorm - \symCardinalityBNorm\symJaccard)
-
\symProbFunc_{\symBase}(\symCardinalityBNorm - \symCardinalityANorm\symJaccard)
)
\end{multline*}
with $\symCardinalityANorm,\symCardinalityBNorm>0$, $\symCardinalityANorm+\symCardinalityBNorm=1$, $\symDiffCountVariate_{+}, \symDiffCountVariate_{-}, \symDiffCountVariate_{0}\geq 0$, $\symDiffCountVariate_{+}+\symDiffCountVariate_{-}+\symDiffCountVariate_{0}=\symNumReg>0$, and $\symProbFunc_{\symBase}(\symX):=-\log_\symBase(
1-\symX\frac{\symBase-1}{\symBase}
)$ and $\symJaccard\in [0, \min(\frac{\symCardinalityANorm}{\symCardinalityBNorm}, \frac{\symCardinalityBNorm}{\symCardinalityANorm})]$ is given by 
\begin{equation*}
\textstyle
\symJaccardEstimate
=
\frac{
\symCardinalityANorm^2
(\symDiffCountVariate_0+\symDiffCountVariate_{-})
+
\symCardinalityBNorm^2
(\symDiffCountVariate_0+\symDiffCountVariate_{+})
-\sqrt{
\left(
\symCardinalityANorm^2
(\symDiffCountVariate_0+\symDiffCountVariate_{-})
-
\symCardinalityBNorm^2
(\symDiffCountVariate_0+\symDiffCountVariate_{+})
\right)^2
+
4
\symDiffCountVariate_{-}
\symDiffCountVariate_{+}
\symCardinalityANorm^2
\symCardinalityBNorm^2
}
}{
2\symNumReg\symCardinalityANorm
\symCardinalityBNorm
}.
\end{equation*}
\end{lemma}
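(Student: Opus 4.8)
The plan is to first collapse the $\symBase\to1$ limit and then reduce the maximization to a single quadratic equation. By \cref{lem:pb_lim} we have $\lim_{\symBase\to1}\symProbFunc_\symBase(\symX)=\symX$, so the three limiting probabilities become $\symCardinalityANorm-\symCardinalityBNorm\symJaccard$, $\symCardinalityBNorm-\symCardinalityANorm\symJaccard$, and, using $\symCardinalityANorm+\symCardinalityBNorm=1$,
\[
1-(\symCardinalityANorm-\symCardinalityBNorm\symJaccard)-(\symCardinalityBNorm-\symCardinalityANorm\symJaccard)=\symJaccard.
\]
Hence the objective simplifies to
\[
\log\symLikelihood(\symJaccard)=\symDiffCountVariate_{+}\log(\symCardinalityANorm-\symCardinalityBNorm\symJaccard)+\symDiffCountVariate_{-}\log(\symCardinalityBNorm-\symCardinalityANorm\symJaccard)+\symDiffCountVariate_{0}\log\symJaccard,
\]
a nonnegative combination of logarithms of affine functions. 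Each such term is concave, and since $\symNumReg>0$ at least one coefficient is positive, so $\log\symLikelihood$ is strictly concave on $[0,\min(\frac{\symCardinalityANorm}{\symCardinalityBNorm},\frac{\symCardinalityBNorm}{\symCardinalityANorm})]$ (consistent with \cref{lem:concavity} in the limit $\symBase\to1$), and its maximizer is unique.

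Next I would locate the maximizer. On the interior both $\symCardinalityANorm-\symCardinalityBNorm\symJaccard>0$ and $\symCardinalityBNorm-\symCardinalityANorm\symJaccard>0$, and
\[
(\log\symLikelihood)'(\symJaccard)=\frac{\symDiffCountVariate_{0}}{\symJaccard}-\frac{\symDiffCountVariate_{+}\symCardinalityBNorm}{\symCardinalityANorm-\symCardinalityBNorm\symJaccard}-\frac{\symDiffCountVariate_{-}\symCardinalityANorm}{\symCardinalityBNorm-\symCardinalityANorm\symJaccard}.
\]
Clearing the positive common denominator $\symJaccard(\symCardinalityANorm-\symCardinalityBNorm\symJaccard)(\symCardinalityBNorm-\symCardinalityANorm\symJaccard)$, the numerator is a quadratic $g(\symJaccard)$ with the same sign as the derivative, and after collecting coefficients (the leading one reduces via $\symDiffCountVariate_{0}+\symDiffCountVariate_{+}+\symDiffCountVariate_{-}=\symNumReg$) the critical-point condition $g(\symJaccard)=0$ reads
\[
\symNumReg\symCardinalityANorm\symCardinalityBNorm\,\symJaccard^2-\bigl(\symCardinalityANorm^2(\symDiffCountVariate_{0}+\symDiffCountVariate_{-})+\symCardinalityBNorm^2(\symDiffCountVariate_{0}+\symDiffCountVariate_{+})\bigr)\symJaccard+\symDiffCountVariate_{0}\symCardinalityANorm\symCardinalityBNorm=0.
\]

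The crux is simplifying the discriminant, which is where I expect the only real work. Writing $P:=\symCardinalityANorm^2(\symDiffCountVariate_{0}+\symDiffCountVariate_{-})$ and $Q:=\symCardinalityBNorm^2(\symDiffCountVariate_{0}+\symDiffCountVariate_{+})$, I would apply $(P+Q)^2=(P-Q)^2+4PQ$ together with $\symNumReg=\symDiffCountVariate_{0}+\symDiffCountVariate_{+}+\symDiffCountVariate_{-}$ to obtain
\[
4PQ-4\symNumReg\symDiffCountVariate_{0}\symCardinalityANorm^2\symCardinalityBNorm^2=4\symCardinalityANorm^2\symCardinalityBNorm^2\bigl((\symDiffCountVariate_{0}+\symDiffCountVariate_{-})(\symDiffCountVariate_{0}+\symDiffCountVariate_{+})-\symNumReg\symDiffCountVariate_{0}\bigr)=4\symCardinalityANorm^2\symCardinalityBNorm^2\symDiffCountVariate_{+}\symDiffCountVariate_{-},
\]
since all $\symDiffCountVariate_{0}$ cross terms cancel, leaving exactly $\symDiffCountVariate_{+}\symDiffCountVariate_{-}$. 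Thus the discriminant equals $(P-Q)^2+4\symDiffCountVariate_{+}\symDiffCountVariate_{-}\symCardinalityANorm^2\symCardinalityBNorm^2$, which is precisely the radicand claimed, and the quadratic formula reproduces the stated $\symJaccardEstimate$ up to the choice of sign.

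Finally I would justify the minus sign. The quadratic $g$ has positive leading coefficient and $g(0)=\symDiffCountVariate_{0}\symCardinalityANorm\symCardinalityBNorm\ge0$; in the generic case where $\symDiffCountVariate_{+},\symDiffCountVariate_{-},\symDiffCountVariate_{0}$ are all positive, $(\log\symLikelihood)'$ runs from $+\infty$ at the left endpoint to $-\infty$ at the right endpoint (where one affine factor vanishes), so this upward parabola crosses from positive to negative at its \emph{smaller} root, which by strict concavity is the unique interior maximizer, while the larger root lies outside the admissible interval. This selects the minus sign. The only delicate point is the degenerate boundary cases in which some count vanishes, but the closed form is continuous in $\symDiffCountVariate_{+},\symDiffCountVariate_{-},\symDiffCountVariate_{0}$ and reduces correctly there—for instance, when $\symDiffCountVariate_{0}=0$ the radicand collapses to the perfect square $(\symCardinalityANorm^2\symDiffCountVariate_{-}+\symCardinalityBNorm^2\symDiffCountVariate_{+})^2$ and $\symJaccardEstimate=0$—so no separate treatment is required.
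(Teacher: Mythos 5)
Your proposal is correct and follows essentially the same route as the paper's proof: collapse the $\symBase\rightarrow 1$ limit via \cref{lem:pb_lim}, reduce the stationarity condition to the same quadratic $\symNumReg\symCardinalityANorm\symCardinalityBNorm\symJaccard^2-(\symCardinalityANorm^2(\symDiffCountVariate_0+\symDiffCountVariate_{-})+\symCardinalityBNorm^2(\symDiffCountVariate_0+\symDiffCountVariate_{+}))\symJaccard+\symDiffCountVariate_0\symCardinalityANorm\symCardinalityBNorm=0$, and select the smaller root. The only cosmetic difference is that you justify the minus sign by a sign analysis of the parabola and the boundary behavior of the derivative, while the paper bounds the smaller root directly inside $(0,\min(\frac{\symCardinalityANorm}{\symCardinalityBNorm},\frac{\symCardinalityBNorm}{\symCardinalityANorm}))$ by a chain of inequalities; both are valid, and your explicit verification of the discriminant identity is a welcome detail the paper leaves implicit.
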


\begin{proof}
Using \cref{lem:pb_lim} and $\symCardinalityANorm+\symCardinalityBNorm=1$ we have
\begin{equation*}
\log\symLikelihood(\symJaccard) = \symDiffCountVariate_{+} \log(\symCardinalityANorm - \symCardinalityBNorm\symJaccard) + \symDiffCountVariate_{-}\log(\symCardinalityBNorm - \symCardinalityANorm\symJaccard)
+\symDiffCountVariate_{0}\log(\symJaccard).
\end{equation*}
Assume first $\symDiffCountVariate_{+}, \symDiffCountVariate_{-}, \symDiffCountVariate_{0}>0$, which implies 
$\lim_{\symJaccard\rightarrow 0} \log\symLikelihood(\symJaccard) = -\infty$ and $\lim_{\symJaccard\rightarrow \min(\frac{\symCardinalityANorm}{\symCardinalityBNorm}, \frac{\symCardinalityBNorm}{\symCardinalityANorm})} \log\symLikelihood(\symJaccard) = -\infty$. Therefore, and because $\log\symLikelihood(\symJaccard)$ is strictly concave according to \cref{lem:concavity}, the \ac{ML} estimate is unique and from $(0, \min(\frac{\symCardinalityANorm}{\symCardinalityBNorm}, \frac{\symCardinalityBNorm}{\symCardinalityANorm}))$ and can be found as the stationary point.

Setting the first derivative of $\log\symLikelihood(\symJaccard)$ equal to 0 gives 
\begin{equation*}
\textstyle
(\log\symLikelihood(\symJaccard))' = 
\symDiffCountVariate_{+} \frac{\symCardinalityBNorm}{\symCardinalityBNorm\symJaccard - \symCardinalityANorm}
+ \symDiffCountVariate_{-}\frac{\symCardinalityANorm}{\symCardinalityANorm\symJaccard- \symCardinalityBNorm}
+\symDiffCountVariate_{0}\frac{1}{\symJaccard}
=
0,
\end{equation*}
which can be transformed into the quadratic equation when using $\symDiffCountVariate_{+}+\symDiffCountVariate_{-}+\symDiffCountVariate_{0}=\symNumReg$
\begin{equation*}
\symNumReg
\symCardinalityANorm
\symCardinalityBNorm
\symJaccard^2
-
(
\symCardinalityANorm^2
(\symDiffCountVariate_0+\symDiffCountVariate_{-})
+
\symCardinalityBNorm^2
(\symDiffCountVariate_0+\symDiffCountVariate_{+})
)
\symJaccard
 +
\symDiffCountVariate_0
\symCardinalityANorm
\symCardinalityBNorm
=
0.
\end{equation*}
Using the quadratic formula, the smaller solution is given by
\begin{equation*}
\textstyle
\symJaccardEstimate
=
\frac{
\symCardinalityANorm^2
(\symDiffCountVariate_0+\symDiffCountVariate_{-})
+
\symCardinalityBNorm^2
(\symDiffCountVariate_0+\symDiffCountVariate_{+})
-\sqrt{
\left(
\symCardinalityANorm^2
(\symDiffCountVariate_0+\symDiffCountVariate_{-})
-
\symCardinalityBNorm^2
(\symDiffCountVariate_0+\symDiffCountVariate_{+})
\right)^2
+
4
\symDiffCountVariate_{-}
\symDiffCountVariate_{+}
\symCardinalityANorm^2
\symCardinalityBNorm^2
}
}{
2\symNumReg\symCardinalityANorm
\symCardinalityBNorm
}.
\end{equation*}
For this solution the following inequalities hold
\begin{multline*}
0
=
\scriptstyle
\frac{
\symCardinalityANorm^2
(\symDiffCountVariate_0+\symDiffCountVariate_{-})
+
\symCardinalityBNorm^2
(\symDiffCountVariate_0+\symDiffCountVariate_{+})
-\sqrt{
\left(
\symCardinalityANorm^2
(\symDiffCountVariate_0+\symDiffCountVariate_{-})
-
\symCardinalityBNorm^2
(\symDiffCountVariate_0+\symDiffCountVariate_{+})
\right)^2
+
4
(\symDiffCountVariate_0+\symDiffCountVariate_{-})
(\symDiffCountVariate_0+\symDiffCountVariate_{+})
\symCardinalityANorm^2
\symCardinalityBNorm^2
}
}{
2\symNumReg\symCardinalityANorm
\symCardinalityBNorm
}
\\
<
\symJaccardEstimate
<
\textstyle
\frac{
\symCardinalityANorm^2
(\symDiffCountVariate_0+\symDiffCountVariate_{-})
+
\symCardinalityBNorm^2
(\symDiffCountVariate_0+\symDiffCountVariate_{+})
-\sqrt{
\left(
\symCardinalityANorm^2
(\symDiffCountVariate_0+\symDiffCountVariate_{-})
-
\symCardinalityBNorm^2
(\symDiffCountVariate_0+\symDiffCountVariate_{+})
\right)^2
}
}{
2\symNumReg\symCardinalityANorm
\symCardinalityBNorm
}
\\
\textstyle
=
\min(
\frac{\symCardinalityANorm}{\symCardinalityBNorm}
\frac{\symDiffCountVariate_0+\symDiffCountVariate_{-}}{\symNumReg}
,
\frac{\symCardinalityBNorm}{\symCardinalityANorm}
\frac{\symDiffCountVariate_0+\symDiffCountVariate_{+}}{\symNumReg}
)
\leq
\min(
\frac{\symCardinalityANorm}{\symCardinalityBNorm}
,
\frac{\symCardinalityBNorm}{\symCardinalityANorm}
)
\end{multline*}
which shows that the smaller solution satisfies $\symJaccardEstimate\in(0, \min(\frac{\symCardinalityANorm}{\symCardinalityBNorm}, \frac{\symCardinalityBNorm}{\symCardinalityANorm}))$. Since we know that the solution is unique, $\symJaccardEstimate$ is indeed the searched solution.

The cases for which at least one of $\symDiffCountVariate_{-}$, $\symDiffCountVariate_{+}$, $\symDiffCountVariate_{0}$ is zero, need separate treatment. However, it is easy to verify that the above formula for $\symJaccardEstimate$ also holds for those special cases.
\end{proof}

\begin{lemma}
\label{lem:fisher_limit}
The Fisher information given in \cref{lem:fisher} has the limit
\begin{equation*}
\lim_{\symBase\rightarrow 1} \symFisher(\symJaccard) = 
\textstyle
\frac{\symNumReg}{\symJaccard(1-\symJaccard)}
\frac{
1
}{
1
-
  \frac{
    (\symCardinalityANorm-\symCardinalityBNorm)^2\symJaccard
  }{
    \symCardinalityANorm\symCardinalityBNorm(1-\symJaccard)^2
  }
}
\end{equation*}
as $\symBase\rightarrow 1$.
\end{lemma}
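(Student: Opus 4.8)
The plan is to evaluate the limit of the bracketed expression for $\symFisher(\symJaccard)$ in \cref{lem:fisher} by letting $\symBase\to 1$ factor by factor, and then to reconcile the resulting rational function with the target by elementary algebra. First I would dispose of the scalar prefactor: since $\log\symBase\sim\symBase-1$ as $\symBase\to 1$, the ratio $\frac{(\symBase-1)^2}{\symBase^2\log^2(\symBase)}\to 1$. Next, \cref{lem:pb_lim} gives $\symProbFunc_\symBase(\symX)\to\symX$ for each fixed argument, so $\symProbFunc_\symBase(\symCardinalityANorm-\symCardinalityBNorm\symJaccard)\to\symCardinalityANorm-\symCardinalityBNorm\symJaccard$ and analogously for the other argument, while $\symBase^{\symProbFunc_\symBase(\cdot)}\to 1$ because the base tends to $1$ with a bounded exponent. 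Since all three denominators in the bracket tend to strictly positive limits for $\symJaccard\in[0,\min(\frac{\symCardinalityANorm}{\symCardinalityBNorm},\frac{\symCardinalityBNorm}{\symCardinalityANorm}))$ (by \cref{lem:prob_consistency}), the limit may be taken termwise.

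Using $\symCardinalityANorm+\symCardinalityBNorm=1$, the denominator of the third term collapses to $1-(\symCardinalityANorm-\symCardinalityBNorm\symJaccard)-(\symCardinalityBNorm-\symCardinalityANorm\symJaccard)=\symJaccard$, and its numerator $(\symCardinalityBNorm+\symCardinalityANorm)^2=1$. This yields the intermediate form
\begin{equation*}
\lim_{\symBase\rightarrow 1}\symFisher(\symJaccard) = \symNumReg\left(\frac{\symCardinalityBNorm^2}{\symCardinalityANorm - \symCardinalityBNorm\symJaccard} + \frac{\symCardinalityANorm^2}{\symCardinalityBNorm - \symCardinalityANorm\symJaccard} + \frac{1}{\symJaccard}\right).
\end{equation*}

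The remaining work is purely algebraic. Combining the first two fractions over the common denominator $(\symCardinalityANorm-\symCardinalityBNorm\symJaccard)(\symCardinalityBNorm-\symCardinalityANorm\symJaccard)$ gives numerator $\symCardinalityANorm^3+\symCardinalityBNorm^3-\symCardinalityANorm\symCardinalityBNorm\symJaccard=1-3\symCardinalityANorm\symCardinalityBNorm-\symCardinalityANorm\symCardinalityBNorm\symJaccard$ and denominator $\symCardinalityANorm\symCardinalityBNorm(1+\symJaccard)^2-\symJaccard$, where I use $\symCardinalityANorm^3+\symCardinalityBNorm^3=1-3\symCardinalityANorm\symCardinalityBNorm$ and $\symCardinalityANorm^2+\symCardinalityBNorm^2=1-2\symCardinalityANorm\symCardinalityBNorm$. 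Adding $1/\symJaccard$ and simplifying, the combined numerator collapses to $\symCardinalityANorm\symCardinalityBNorm(1-\symJaccard)$, giving
\begin{equation*}
\lim_{\symBase\rightarrow 1}\symFisher(\symJaccard) = \frac{\symNumReg\,\symCardinalityANorm\symCardinalityBNorm(1-\symJaccard)}{\symJaccard\left(\symCardinalityANorm\symCardinalityBNorm(1+\symJaccard)^2 - \symJaccard\right)}.
\end{equation*}

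Finally I would match this against the target by rewriting the target denominator as $\symJaccard(1-\symJaccard)\bigl(1-\frac{(\symCardinalityANorm-\symCardinalityBNorm)^2\symJaccard}{\symCardinalityANorm\symCardinalityBNorm(1-\symJaccard)^2}\bigr)$ and clearing it; the two forms agree exactly when $\symCardinalityANorm\symCardinalityBNorm(1+\symJaccard)^2-\symJaccard=\symCardinalityANorm\symCardinalityBNorm(1-\symJaccard)^2-(\symCardinalityANorm-\symCardinalityBNorm)^2\symJaccard$. The crux, and the one step worth verifying carefully rather than waving through, is the identity $(\symCardinalityANorm-\symCardinalityBNorm)^2=1-4\symCardinalityANorm\symCardinalityBNorm$, valid since $\symCardinalityANorm+\symCardinalityBNorm=1$; substituting it turns the difference of the two denominators into $\symJaccard\bigl(4\symCardinalityANorm\symCardinalityBNorm-1+(\symCardinalityANorm-\symCardinalityBNorm)^2\bigr)=0$. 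No individual step is deep; the main obstacle is simply keeping the symmetric polynomials in $\symCardinalityANorm,\symCardinalityBNorm$ organized so that the cancellation is transparent.
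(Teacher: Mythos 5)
Your proposal is correct and follows essentially the same route as the paper's proof: take the limit termwise using $\lim_{\symBase\to1}\frac{(\symBase-1)^2}{\symBase^2\log^2\symBase}=1$ together with \cref{lem:pb_lim}, reduce to $\symNumReg\bigl(\frac{\symCardinalityBNorm^2}{\symCardinalityANorm-\symCardinalityBNorm\symJaccard}+\frac{\symCardinalityANorm^2}{\symCardinalityBNorm-\symCardinalityANorm\symJaccard}+\frac{1}{\symJaccard}\bigr)$, and then simplify algebraically to $\frac{\symNumReg\,\symCardinalityANorm\symCardinalityBNorm(1-\symJaccard)}{\symJaccard(\symCardinalityANorm-\symCardinalityBNorm\symJaccard)(\symCardinalityBNorm-\symCardinalityANorm\symJaccard)}$ before matching the target via $(\symCardinalityANorm-\symCardinalityBNorm)^2=1-4\symCardinalityANorm\symCardinalityBNorm$. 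The only difference is cosmetic bookkeeping of the symmetric polynomials (you expand $(\symCardinalityANorm-\symCardinalityBNorm\symJaccard)(\symCardinalityBNorm-\symCardinalityANorm\symJaccard)$ as $\symCardinalityANorm\symCardinalityBNorm(1+\symJaccard)^2-\symJaccard$ where the paper keeps it factored), and your algebra checks out.
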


\begin{proof}
Using $\lim_{\symBase\rightarrow 1}\frac{(\symBase - 1)^2}{\symBase^2\log^2(\symBase)}=1$ and \cref{lem:pb_lim}, which implies $\lim_{\symBase\rightarrow 1}\symProbFunc_{\symBase}(\symX)=\symX$ and $\lim_{\symBase\rightarrow 1}\symBase^{\symProbFunc_{\symBase}(\symX)}=1$, the Fisher information as given in \cref{lem:fisher} becomes
\begin{align*}
\lim_{\symBase\rightarrow 1}\symFisher(\symJaccard)
&= 
\textstyle
\symNumReg
\left(
\frac{
\symCardinalityBNorm^2
}{
\symCardinalityANorm - \symCardinalityBNorm \symJaccard}
+
\frac{
\symCardinalityANorm^2
}{
\symCardinalityBNorm - \symCardinalityANorm \symJaccard
}
+
\frac{
\left(
\symCardinalityBNorm
+
\symCardinalityANorm
\right)^2
}{
1
-
(\symCardinalityANorm - \symCardinalityBNorm \symJaccard)
-
(\symCardinalityBNorm - \symCardinalityANorm \symJaccard)
}
\right)
\\
&=
\textstyle
\symNumReg
\left(
\frac{
\symCardinalityBNorm^2
}{
\symCardinalityANorm - \symCardinalityBNorm \symJaccard}
+
\frac{
\symCardinalityANorm^2
}{
\symCardinalityBNorm - \symCardinalityANorm \symJaccard
}
+
\frac{
\symCardinalityANorm+\symCardinalityBNorm}{
\symJaccard}
\right)
=
\symNumReg
\frac{\symCardinalityANorm\symCardinalityBNorm(1-\symJaccard)}{\symJaccard(\symCardinalityANorm - \symCardinalityBNorm \symJaccard)(\symCardinalityBNorm - \symCardinalityANorm \symJaccard)}
\\
&=
\textstyle
\frac{\symNumReg}{\symJaccard(1-\symJaccard)}
\frac{1}{\frac{(\symCardinalityANorm - \symCardinalityBNorm \symJaccard)(\symCardinalityBNorm-\symCardinalityANorm\symJaccard)}{\symCardinalityANorm\symCardinalityBNorm(1-\symJaccard)^2}}
=
\frac{\symNumReg}{\symJaccard(1-\symJaccard)}
\frac{
1
}{
1
-
  \frac{
    (\symCardinalityANorm-\symCardinalityBNorm)^2\symJaccard
  }{
    \symCardinalityANorm\symCardinalityBNorm(1-\symJaccard)^2
  }
},
\end{align*}
where we used $\symCardinalityANorm+\symCardinalityBNorm=1$ multiple times.
\end{proof}

\begin{lemma}
\label{lem:poisson_approx}
If the cardinality $\symCardinality$ is not fixed, but follows a Poisson distribution with mean $\symPoissonRate$, hence $\symCardinality\sim\symPoisson(\symPoissonRate)$, nonzero register values of a \ac{GHLL} sketch using stochastic averaging will be distributed as the register values of a SetSketch with parameter $\symExponentialRate=1/\symNumReg$ representing a set with cardinality $\symPoissonRate$.
\end{lemma}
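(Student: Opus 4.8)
The plan is to compute the marginal distribution of a single \ac{GHLL} register under the Poisson model directly and match it against the SetSketch law \eqref{equ:set_sketch_distribution}. First I would exploit the Poisson splitting property: since $\symCardinality\sim\symPoisson(\symPoissonRate)$ and stochastic averaging sends each element independently and uniformly to one of the $\symNumReg$ registers, the number $\symCountVariate_\symIndexI$ of elements landing in register $\symIndexI$ is itself Poisson distributed with mean $\symPoissonRate/\symNumReg$, and the counts across registers are mutually independent. This decouples the registers and reduces the problem to a single register fed by $\symCountVariate_\symIndexI\sim\symPoisson(\symPoissonRate/\symNumReg)$ independent update values.

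Next I would determine the conditional law of the register value given $\symCountVariate_\symIndexI=\nu$. Each element contributes $\lfloor 1-\log_\symBase \symHashFunc_2(\symElement)\rfloor$ with $\symHashFunc_2(\symElement)\sim\symUniform(0,1)$, and the register holds their maximum, taking the initialization value $0$ when $\nu=0$. Using $\lfloor 1-\log_\symBase \symStatisticX\rfloor\le\symRegVal \Leftrightarrow \symStatisticX>\symBase^{-\symRegVal}$, each update value is $\le\symRegVal$ with probability $1-\symBase^{-\symRegVal}$ for every $\symRegVal\ge 0$, so $\symProbability(\symRegValVariate_\symIndexI\le\symRegVal\mid\symCountVariate_\symIndexI=\nu)=(1-\symBase^{-\symRegVal})^\nu$. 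The empty-register case is absorbed automatically, since this expression equals $1$ at $\nu=0$, matching $\symRegValVariate_\symIndexI=0\le\symRegVal$.

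Marginalizing over $\symCountVariate_\symIndexI$ then collapses the Poisson series: $\symProbability(\symRegValVariate_\symIndexI\le\symRegVal)=e^{-\symPoissonRate/\symNumReg}\sum_{\nu\ge 0}\frac{(\symPoissonRate/\symNumReg)^\nu}{\nu!}(1-\symBase^{-\symRegVal})^\nu=e^{-\frac{\symPoissonRate}{\symNumReg}\symBase^{-\symRegVal}}$ for all $\symRegVal\ge0$. Comparing with \eqref{equ:set_sketch_distribution} evaluated at $\symExponentialRate=1/\symNumReg$ and $\symCardinality=\symPoissonRate$ yields exactly this expression, which establishes the claim.

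The step requiring the most care is not the algebra but the qualifier \emph{nonzero}: the identity above holds only for $\symRegVal\ge 0$. For $\symRegVal<0$ the \ac{GHLL} register, being clamped at its initialization value $0$, has $\symProbability(\symRegValVariate_\symIndexI\le\symRegVal)=0$, whereas the SetSketch law assigns positive probability there. Hence the two distributions agree precisely on the range of nonzero register values, and I would state the conclusion with that restriction, taking care that the boundary value $\symRegVal=0$ is included on both sides.
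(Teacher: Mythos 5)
Your proposal is correct and follows essentially the same route as the paper's proof: Poisson thinning to get a $\symPoisson(\symPoissonRate/\symNumReg)$ count per register, the conditional law $(1-\symBase^{-\symRegVal})^{\nu}$ for $\symRegVal\geq 0$, and collapsing the Poisson series to $e^{-\symPoissonRate\symNumReg^{-1}\symBase^{-\symRegVal}}$ before comparing with \eqref{equ:set_sketch_distribution}. Your closing remark on the \emph{nonzero} qualifier and the clamping at $\symRegVal=0$ is a welcome clarification that the paper leaves implicit by stating the CDF only for $\symRegVal\geq 0$.
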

\begin{proof}
Stochastic averaging means that each distinct element is used for updating only a single register. Since the number of distinct elements is Poisson distributed with mean $\symPoissonRate$, the number of elements for updating one particular register is also Poisson distributed with mean $\symPoissonRate\symNumReg^{-1}$ where $\symNumReg$ is the number of registers.

For \ac{GHLL} the update value $\symRegValVariate_\text{update}$ is distributed as 
$\symRegValVariate_\text{update}\sim \lfloor 1 - \log_\symBase \symHashFunc_2(\symElement)\rfloor$ 
with
$\symHashFunc_2(\symElement)\sim\symUniform(0,1)$ (see \cref{sec:intro_hyperloglog}).
The corresponding cumulative distribution function is given by
$\symProbability(\symRegValVariate_\text{update}\leq \symRegVal) = \symProbability(\lfloor 1 - \log_\symBase \symHashFunc_2(\symElement)\rfloor \leq \symRegVal) = \symProbability(\symHashFunc_2(\symElement) > \symBase^{-\symRegVal}) = 1 - \symBase^{-\symRegVal}$
for $\symRegVal\geq 0$.
As a consequence, the cumulative distribution function of some register value $\symRegValVariate_\symIndexI$ that is updated with a frequency that is Poisson distributed with mean $\symPoissonRate\symNumReg^{-1}$ and probability mass $\symDensity(\symIndexJ) = \frac{e^{-\symPoissonRate\symNumReg^{-1}}(\symPoissonRate\symNumReg^{-1})^\symIndexJ}{\symIndexJ!}$ is given by 
\begin{align*}
&\symProbability(\symRegValVariate_\symIndexI\leq\symRegVal)
= 
\sum_{\symIndexJ=0}^\infty \symDensity(\symIndexJ) 
\symProbability(\symRegValVariate_\text{update}\leq \symRegVal)^\symIndexJ
\\
&=
\sum_{\symIndexJ=0}^\infty 
{\frac{e^{-\symPoissonRate\symNumReg^{-1}}(\symPoissonRate\symNumReg^{-1})^\symIndexJ}{\symIndexJ!}}(1 - \symBase^{-\symRegVal})^\symIndexJ
=
e^{-\symPoissonRate\symNumReg^{-1}}
\sum_{\symIndexJ=0}^\infty 
\frac{
(
\symPoissonRate\symNumReg^{-1}(1- \symBase^{-\symRegVal})
)^\symIndexJ}{\symIndexJ!}
\\
&=
e^{-\symPoissonRate\symNumReg^{-1}}
e^{\symPoissonRate\symNumReg^{-1}(1- \symBase^{-\symRegVal})}
=
e^{-\symPoissonRate\symNumReg^{-1}\symBase^{-\symRegVal}},
\end{align*}
where we used the identity $e^\symX =\sum_{\symIndexJ=0}^\infty \frac{\symX^\symIndexJ}{\symIndexJ!}$.
Comparing $\symProbability(\symRegValVariate_\symIndexI\leq\symRegVal)=e^{-\symPoissonRate\symNumReg^{-1}\symBase^{-\symRegVal}}$ with \eqref{equ:set_sketch_distribution} completes the proof.
\end{proof}

\begin{lemma}
\label{lem:depoissonization}
Assume the distribution of some discrete statistic $\symStatisticX$ depends on some parameter $\symCardinality$, which is Poisson distributed with mean $\symPoissonRate$, hence $\symCardinality\sim\symPoisson(\symPoissonRate)$. Furthermore, assume that $\symStatisticX$ is an unbiased estimator for $\symPoissonRate$ for all $\symPoissonRate > 0$, hence $\symExpectation(\symStatisticX\vert \symPoissonRate) = \symPoissonRate$. Then, $\symStatisticX$ is also an unbiased estimator for $\symCardinality$, if $\symStatisticX$ is conditioned on $\symCardinality$, hence $\symExpectation(\symStatisticX\vert \symCardinality) = \symCardinality$ for all $\symCardinality\geq 0$.
\end{lemma}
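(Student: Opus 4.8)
The plan is to reduce the claim to an elementary uniqueness statement for power series, exploiting the analyticity in $\symPoissonRate$ of both sides of the unbiasedness hypothesis. First I would expand the Poisson expectation by the law of total expectation, conditioning on the value of the Poisson-distributed cardinality:
\begin{equation*}
\symExpectation(\symStatisticX\vert\symPoissonRate) = \sum_{\symCardinality=0}^\infty \frac{e^{-\symPoissonRate}\symPoissonRate^\symCardinality}{\symCardinality!}\,\symExpectation(\symStatisticX\vert\symCardinality).
\end{equation*}
This rearrangement is legitimate because the statistic of interest (the cardinality estimator) is nonnegative, so Tonelli's theorem permits the interchange of expectation and the Poisson sum, and because the left-hand side is finite by assumption.

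Next I would substitute the hypothesis $\symExpectation(\symStatisticX\vert\symPoissonRate)=\symPoissonRate$ and clear the factor $e^{-\symPoissonRate}$ to obtain
\begin{equation*}
\sum_{\symCardinality=0}^\infty \frac{\symExpectation(\symStatisticX\vert\symCardinality)}{\symCardinality!}\,\symPoissonRate^\symCardinality = \symPoissonRate\,e^{\symPoissonRate} = \sum_{\symCardinality=0}^\infty \frac{\symCardinality}{\symCardinality!}\,\symPoissonRate^\symCardinality,
\end{equation*}
where the last equality comes from writing $\symPoissonRate\,e^{\symPoissonRate}=\sum_{\symCardinality\geq 1}\symPoissonRate^\symCardinality/(\symCardinality-1)! = \sum_{\symCardinality\geq 0}\symCardinality\,\symPoissonRate^\symCardinality/\symCardinality!$ (the $\symCardinality=0$ term vanishing). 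At this point both sides are power series in $\symPoissonRate$ that agree for every $\symPoissonRate\in(0,\infty)$, so by uniqueness of power-series coefficients I can read off $\symExpectation(\symStatisticX\vert\symCardinality)/\symCardinality! = \symCardinality/\symCardinality!$, i.e.\ $\symExpectation(\symStatisticX\vert\symCardinality)=\symCardinality$ for all $\symCardinality\geq 0$, which is exactly the desired conclusion.

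The one step that needs care — and which I expect to be the main obstacle — is justifying that the series $\sum_\symCardinality \symExpectation(\symStatisticX\vert\symCardinality)\,\symPoissonRate^\symCardinality/\symCardinality!$ is a genuine (entire) power series to which the identity theorem applies. This is secured by the two observations already in hand: after multiplication by $e^{\symPoissonRate}$ it coincides with the entire function $\symPoissonRate\,e^{\symPoissonRate}$, forcing an infinite radius of convergence, and the nonnegativity of $\symStatisticX$ means every coefficient $\symExpectation(\symStatisticX\vert\symCardinality)$ is nonnegative, so no conditional cancellation can invalidate the coefficientwise comparison. With analyticity established, equating coefficients is immediate and completes the proof.
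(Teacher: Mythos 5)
Your proposal is correct and follows essentially the same route as the paper: expand $\symExpectation(\symStatisticX\vert\symPoissonRate)$ over the Poisson distribution of $\symCardinality$, identify the resulting series with the power series of $\symPoissonRate e^{\symPoissonRate}$, and conclude by uniqueness of power-series coefficients (the paper phrases this as the difference being a power series vanishing for all $\symPoissonRate>0$). Your added remarks on Tonelli and the infinite radius of convergence are a slightly more careful justification of steps the paper leaves implicit, but the argument is the same.
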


\begin{proof}
\begin{align*}
\textstyle 0 &= \symPoissonRate - \symExpectation(\symStatisticX\vert \symPoissonRate)
=
\textstyle
\symPoissonRate - \sum_{\symCardinality=0}^\infty \frac{e^{-\symPoissonRate}\symPoissonRate^\symCardinality}{\symCardinality!} \symExpectation(\symStatisticX\vert \symCardinality)
\\
&=
\textstyle  
e^{-\symPoissonRate}\sum_{\symCardinality=0}^\infty \frac{\symPoissonRate^\symCardinality}{\symCardinality!} \symCardinality - \sum_{\symCardinality=0}^\infty \frac{e^{-\symPoissonRate}\symPoissonRate^\symCardinality}{\symCardinality!} \symExpectation(\symStatisticX\vert \symCardinality)
\\
&=
\textstyle
e^{-\symPoissonRate} \sum_{\symCardinality=0}^\infty \frac{\symPoissonRate^\symCardinality}{\symCardinality!} (\symCardinality - \symExpectation(\symStatisticX\vert \symCardinality)).
\end{align*}
Therefore, $\sum_{\symCardinality=0}^\infty \frac{\symCardinality - \symExpectation(\symStatisticX\vert \symCardinality)}{\symCardinality!}\symPoissonRate^\symCardinality  = 0$. This power series with respect to $\symPoissonRate$ is zero for all $\symPoissonRate>0$ only if all coefficients are zero, which implies $\symExpectation(\symStatisticX\vert \symCardinality) = \symCardinality$ and shows that $\symStatisticX$ is an unbiased estimator for $\symCardinality$. 
\end{proof}

\section{Corrected Cardinality Estimator}
\label{sec:corrected_cardinality_estimator}
The derivation of cardinality estimator \eqref{equ:raw_cardinality_estimator} assumed that register values are distributed according to \eqref{equ:set_sketch_distribution}.
In practice, however, the possible register values are limited to a range $\lbrace 0,1,2,\ldots,\symMaxRegularValue, \symMaxRegularValue+1\rbrace$ with $\symMaxRegularValue\geq 0$. So the assumption is valid only if the register values are concentrated within this range, which is the case for properly configured SetSketches (see \cref{sec:parameter_config}). 
Otherwise, it is necessary to incorporate that register values are actually distributed as 
\begin{equation}
\label{equ:transformation}
\symRegValVariate_\symIndexI\sim\max(0,\min(\symMaxRegularValue+1, \symRegValVariate^{*}_\symIndexI))
\end{equation}
where $\symRegValVariate^{*}_\symIndexI$ is distributed according to \eqref{equ:set_sketch_distribution}, hence
\begin{equation}
\label{equ:distx}
\symProbability(\symRegValVariate^{*}_\symIndexI \leq \symRegVal) = e^{-\symCardinality\symExponentialRate \symBase^{-\symRegVal}}.
\end{equation}
Obviously, knowing $\symRegValVariate^{*}_\symIndexI$ would still allow using the cardinality estimator \eqref{equ:raw_cardinality_estimator}
\begin{equation}
\label{equ:card_est}
\textstyle\symCardinalityCorrectedEstimate
=
\frac{\symNumReg(1-1/\symBase)}{\symExponentialRate\log(\symBase)\sum_{\symIndexI=1}^\symNumReg  \symBase^{-\symRegValVariate^{*}_\symIndexI}}
=
\frac{\symNumReg(1-1/\symBase)}{\symExponentialRate\log(\symBase)\sum_{\symRegVal=-\infty}^{\infty} \symCountVariate^{*}_\symRegVal \symBase^{-\symRegVal}},
\end{equation}
where $\symCountVariate^{*}_\symRegVal := |\lbrace \symIndexI : \symRegValVariate^{*}_\symIndexI = \symRegVal\rbrace|$ is the histogram of values $ \symRegValVariate^{*}_\symIndexI$. Hence, knowing the histogram $\symCountVariate^{*}_\symRegVal$ would be sufficient. Due to \eqref{equ:transformation}, this histogram relates to the observed histogram  $\symCountVariate_\symRegVal := |\lbrace \symIndexI : \symRegValVariate_\symIndexI = \symRegVal\rbrace|$ as
\begin{equation}
\label{equ:histogram_relation}
\symCountVariate_\symRegVal = 
\begin{cases}
\sum_{\symIndexL=-\infty}^0 \symCountVariate^{*}_\symIndexL & \symRegVal = 0,\\
\symCountVariate^{*}_\symRegVal & 1 \leq \symRegVal \leq \symMaxRegularValue,\\
\sum_{\symIndexL=\symMaxRegularValue + 1}^\infty \symCountVariate^{*}_\symIndexL & \symRegVal = \symMaxRegularValue + 1.
\end{cases}
\end{equation}
Therefore, $\symCountVariate^{*}_\symRegVal$ is known for $1\leq \symRegVal\leq \symMaxRegularValue$ and we just need to find a way to estimate $\symCountVariate^{*}_\symRegVal$ for all $\symRegVal\leq 0$ and $\symRegVal\geq \symMaxRegularValue+1$ so that we can use \eqref{equ:card_est}. 

For $\symRegVal\leq 0$, the identity 
\begin{equation*}
\symProbability(\symRegValVariate_\symIndexI^{*} \leq \symRegVal)
 =
 (
 \symProbability(\symRegValVariate_\symIndexI^{*} \leq 0)
 )^{\symBase^{-\symRegVal}},
 \end{equation*}
which directly follows from \eqref{equ:distx}, suggests that the estimates $\symCountVariateEstimate^{*}_\symRegVal$ for $\symCountVariate^{*}_\symRegVal$ should satisfy
\begin{equation*}
\textstyle\sum_{\symIndexL=-\infty}^\symRegVal \symCountVariateEstimate^{*}_\symIndexL/\symNumReg
=
\left(\sum_{\symIndexL=-\infty}^0 \symCountVariateEstimate^{*}_\symIndexL/\symNumReg\right)^{\symBase^{-\symRegVal}}.
\end{equation*}
Furthermore, \eqref{equ:histogram_relation} implies $\sum_{\symIndexL=-\infty}^0 \symCountVariateEstimate^{*}_\symIndexL = \symCountVariate_0$, and therefore we require
\begin{equation*}
\textstyle\sum_{\symIndexL=-\infty}^\symRegVal \symCountVariateEstimate^{*}_\symIndexL/\symNumReg
=
\left(\symCountVariate_{0}/\symNumReg\right)^{\symBase^{-\symRegVal}}.
\end{equation*}
This system of equations with $\symRegVal\leq 0$ has the solution 
\begin{equation*}
\symCountVariateEstimate^{*}_\symRegVal = \symNumReg\left(\left(\symCountVariate_0/\symNumReg\right)^{\symBase^{-\symRegVal}}-\left(\symCountVariate_0/\symNumReg\right)^{\symBase^{1-\symRegVal}}\right).
\end{equation*}

Similarly, for $\symRegVal\geq \symMaxRegularValue + 1$, the identity 
\begin{equation*}
\symProbability(\symRegValVariate_\symIndexI^{*} \geq \symRegVal)
=
1
-
(
1
-
\symProbability(\symRegValVariate_\symIndexI^{*} \geq \symMaxRegularValue + 1)
)^{\symBase^{\symMaxRegularValue+1-\symRegVal}},
\end{equation*}
which also follows from \eqref{equ:distx}, leads to
\begin{equation*}
\textstyle
\sum_{\symIndexL=\symRegVal}^\infty \symCountVariateEstimate^{*}_\symIndexL/\symNumReg
=
1-
\left(
1
-
\sum_{\symIndexL=\symMaxRegularValue+1}^\infty \symCountVariateEstimate^{*}_\symIndexL/\symNumReg
\right)^{\symBase^{\symMaxRegularValue+1-\symRegVal}}
=
1-
(
1
-
\symCountVariate_{\symMaxRegularValue + 1}/\symNumReg
)^{\symBase^{\symMaxRegularValue+1-\symRegVal}}.
\end{equation*}
This system of equations with $\symRegVal\geq \symMaxRegularValue+1$ has the solution
\begin{equation*}
\symCountVariateEstimate^{*}_\symRegVal = 
\symNumReg\left(\left(1 - \symCountVariate_{\symMaxRegularValue+1}/\symNumReg\right)^{\symBase^{\symMaxRegularValue - \symRegVal}}-\left(1 - \symCountVariate_{\symMaxRegularValue+1}/\symNumReg\right)^{\symBase^{1+\symMaxRegularValue-\symRegVal}}\right).
\end{equation*}

Complemented by the trivial estimator for the case $1\leq \symRegVal\leq \symMaxRegularValue$ $\symCountVariateEstimate^{*}_\symRegVal=\symCountVariate^{*}_\symRegVal = \symCountVariate_\symRegVal$, the estimator for the whole histogram can be written as
\begin{equation*}
\symCountVariateEstimate^{*}_\symRegVal =
\begin{cases}
\symNumReg\left(\left(\symCountVariate_0/\symNumReg\right)^{\symBase^{-\symRegVal}}-\left(\symCountVariate_0/\symNumReg\right)^{\symBase^{1-\symRegVal}}\right) & \symRegVal \leq 0,
\\
\symCountVariate_\symRegVal & 1\leq \symRegVal \leq \symMaxRegularValue, 
\\
\symNumReg\left(\left(1 - \symCountVariate_{\symMaxRegularValue+1}/\symNumReg\right)^{\symBase^{\symMaxRegularValue - \symRegVal}}-\left(1 - \symCountVariate_{\symMaxRegularValue+1}/\symNumReg\right)^{\symBase^{1+\symMaxRegularValue-\symRegVal}}\right)
& \symRegVal\geq \symMaxRegularValue+1.
\end{cases}
\end{equation*}
Using these estimates instead of $\symCountVariate^{*}_\symRegVal$ in \eqref{equ:card_est} gives 
\begin{equation*}
\symCardinalityCorrectedEstimate = \textstyle\frac{\symNumReg(1-1/\symBase)}{
\symExponentialRate \log(\symBase)
{
\scriptstyle
\left(
\symNumReg\symSmallCorrectionFunc_\symBase(\symCountVariate_0/\symNumReg)
+
(\sum_{\symRegVal=1}^\symMaxRegularValue \symCountVariate_\symRegVal \symBase^{-\symRegVal})
+
\symNumReg
\symBase^{-\symMaxRegularValue}
\symLargeCorrectionFunc_\symBase(1-\symCountVariate_{\symMaxRegularValue+1}/\symNumReg)
\right)}},
\end{equation*}
where 
\begin{align*}
\symSmallCorrectionFunc_\symBase(\symX)
&:=
\textstyle
\sum_{\symRegVal=-\infty}^0
\symBase^{-\symRegVal}\left(\symX^{\symBase^{-\symRegVal}}
-
\symX^{\symBase^{1-\symRegVal}}\right)
=
\sum_{\symRegVal=1}^\infty
\symBase^{\symRegVal-1}\left(\symX^{\symBase^{\symRegVal-1}}
-
\symX^{\symBase^{\symRegVal}}\right)
\\
&=
\textstyle
\sum_{\symRegVal=1}^\infty
\symBase^{\symRegVal-1}\symX^{\symBase^{\symRegVal-1}}
-
\symBase^{\symRegVal-1}\symX^{\symBase^{\symRegVal}}
=
\symX
+
\sum_{\symRegVal=1}^\infty
\symBase^{\symRegVal}\symX^{\symBase^{\symRegVal}}
-
\symBase^{\symRegVal-1}\symX^{\symBase^{\symRegVal}}
\\
&=
\textstyle
\symX
+
(\symBase-1)
\sum_{\symRegVal=1}^\infty
\symBase^{\symRegVal-1}\symX^{\symBase^{\symRegVal}}
\end{align*}
and
\begin{align*}
\symLargeCorrectionFunc_\symBase(\symX)
&:=
\textstyle
\sum_{\symRegVal=\symMaxRegularValue+1}^\infty \symBase^{\symMaxRegularValue-\symRegVal} \left(\symX^{\symBase^{\symMaxRegularValue-\symRegVal}} - \symX^{\symBase^{1 + \symMaxRegularValue-\symRegVal}}\right)
\\
&=
\textstyle
\sum_{\symRegVal=0}^\infty \symBase^{-\symRegVal-1} \left(\symX^{\symBase^{-1-\symRegVal}} - \symX^{\symBase^{-\symRegVal}}\right)
\\
&=
\textstyle
\sum_{\symRegVal=0}^\infty \symBase^{-\symRegVal-1} \symX^{\symBase^{-1-\symRegVal}} - \symBase^{-\symRegVal-1} \symX^{\symBase^{-\symRegVal}}
\\
&=
\textstyle
-\symX+
\sum_{\symRegVal=0}^\infty \symBase^{-\symRegVal} \symX^{\symBase^{-\symRegVal}} - \symBase^{-\symRegVal-1} \symX^{\symBase^{-\symRegVal}}
\\
&=
\textstyle
-\symX+
(\symBase-1)\sum_{\symRegVal=0}^\infty \symBase^{-\symRegVal-1} \symX^{\symBase^{-\symRegVal}}
\\
&=
\textstyle
1-\symX
+
(\symBase-1)
\sum_{\symRegVal=0}^\infty \symBase^{-\symRegVal-1}(\symX^{\symBase^{-\symRegVal}}-1).
\end{align*}
We used the identity $(\symBase-1)\sum_{\symRegVal=0}^\infty \symBase^{-\symRegVal-1} = 1$ for the last transformation, which improves the convergence, because $\symX^{\symBase^{-\symRegVal}}\rightarrow 1$ as $\symRegVal\rightarrow\infty$ for $\symX>0$.

\section{More Experimental Results}

\begin{figure}[H]
  \centering
  \includegraphics[width=\columnwidth]{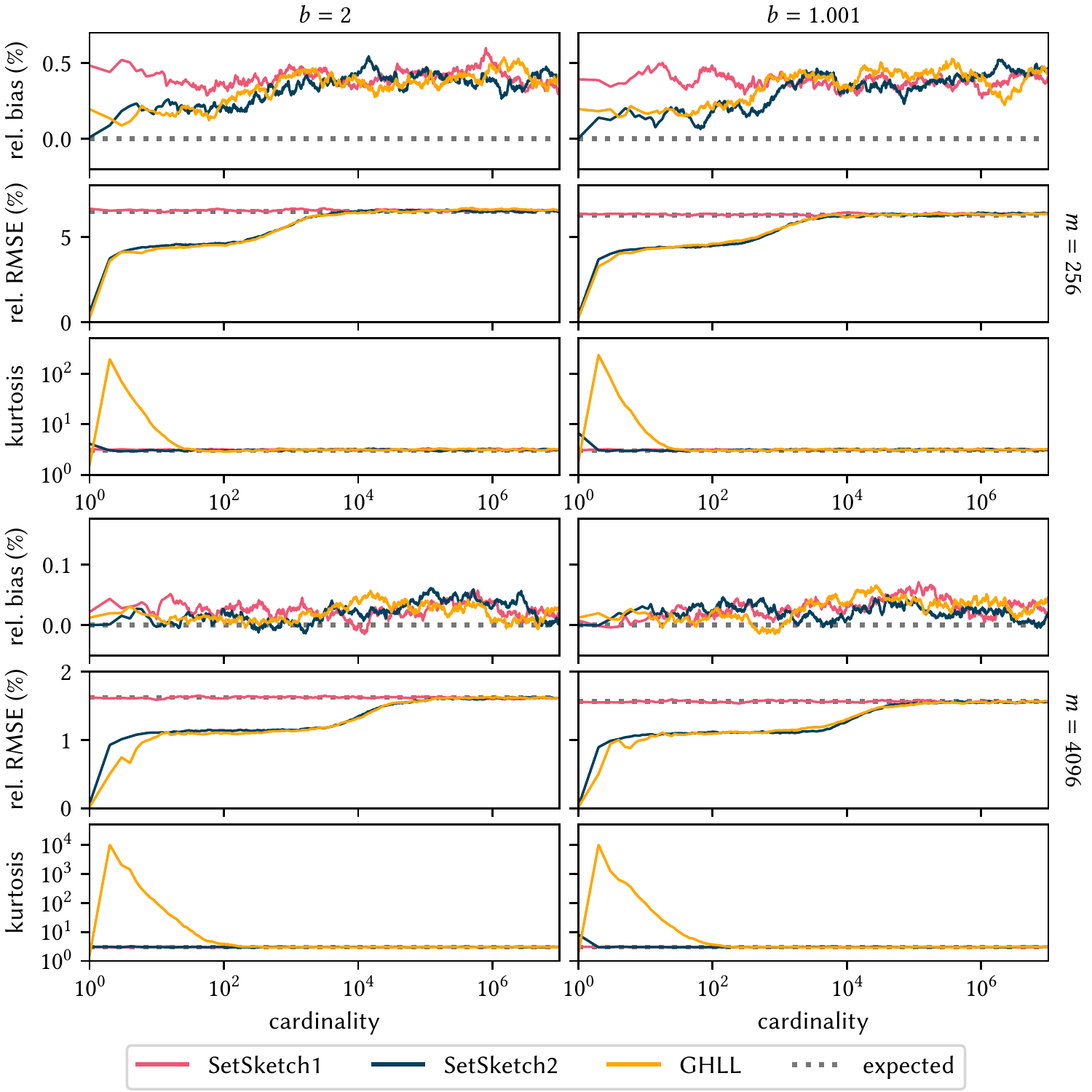}
  \caption{\boldmath The relative bias, the relative \acs*{RMSE}, and the kurtosis for SetSketch1, SetSketch2, and \acs*{GHLL} when using \acs*{ML} estimation based on \eqref{equ:set_sketch_distribution}.}
  \label{fig:cardinality_ml}
\end{figure}

\begin{figure*}[h]
\centering
\includegraphics[width=\textwidth]{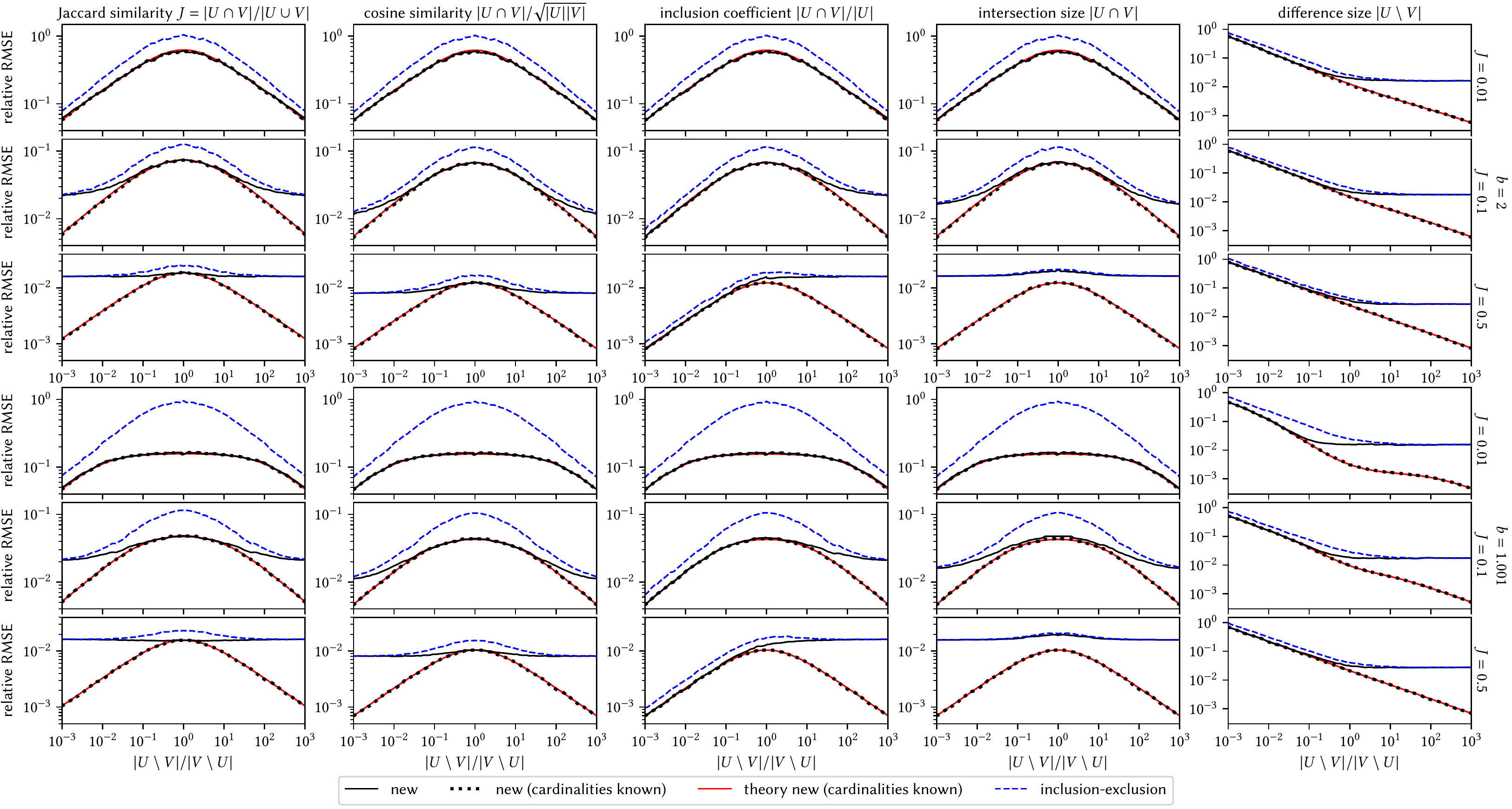}
\caption{\boldmath The relative \acs*{RMSE} of various estimated joint quantities when using SetSketch2 with $\symBase\in\lbrace1.001, 2\rbrace$ and $\symNumReg=4096$ for sets with a fixed union cardinality of $|\symSetA\cup\symSetB|=10^6$.}
\label{fig:joint_set_sketch2_1000000}
\end{figure*}

\begin{figure*}[h]
\centering
\includegraphics[width=\textwidth]{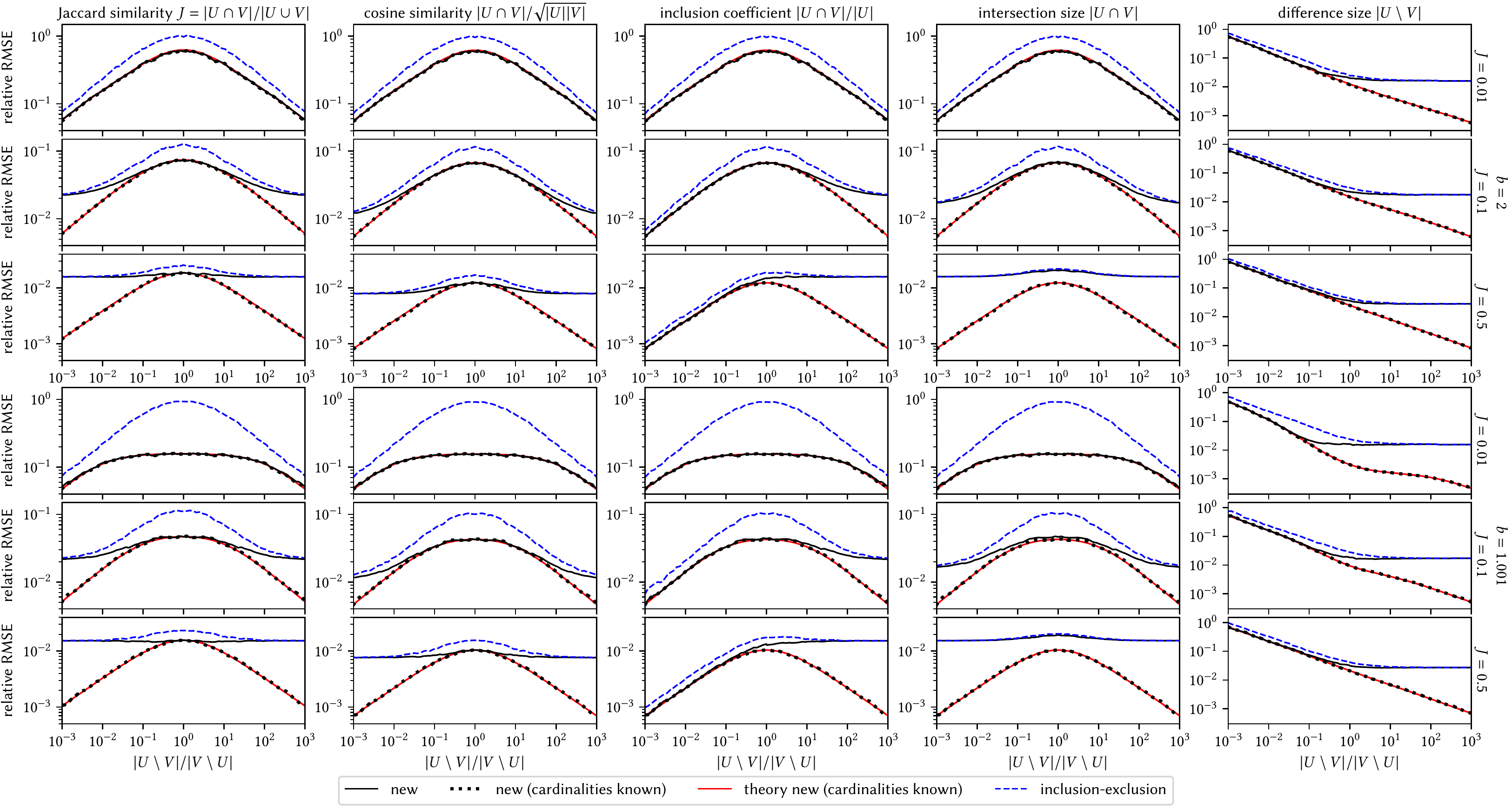}
\caption{\boldmath The relative \acs*{RMSE} of various estimated joint quantities when using \acs*{GHLL} with $\symBase\in\lbrace1.001, 2\rbrace$ and $\symNumReg=4096$ for sets with a fixed union cardinality of $|\symSetA\cup\symSetB|=10^6$.}
\label{fig:joint_ghll_1000000}
\end{figure*}

\begin{figure*}[h]
\centering
\includegraphics[width=\textwidth]{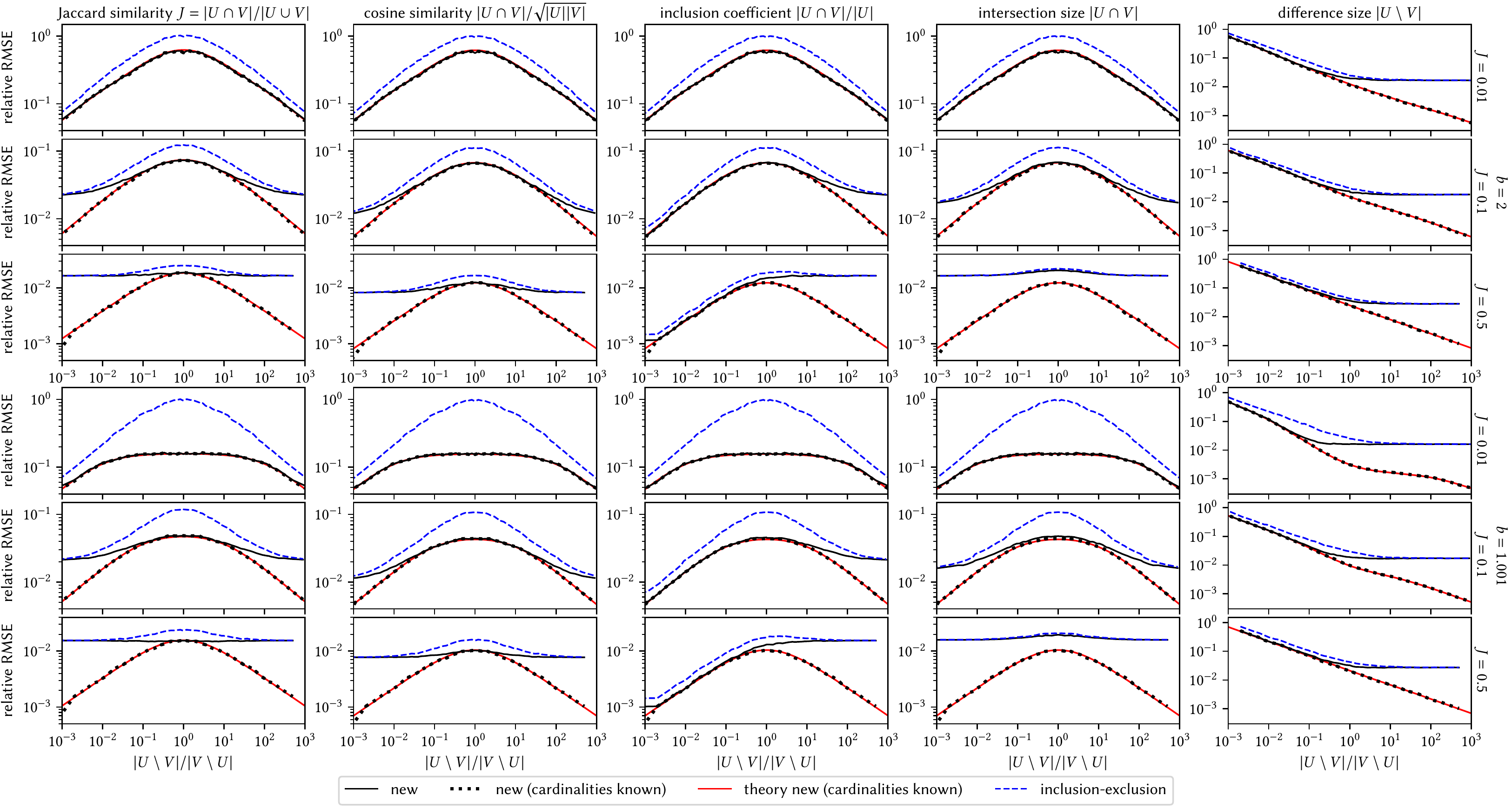}
\caption{\boldmath The relative \acs*{RMSE} of various estimated joint quantities when using SetSketch1 with $\symBase\in\lbrace1.001, 2\rbrace$ and $\symNumReg=4096$ for sets with a fixed union cardinality of $|\symSetA\cup\symSetB|=10^3$.}
\label{fig:joint_set_sketch1_1000}
\end{figure*}

\begin{figure*}[h]
\centering
\includegraphics[width=\textwidth]{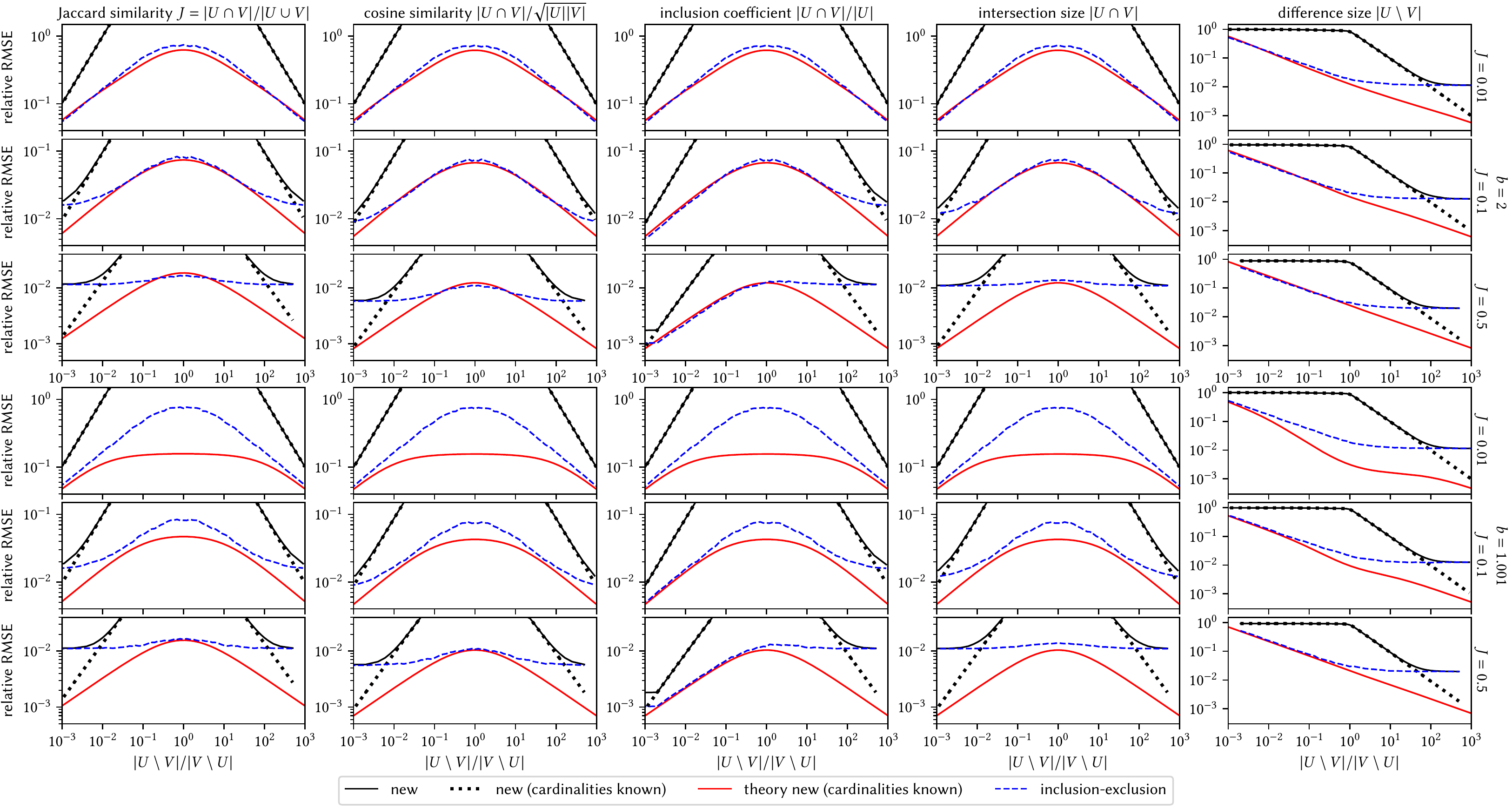}
\caption{\boldmath The relative \acs*{RMSE} of various estimated joint quantities when using \acs*{GHLL} with $\symBase\in\lbrace1.001, 2\rbrace$ and $\symNumReg=4096$ for sets with a fixed union cardinality of $|\symSetA\cup\symSetB|=10^3$.}
\label{fig:joint_ghll_1000}
\end{figure*}

\begin{figure*}[h]
\centering
\includegraphics[width=\textwidth]{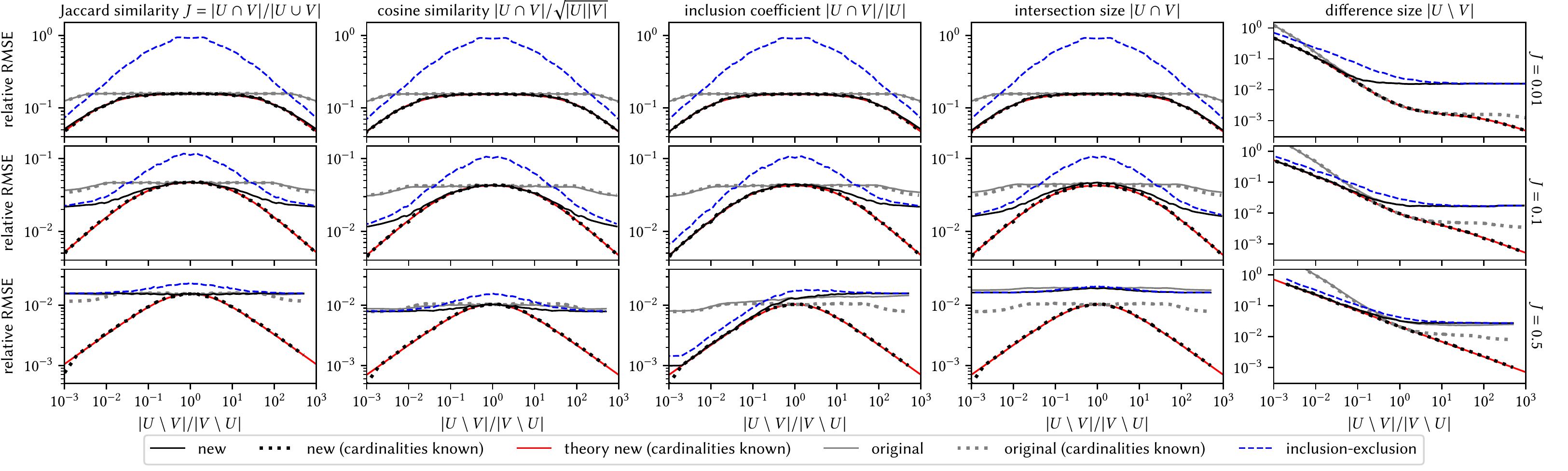}
\caption{\boldmath The relative \acs*{RMSE} of various estimated joint quantities when using \acs*{MH} with $\symNumReg=4096$ for sets with a fixed union cardinality of $|\symSetA\cup\symSetB|=10^3$.}
\label{fig:joint_minhash_1000}
\end{figure*}

\begin{figure*}[h]
\centering
\includegraphics[width=\textwidth]{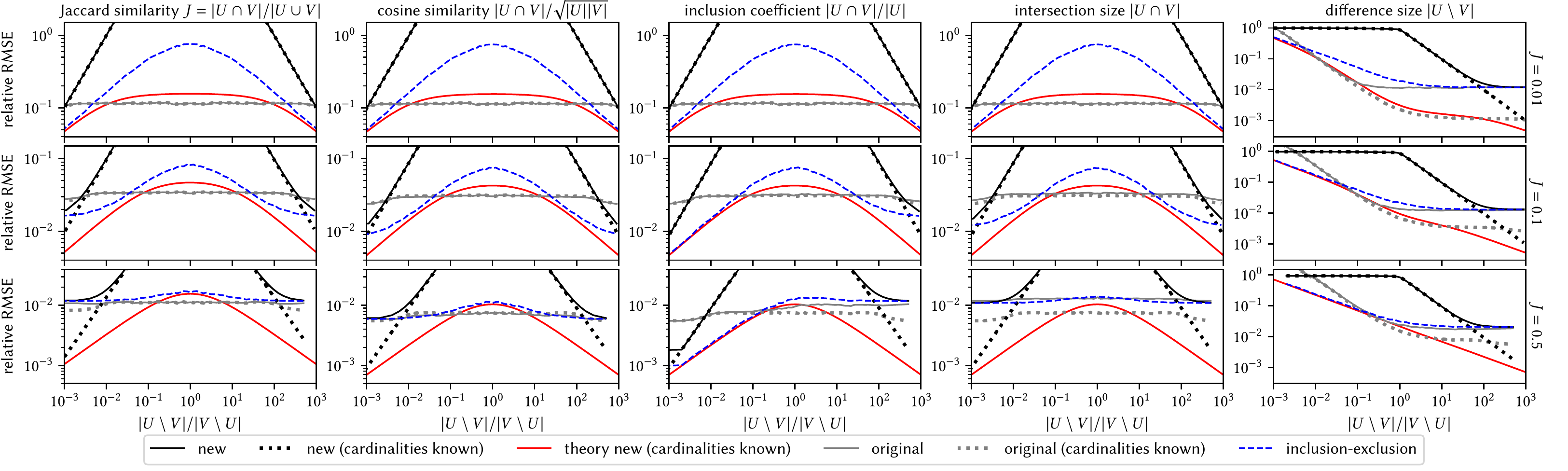}
\caption{\boldmath The relative \acs*{RMSE} of various estimated joint quantities when using HyperMinHash with $\symNumReg=4096$ and $\symHyperMinHashParameter=10$, which corresponds to $\symBase=2^{-2^{10}}\approx 1.000677$, for sets with a fixed union cardinality of $|\symSetA\cup\symSetB|=10^3$.}
\label{fig:joint_hyperminhash_1000}
\end{figure*}

\fi
\end{document}
\endinput